\documentclass[aps,pra,twocolumn,10pt,nofootinbib]{revtex4}
\usepackage[T1]{fontenc}
\usepackage[utf8x]{inputenc}
\usepackage{lmodern}
\usepackage[english]{babel}
\usepackage{indentfirst}
\usepackage{amsmath}
\usepackage{amsfonts}
\usepackage[a4paper]{geometry}
\geometry{width=16cm,height=24.5cm}
\usepackage{xcolor}
\usepackage{framed}
\usepackage[framemethod=tikz]{mdframed}
\usepackage{amsthm}
\usepackage{dsfont}
\usepackage{cancel}
\usepackage{float}
\usepackage{braket}
\usepackage{amssymb}

\usepackage{tikz}

\newtheorem{thm}{Theorem}
\newtheorem{lemma}[thm]{Lemma}
\newtheorem{obs}[thm]{Observation}

\theoremstyle{definition}

\newtheorem{frm-thm}[thm]{Theorem}
\surroundwithmdframed{frm-thm}

\newcommand{\lp}{\bar{\lambda}}

\begin{document}

\title{Local Transformations of Multiple Multipartite States}

\author{A. Neven, D. Gunn, M. Hebenstreit, and B. Kraus}
\affiliation{Institute for Theoretical Physics, University of Innsbruck, A–6020 Innsbruck, Austria}

\begin{abstract}
Understanding multipartite entanglement is vital, as it underpins a wide range of phenomena
across physics. The study of transformations of states via Local Operations assisted by Classical Communication (LOCC) allows one to quantitatively analyse entanglement, as it induces a partial order in the Hilbert space. However, it has been shown that, for systems with fixed local dimensions, this order is generically trivial, which prevents relating multipartite states to each other with respect to any entanglement measure. In order to obtain a non-trivial partial ordering, we study a physically motivated extension of LOCC: multi-state LOCC. Here, one considers simultaneous LOCC transformations acting on a finite number of entangled pure states. We study both multipartite and bipartite multi-state transformations. In the multipartite case, we demonstrate that one can change the stochastic LOCC (SLOCC) class of the individual initial states by only applying Local Unitaries (LUs). We show that, by transferring entanglement from one state to the other, one can perform state conversions not possible in the single copy case; provide examples of multipartite entanglement catalysis; and demonstrate improved probabilistic protocols. In the bipartite case, we identify numerous non-trivial LU transformations and show that the source entanglement is not additive. These results demonstrate that multi-state LOCC has a much richer landscape than single-state LOCC.\end{abstract}

\maketitle

\section{Introduction}
Multipartite entanglement is a central phenomenon across quantum theory, underpinning large swathes of physical phenomena. In condensed matter physics, entanglement characteristics of many-body systems can be utilized to study phase transitions \cite{entangmanybody} and to derive numerical algorithms using tensor network states \cite{TN}. Within quantum information theory, entanglement is considered to be the resource which allows quantum technologies to outperform their classical counterparts. That is, having access to an entangled state enables quantum information-processing tasks that cannot be achieved classically, such as teleportation ~\cite{teleportation}, measurement-based quantum computation~\cite{MBQCBriegel} and entanglement-based quantum communication \cite{quantumcomm}. Despite its importance, we are still far from a complete understanding of entanglement. Any new insight into this intriguing property of quantum systems will provide deeper understanding of its relevant applications and advance the fields related to it.

The predominant feature of entanglement is that it cannot be created locally. For this reason, entanglement is often studied in the physical framework of the “distant labs” model, in which individual labs, which share a multipartite state, are spatially separated and constrained to apply Local quantum Operations, possibly assisted by Classical Communication (LOCC). As entanglement cannot be created or enhanced using LOCC, if a state can be transformed into another via LOCC, it has to be at least as entangled as the final state. As a consequence, LOCC induces a partial order on the Hilbert space and any entanglement measure, i.e. any function quantifying the entanglement resource of states, has to be non-increasing under LOCC ~\cite{Horodeckis}. This order is only partial as there exist pairs of states which are incomparable under LOCC, i.e. neither can reach the other via LOCC. States which can be generated locally (if no super-selection rules or the like are imposed \cite{shuchselectionrules}) can be described as a convex combination of product states and are called separable states. Hence, in the resource theory of entanglement, the free states are separable states and the free operations are precisely LOCC \cite{Horodeckis,resourceLOCC, LOCC, ChitambarGour}. 

The characterization of pure-state entanglement was particularly successful in bipartite systems, for which Nielsen’s celebrated majorization criterion~\cite{nielsen} gives a necessary and sufficient condition for the existence of LOCC transformations between pure states. Moreover, as a direct consequence of Nielsen’s criterion, there exists (up to local unitaries) only one maximally entangled state, from which the whole Hilbert space is accessible via LOCC. Furthermore, although entangled states that are not maximally entangled cannot be deterministically transformed into the maximally entangled state of the Hilbert space, such a transformation is always possible via a Stochastic LOCC (SLOCC) protocol, i.e. an LOCC protocol with non-vanishing probability of success. Therefore, all entangled bipartite states (with the same local ranks) form a single SLOCC equivalence class \cite{SLOCC3qubits}. Finally, in the asymptotic limit, copies of a bipartite state can be deterministically and reversibly converted into maximally entangled states at a rate given by the Von Neumann entropy of the reduced state \cite{BennettDistill, Bennett}. Consequently, both in the single copy regime and in the asymptotic limit, one can study pure-state bipartite entanglement through maximally entangled states. The intermediate regime of a finite number of copies was studied~\cite{Sen} and an optimal protocol for entanglement concentration was provided~\cite{Hardy}.

Although LOCC has, by definition, a very complicated mathematical structure \cite{LOCC,DonaldandHorodecki}, with a possibly unbounded number of rounds of communication between the parties, bipartite LOCC protocols can always be reduced to simple one-round protocols \cite{LOCConeround}. This is not the case for multipartite LOCC, for which it has been shown that certain LOCC protocols require an unbounded number of communication rounds~\cite{LOCCInfinite}. Similarly, even though most known multipartite LOCC transformations are all-deterministic (i.e. do not need any probabilistic intermediate steps)~\cite{Turgutdet,Kintas}, it was shown that some LOCC transformations cannot be achieved without probabilistic steps~\cite{LOCCN}. This results in multipartite LOCC being much more complicated to characterize than bipartite LOCC \cite{LOCC,DonaldandHorodecki}.

Even in three-qubit systems, there are considerable differences to bipartite systems. There exist two distinct SLOCC classes of fully-entangled three-qubit states~\cite{SLOCC3qubits}. This means that there are two different (and incomparable) types of entanglement for three-qubit states, in contrast to the single type of bipartite entanglement. This also implies that there does not exist a single maximally entangled state of three qubits. The maximally entangled state of bipartite systems can be generalized into a set~\cite{MES}, called the Maximally Entangled Set (MES), containing the minimal number of states required to reach the whole Hilbert space via LOCC transformations. Though of zero measure in the Hilbert space of three-qubit states, this set nevertheless contains an infinite number of states~\cite{MES}.

The problem only worsens with larger system sizes and/or higher dimensions. First, there is generically an infinite number of SLOCC classes~\cite{SLOCC4qubits}. Second, for homogeneous systems (i.e. multipartite systems with subsystems of equal dimension) of at least four parties, almost all pure states are isolated under LOCC \cite{MES, GenericTrivStab, SEPvsSEP1, GenericIsolated}. That is, almost all pure states can neither be reached from, nor transformed into, any other pure state via LOCC. As a consequence, the partial order induced by LOCC is generically trivial and the MES is of full measure in the Hilbert space. These results show that, given an arbitrary multipartite state (from a homogeneous system), it is generically impossible to find another state which is less entangled with respect to all entanglement measures. Moreover, the optimal resource, i.e. the MES, has full measure in the considered Hilbert space. 

Nonetheless, the identification of the optimal resource is crucial in recognizing new applications of multipartite entanglement. In this sentiment, various approaches have been pursued to identify an optimal resource, and both mathematically \cite{ResourcewithMaxEntState, ChitabarVincentebeyondLOCC, BrandaoNonEntangling, BrandaoNonEntResource} and physically \cite{GuoChitambarDuan, elocc1, elocc2, elocc3} motivated extensions of LOCC have been considered. Here, we focus on a physical extension, which is to characterize LOCC in the multi-state (non-asymptotic) setting. This setting is indeed very practical as, assuming the parties have access to a quantum memory, it amounts to several labs trying to combine the resources of several shared states by acting simultaneously (though still locally) on them. Even though one would, in such a practical setting, inevitably have to deal with mixed states, we focus here on pure state transformations for two reasons. First, from a theoretical point of view, understanding pure state transformations is a necessary step towards the study of mixed state transformations. Second, for experimental states that are ``close'' enough to pure states, the pure state transformations hold up to a certain fidelity of the final states. Consider an impossible transformation $\ket{\psi}\nrightarrow_{LOCC} \ket{\bar{\psi}}$, with $\ket{\psi},\ket{\bar{\psi}}$ elements of the same Hilbert space, ${\cal H}$. Appending an auxiliary state $\ket{\phi}$ (which does not necessarily belong to ${\cal H}$), as an additional resource, may enable the transformation $\ket{\psi} \otimes \ket{\phi} \rightarrow_{LOCC} \ket{\bar{\psi}} \otimes \ket{\bar{\phi}}$ for some state $\ket{\bar{\phi}}$ (see Fig.~\ref{multistateLOCC}). Multi-state LOCC transformations have been studied in various contexts, such as in catalytic transformations, where the auxiliary state must be preserved through the transformation \cite{Catalysis}. Diverting from deterministic transformations, also SLOCC catalysis has been investigated in \cite{Winter}.

\begin{figure}[h]
\includegraphics[width=1.0\columnwidth]{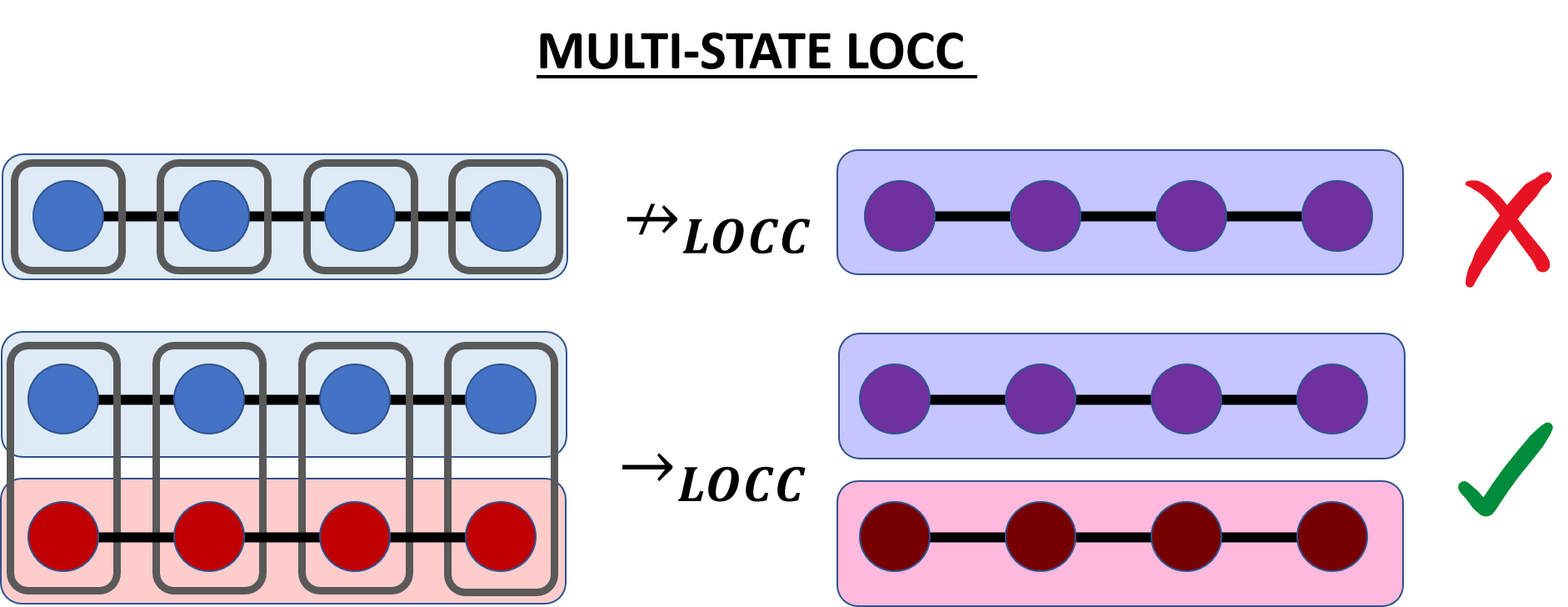}
\centering
\caption{Illustration of an LOCC transformation between the states $|\psi\rangle $ and $\ket{\phi}$ that is not possible in the single-state regime, but that becomes possible in the multi-state regime, by adding the auxiliary state, $\ket{\bar{\psi}}$. Locality is now defined to include parts of both states. Note that transformations of the target state under LOCC are clearly a subset of transformations of the target state under multi-state LOCC.}
\label{multistateLOCC}
\end{figure}

There are several remarks in order. First, if new transformations from $k$ copies of a state to $k$ copies of another state can indeed be achieved, one could sort the entanglement contained in the states according to the ordering achieved in this specific multi-state setting, which could now be non-trivial. Second, note that this new order depends on the dimension of the Hilbert space to which the auxiliary state belongs. Note further that, even though transformations in the higher dimensional Hilbert space will generically still not be possible (for homogeneous systems), this does not imply that a multi-state transformation is generically not possible. The reason for this is that the multi-states are of measure zero in the whole Hilbert space. Finally, multi-state transformations could allow one to reach a state inside the MES from states outside the MES, which could imply that the multi-state equivalent of the MES is strictly smaller than the MES. Characterizing such a multi-state MES could be a step towards its reversible asymptotic version, called the Minimal Reversible Entanglement Generating Set (MREGS), introduced and studied in Ref.~\cite{mregs}.

In this paper, we investigate LOCC multi-state transformations, both in the multipartite and bipartite settings. In both cases, we focus on two-state LOCC and investigate new transformations that arise in this setting. We show that, already in that case, the multi-state regime provides a much richer landscape of LOCC transformations than the single-state regime.

For multipartite states, we illustrate how much more powerful LOCC is in the multi-state regime by describing important new types of transformations this regime enables. For instance, even if the overall tensor products of the initial and final states are in the same SLOCC class, we show that a multi-state transformation can deterministically change the SLOCC class of the individual states with only Local Unitaries (LUs). In light of this possibility, it appears that one has to consider, as potential final states, states belonging to a different SLOCC class (which generically means an infinite number of possibilities). As a consequence, the tools used to study LOCC transformations in the single-state regime cannot be used in the multi-state case. This suggests that characterizing all possible multi-state transformations is a formidable challenge. We nevertheless demonstrate important new features of multi-state transformations. For instance, we show that a state from the MES can be reached from two copies of a state outside the MES and that multipartite catalytic transformations can be achieved. In the event a multi-state transformation cannot be performed deterministically, we demonstrate that the maximum success probability of a joint multi-state transformation can be greater than the probability of transforming both states independently. Furthermore, we show it can be greater than even the maximum probability of either single-state transformation. Therefore, the multi-state regime also provides an advantage in probabilistic settings.

Regarding bipartite states, they have the big advantage that their entanglement can be studied through Schmidt coefficients, which naturally extends to the multi-state regime. Since Nielsen's majorization criterion also extends to the characterization of multi-state LOCC transformations of bipartite states, one could think that such transformations are simple to characterize. However, the difficulty stems from sorting the products of Schmidt coefficients that one gets in the multi-state regime (which is necessary for verifying the majorization condition). Such transformations have only been characterized assuming extra constraints, such as considering catalytic transformations \cite{Catalysis, SandersGourCatalysis, TurgutCatalysis}. In order to start systematically investigating bipartite entanglement in the multi-state regime, we focus here on LU transformations. Multi-state LU transformations have also been studied in the context of entanglement embezzlement \cite{embezzle1}. In this paper, we give a full characterization of all possible transformations of a 2-qubit state (using an auxiliary state of arbitrary dimension) under LUs acting on the two states. This result then allows us to show that such LUs provide non-trivial transformations in almost all pairs of bipartite systems. Using then some of these non-trivial transformations, we demonstrate that the source entanglement \cite{sourceentangle} is a non-additive entanglement measure.

The remainder of the paper is structured as follows. In Section \ref{sec:level2}, we set up mathematical notations for the rest of the paper and review known results regarding multipartite LOCC transformations. In Section \ref{sec:set_the_stage}, we set the stage for our investigations of multi-state transformations. Section \ref{sec:level3} is dedicated to multi-state multipartite LOCC transformations. We then consider bipartite multi-state transformations in Section \ref{sec:level5}. We finally draw conclusions in Section~\ref{sec:level8}.

\section{Preliminaries}
\label{sec:level2}
In this section, we set out our notations and introduce mathematical tools that we will use throughout this paper. We also recall some important previous results about the characterization of LOCC state transformations. We consider multipartite  states from the Hilbert space, $\mathcal{H}\cong \left( \mathbb{C}^{d} \right)^{\otimes n}$, i.e. $n$-partite states with local dimension $d$. Two states $\ket{\psi} $ and $ \ket{\phi}$ are in the same SLOCC (resp. LU) class if they can be inter-converted via an SLOCC (LU) protocol. Stated mathematically, this is the case if and only if there exists a set of invertible operators $\{ g^{(i)} \in \text{GL}(d,\mathbb{C})\}_{i=1}^n$ (resp. unitary operators $\{ g^{(i)} \in \text{U}(d,\mathbb{C})\}_{i=1}^n$ ), such that $\otimes_{i=1}^n g^{(i)} \ket{\psi} = \ket{\phi}$ \cite{SLOCC3qubits}. Throughout this paper, we will use superscripts to denote the subsystem local operators act on.

Not all SLOCC equivalence classes possess the same properties; they are classified in three different orbit types \cite{KeNe76,VeDe03,Kl02,SLOCCclassclassification}. An SLOCC class is called \textit{polystable} if it contains a critical state, i.e. a state for which all the single-party reduced density operators are maximally mixed. Critical states, such as the 3-qubit GHZ state, $ \ket{\mathrm{GHZ}} \equiv\frac{1}{\sqrt{2}} (\ket{000}+\ket{111})$, can be regarded as highly entangled in the sense that they maximize many entanglement monotones \cite{SLOCCclassclassification}. In contrast, an SLOCC class from the \textit{null-cone} contains entangled states, such as the 3-qubit W state, $\ket{\mathrm{W}} \equiv \frac{1}{\sqrt{3}} (\ket{001}+\ket{010}+\ket{100})$, for which the aforementioned monotones vanish. The last type of SLOCC class corresponds to classes which contain the so called \textit{strictly semistable} states (see Appendix~\ref{Appendix:SLOCCtype} for more details).
Because LU operations are a trivial kind of LOCC, that can be applied to any state, we consider LOCC transformations that consist only of LUs as trivial transformations and ignore them in the following. Stated differently, we study LOCC transformations between LU equivalence classes of states. We restrict ourselves to studying transformations between fully-entangled states. That is, states for which all the single-party reduced density matrices $\rho_i$ have full rank, i.e. for which $\text{rank} \left(\rho_i \right)=d, \; \forall i\in\{1,\dots,n\}$.

As will become apparent below, the local symmetries of states are central in the study of possible LOCC transformations. For each SLOCC class, we choose a representative state $\ket{\psi}$, called a seed state, and relate all the other states of the SLOCC class to $\ket{\psi}$ via local invertible operators. Given a seed state $\ket{\psi}$, we define the stabilizer of $\ket{\psi}$, $\mathcal{S}_\psi$, as the set of local invertible matrices that leave $\ket{\psi}$ invariant, i.e.
\begin{align}
\mathcal{S}_\psi=\big\{S=\bigotimes_{i=1}^n S^{(i)} \in \text{GL}(d,\mathbb{C})^{\otimes n} :  S \ket{\psi}=\ket{\psi}\big\}.
\end{align}

Note that this set is not necessarily finite and that, from the stabilizer of a state, it is easy to determine the stabilizer of all SLOCC-equivalent states. We also define the set $\mathcal{N}_\psi$ as the set of local singular matrices which annihilate $\ket{\psi}$, i.e.
\begin{align}
    \mathcal{N}_\psi=\big\{N=\bigotimes_{i=1}^n N^{(i)} \in \text{Mat}(d,\mathbb{C})^{\otimes n} : N \ket{\psi}=0 \big\}.
\end{align}

As explained in the introduction, multipartite LOCC has a complex mathematical structure. However, LOCC is a (strict \cite{SEPbiggerthanLOCC, SEPbiggerthanLOCC2,HeSp16}) subset of the mathematically considerably more tractable class of Separable maps (SEP). A linear, completely positive, trace preserving map from the set of bounded linear operators acting on $\mathcal{H}$ to itself, $\Lambda : \mathcal{B}(\mathcal{H})\rightarrow\mathcal{B}(\mathcal{H})$, is in SEP if it admits a Kraus decomposition in which all Kraus operators are separable \cite{GourWallach}. That is, $\Lambda$ is in SEP if $\Lambda(X)=\sum_{i=1}^m M_i X M_i^\dagger$ for some $\{M_i\}_{i=1}^m\subset\mathcal{B}(\mathcal{H})$ such that $\sum_{i=1}^m M_i^\dagger M_i = \mathds{1}$ and for all $i$, $M_i=\otimes_{j=1}^n M_i^{(j)}$ for some $\{M_i^{(j)}\}_{j=1}^n$. It is important to note that, in contrast to LOCC transformations, SEP maps do not have a physical interpretation, as it has been realized that not all SEP maps may be implemented through an LOCC protocol. Nonetheless, as an LOCC transformation is also a SEP transformation, we can use SEP transformations as a superset of the physical LOCC transformations.

In Refs.~\cite{GourWallach, SEPvsSEP1}, it was shown that, when restricted to transformations between fully-entangled pure states, a state $g \ket{\psi}=\otimes_{i=1}^n g_i \ket{\psi}$ can be mapped to another state in the same SLOCC class, $h \ket{\psi}=\otimes_{i=1}^n h_i \ket{\psi}$, via SEP if and only if there exists a set of probabilities $\{p_i\}_{i=1}^m$ such that

\begin{equation}
    \frac{1}{r}\sum_{i=1}^m p_i S_i^\dagger H S_i + g^\dagger\sum_{j} N^\dagger_j N_j g = G,
    \label{SEP}
\end{equation}
where $S_i\in \mathcal{S}_\psi$, $N_j \in \mathcal{N}_{g \psi}$, $H=h^\dagger h$, $G=g^\dagger g $ and $r=||h\ket{\psi}||^2/||g\ket{\psi}||^2$. We will use this notation of $G$ and $H$ throughout this paper, with $g$ and $G$ always referring to the initial state, and $h$ and $H$ always referring to the final state of the transformation. The invertible Kraus operators of this SEP map are given by $M_i=(\sqrt{p_i/r}) h S_i g^{-1}$, and thus each Kraus operator can be identified with a unique element of the stabilizer.

If a SEP transformation is possible without the use of operators which annihilate the initial state, then the transformation is said to be possible via SEP$_1$. It must then satisfy
\begin{equation}
    \frac{1}{r}\sum_{i=1}^m p_i S_i^\dagger H S_i = G.
    \label{SEP1}
\end{equation}
It was shown in Ref. \cite{SEPvsSEP1} that, considering pure state transformations, LOCC$_\mathbb{N}$ protocols (those which terminate after a finite number of rounds) are a strict subset of SEP$_1$. However, it remains an open question whether LOCC is a subset of SEP$_1$.

As we already pointed out, Eq.~(\ref{SEP}) highlights the importance of local symmetries for SEP (and therefore LOCC) transformations. In particular, in \cite{GenericTrivStab}, it was shown that if the stabilizer of a state $\ket{\psi}$ is trivial (i.e. $\mathcal{S}_\psi= \{ \mathds{1}\}$), then no LOCC transformation from this state to any other pure state is possible. Moreover, all states in the SLOCC class of $\ket{\psi}$ are then isolated under LOCC, in the sense that they can neither be reached from, nor be transformed into, any other state.

This has considerable implications for entanglement theory, as it was shown in \cite{GenericTrivStab, GenericIsolated} that in homogeneous systems of at least four parties (five parties for qubit systems), states are generically in an SLOCC class with trivial stabilizer. Consequently, in these systems, almost all states are isolated under LOCC and the MES is therefore of full measure.

In light of these results, LOCC transformations between multipartite states appear to be rather exceptional. For this reason, one might want to relax some of the constraints and consider, for example, probabilistic transformations. This has been done for both SEP and LOCC transformations in the literature \cite{GourWallach}. We highlight here some important results about the maximum success probability of such transformations. Considering a probabilistic SEP transformation from a state $g \ket{\psi}$ to another state $h \ket{\psi}$, and using the same notations as before, the maximum probability of success for this transformation, $p_{\textrm{max}}^{\textrm{SEP}}$, is given by\footnote{Note that although they do not appear in the formula for computing $p_{\textrm{max}}^{\textrm{SEP}}$, this result also holds when taking into account operators from $\mathcal{N}_{g \psi}$, i.e. operators that annihilate the initial state. This is because, for probabilistic SEP transformations, these singular operators can be included in a branch of the transformation that fails.}~\cite{GourWallach}
\begin{align}
    p_{\textrm{max}}^{\textrm{SEP}}=\max \left\{\sum_i p_i : r G-\sum_i p_i S^\dagger_i H S_i \in \text{sep} \right\},
    \label{SEP eq}
\end{align}
where $\mathrm{sep}$ denotes the set of separable operators.

Adding some additional assumptions can make this success probability easier to compute. For instance, if the stabilizer of $g\ket{\psi}$ is finite and unitary, then the previous equation reduces to
\begin{equation}
    p_{\textrm{max}}^{\textrm{SEP}}=\text{max}\left\{p : r G- \frac{p}{|\mathcal{S}_\psi|}\sum_i  S^\dagger_i H S_i \in \text{sep} \right\},
    \label{SEPeqspec}
\end{equation}
where $|\mathcal{S}_\psi|$ is the number of elements in the stabilizer of $\ket{\psi}$.

Alternatively, if the stabilizer of a normalized seed state is unitary, then the maximum success probability of reaching the seed state (i.e. $H=\mathds{1}$) from any state $g \ket{\psi}$ is given by \cite{GourWallach}:
\begin{equation}
    p_{\textrm{max}}^{\textrm{SEP}} \left( g\ket{\psi}\mapsto \ket{\psi}\right)= \lambda_{\textrm{min}} \left[ \frac{G}{||g\ket{\psi}||^2}\right],
   \label{gtocrit}
\end{equation}
where $\lambda_{\textrm{min}}\left[ M\right]$ corresponds to the minimum eigenvalue of the operator $M$.

It was further shown in Ref.~\cite{GenericTrivStab} that, if the stabilizer of $\ket{\psi}$ is trivial (as is particularly relevant, as states in homogeneous systems generically have trivial stabilizer), then this transformation can be implemented by an LOCC one-successful-branch protocol (OSBP) with probability equal to $p_{\textrm{max}}^{\textrm{SEP}}$. Thus, for these transformations, we have $p_{\textrm{max}}^{\textrm{SEP}}=p_{\max}^{\textrm{LOCC}}$. Note further that, thus, Eq.~(\theequation) fully characterizes the maximal success probability of any transformation within an SLOCC class with trivial stabilizer. 

This concludes our review of the tools that will be used throughout this paper. In the next section, we give further details about the multi-state extension of LOCC that we investigate in this paper.

\section{Setting the stage and General observations}
\label{sec:set_the_stage}

As mentioned in the introduction, the main goal of this paper is to investigate LOCC transformations on multiple states as an extension of LOCC. We refer to this as the multi-state regime. In this regime, given a target state $\ket{\psi}$, from some Hilbert space $\mathcal{H}$, we want to investigate which state $\ket{\phi}$ the state $\ket{\psi}$ can be transformed into if we append to it, as an additional resource, an auxiliary state $\ket{\bar{\psi}}$ and perform joint LOCC on the two states (see Fig.~\ref{multistateLOCC}).

In a multi-state transformation, each party is still constrained to act locally, i.e. only on the particles they control, and can classically communicate their measurement outcomes to the other parties. Such a regime is physically motivated: if the $n$ parties have access to a quantum memory, they may store resourceful auxiliary states and then use them to transform the target state. We impose only that the state-splitting is preserved after the transformation, i.e. that the transformation takes the form
\begin{equation}
\ket{\psi} \otimes \ket{\bar{\psi}} \rightarrow \ket{\phi} \otimes \ket{\bar{\phi}}, \label{eq:multi-stateSTT}
\end{equation}
for some state $\ket{\bar{\phi}}$. If such a transformation is possible, we say that $\ket{\psi}$ is LOCC transformable to $\ket{\phi}$ with the help of an auxiliary system. As our intent is to better understand the multipartite entanglement contained in $\ket{\psi}$ and/or to relate it to the one contained in $\ket{\phi}$, we consider both states to belong to the same Hilbert space\footnote{This is also why we focus on the setting where the final target and auxiliary state factorize.}. Note that specific types of multi-state transformations may be used to induce a new partial order. Specifically, if, for some $\ket{\psi}$ that cannot be transformed to $\ket{\phi}$, $k$ copies of $\ket{\psi}$ can be transformed to $k$ copies of $\ket{\phi}$, then considering LOCC under $k$ copies will induce a different partial order in the Hilbert space than single-state LOCC. Alternatively, if we constrain the auxiliary state to remain invariant, then the transformation corresponds to entanglement catalysis (which has been extensively studied for bipartite states~\cite{Klimesh,TurgutCatalysis}). Let us emphasize at this point that, contrary to other approaches (see for instance Refs.~\cite{Winter, BrandaoNonEntResource}), we are working in the deterministic, non-asymptotic regime.

To achieve a multi-state transformation, we could in principle use auxiliary states $\ket{\bar{\psi}}$ and $\ket{\bar{\phi}}$ from an arbitrary Hilbert space, $\mathcal{\bar{H}}$, not necessarily identical to $\mathcal{H}$. As mentioned before, the new transformations that become possible in this regime (compared to the single-state regime) would then naturally depend on the dimension of this Hilbert space, $\mathcal{\bar{H}}$. If $\dim(\mathcal{\bar{H}}) > \dim(\mathcal{H})$, we have to take into account LOCC protocols transforming an auxiliary state $\ket{\bar{\psi}}$ from the higher dimensional Hilbert space $\bar{\mathcal{H}}$ to a state $\ket{\phi}$ of the smaller dimensional Hilbert space $\mathcal{H}$, as such transformations would straightforwardly lead to a multi-state transformation from $\ket{\psi}$ to $\ket{\phi}$. However, considering such transformations constitutes a different problem, that was put forward in Ref.~\cite{GuoChitambarDuan}. As in the single-copy LOCC regime, our aim is to sort the entanglement contained in states belonging to the same Hilbert space, we do not want to focus on such transformations here. However, we will make use of interesting results from that setting later. For this reason, we choose to consider auxiliary states belonging to the same Hilbert space as the target state\footnote{We could also consider an auxiliary state from a Hilbert space $\mathcal{\bar{H}}$ with $\dim(\mathcal{\bar{H}}) < \dim(\mathcal{H})$, but this would exclude considering target states of qubits, which we want to avoid here.}, or to a Hilbert space $\mathcal{\bar{H}}$ that corresponds to finitely many copies of the target state. This is physically motivated by the fact that, if one can store one auxiliary state to perform a multi-state transformation, one may also be able to store finitely many auxiliary states.

In this setting, if a state, $|\psi\rangle$, can be transformed into a state, $\ket{\phi}$, using $k-1$ auxiliary states from the same Hilbert space, we say that $\ket{\psi}$ can be transformed into $\ket{\phi}$ via $k$-LOCC. In any $k$-LOCC protocol, we can always consider a transformation that swaps the target state with one of the auxiliary states. Such a transformation does not combine the resources of the target and auxiliary states to achieve a new transformation but merely replaces the target state by a possibly more resourceful auxiliary state. For this reason, we consider such transformations as trivial and ignore them in the following.  Let us also note here that, for sufficiently large $k$, $k$--LOCC transformations become trivial, as the auxiliary states can be used to distill Bell pairs between the parties, which can then be utilized to generate an arbitrary final target state via teleportation. However, such transformations are completely independent of the initial target state (implying that we cannot gain any additional knowledge about the entanglement contained in the state), utilize only bipartite entanglement and consume many resources. Thus, we also disregard them in the following work.

We expect that multi-state LOCC provides new non-trivial transformations. That is, for some impossible transformations, $\ket{\psi} \nrightarrow_{LOCC} \ket{\phi}$, we expect that there exists a state, $\ket{\bar{\psi}}$ (that cannot be transformed into $\ket{\phi}$ by LOCC), which enables the transformation in Eq.~\eqref{eq:multi-stateSTT}. One question which immediately reveals itself is whether these new transformations reduce the set of states required to reach all states in the Hilbert space, i.e. whether it makes the MES smaller. Before discussing other relevant questions in this context, let us address this one first: Is it possible that any reasonable generalisation of the MES to $k$-LOCC (i.e. a set of states that reaches all states and which is in some sense minimal) is different from the original MES.

The bipartite setting already reveals a feature of such a generalized MES; namely, that it may not be unique. Indeed, for bipartite states, it is well-known that the maximally entangled state can be reached via LOCC from two identical copies of a non-maximally entangled state (provided that this state is sufficiently entangled). Therefore, any bipartite state (of the same Hilbert space) can be reached from these two copies, and any set consisting only of one such non-maximally entangled state could be a 2-MES for this system. With this in mind, we define a multi-state MES as follows. A set $\mathcal{M}_k$ (not necessarily finite) containing states from a Hilbert space $\mathcal{H}$, is a $k$-MES if (a) any state of $ \mathcal{H}$ can be reached via $k$-LOCC from $k$ (not necessarily distinct) states chosen in $\mathcal{M}_k$; and (b) it is minimal, in the sense that no strict subset of $\mathcal{M}_k$ satisfies (a).

Naturally, the 1-MES corresponds to the original MES from Ref.~\cite{MES}. In that work, it was noted that the MES can equivalently be defined as the set containing all states that are not reachable via LOCC (from a different initial state) in a given Hilbert space. However, this equivalence of definitions is valid only for single-state transformations, and the alternative definition cannot be used for $k>1$. Indeed, as our previous discussion of bipartite multi-state transformations indicates, the maximally entangled state (and thus any bipartite state) can be reached via a non-trivial multi-state LOCC transformation. This implies that the alternative definition for the $k$-MES of bipartite systems would lead to an empty set for all $k\geq2$.

We also note here that a $k$-MES can always be chosen as a subset of the 1-MES. Indeed, the 1-MES allows one to reach all states via 1-LOCC, and therefore also via $k$-LOCC. We then obtain a $k$-MES by finding a minimal subset of the $1$-MES preserving this property for $k$-LOCC.  However, depending on the situation, one might prefer to chose a $k$-MES consisting of less-entangled states outside the MES, as they may be easier to produce experimentally. Finally, taking the discussion on distilling Bell pairs one step further, note that for sufficiently large $k$, the $k$-MES may always be chosen as a set containing only a single state.

In the general framework of multi-state transformations, several important questions arise. Is it possible to non-trivially change the SLOCC class of the target state? Are there unexpected new transformations, such as transformations allowing one to reach a state from the MES from states that do not belong to the MES? Does $k$-LOCC always provide new transformations of a given target state? Does the multi-state regime also improve probabilistic transformations?

We answer the first question in Section \ref{changingslocc} by showing that, in fact, the SLOCC class of the target state can be non-trivially changed via a multi-state LU transformation. This result implies that multi-state transformations cannot be fully characterized by using only the tools that have been developed to study single-state LOCC transformations. It also reveals that the multi-state regime provides a much richer set of new transformations. In Section~\ref{sec:level4}, we answer the second question by showing that the 3-qubit GHZ state (which is in the MES) can be reached from two states that are not in the 3-qubit MES. In this section, we also show that certain combinations of target and auxiliary states can only achieve trivial transformations, which hints towards a negative answer to the third question. Finally, we provide a positive answer to the fourth question, by showing that the maximum success probability of a multi-state transformation transforming two states simultaneously can be greater than the maximum success probability of transforming these two states independently; in fact, we show it can even be greater than the maximum success probability of either single-state transformation. 

In the following, special emphasis is given to transformations in which the final auxiliary states are fully-entangled. Such transformations exclude the trivial possibility of using teleportation, and have the advantage of not being wasteful with entanglement, in the sense that all states remain fully-entangled after the transformation. We study such transformations in the next section, where we investigate how the additional resource of 2-LOCC affects LOCC transformations of multipartite target states.

\section{Multi-state Multipartite LOCC}
\label{sec:level3}
We study here, for multipartite states, the problem of 2-LOCC transformations, posed in Eq.~\eqref{eq:multi-stateSTT}. Throughout the following sections, we highlight several features of multi-state LOCC showing that, already in the two-state regime, multi-state LOCC offers a much richer landscape of transformations than single-state LOCC. We start, in the next section, by discussing how multi-state LOCC allows one to non-trivially change the SLOCC class of the target state. We naturally exclude the trivial possibility of changing the SLOCC class of a state through a projective measurement, as such a transformation would merely destroy some of the entanglement of the state, and not genuinely change its type of entanglement.

\subsection{Changing SLOCC class}
\label{changingslocc}
We show here that multi-state LOCC allows one to deterministically change the SLOCC class of the target state with only LUs. As stated below, we show in addition that it is possible to change the orbit type of the SLOCC class of the target state.

\begin{obs}
It is possible to change the SLOCC class of the initial states under multi-state LU transformations. Furthermore, the orbit type of the SLOCC class of these states can also be changed.
\label{obs1}
\end{obs}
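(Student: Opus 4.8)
The plan is to exhibit an explicit pair of states that, when taken together, can be transformed by local unitaries into a product of states belonging to different SLOCC classes than the originals. Since the statement is an existence claim (``it is possible''), a single well-chosen example suffices. The natural arena is three qubits, where there are exactly two fully-entangled SLOCC classes --- the polystable GHZ class and the null-cone W class --- so that changing the SLOCC class automatically changes the orbit type as well, settling both parts of the observation simultaneously.

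First I would look for a product $\ket{\psi}\otimes\ket{\bar\psi}$, where both $\ket{\psi}$ and $\ket{\bar\psi}$ lie in the same SLOCC class (say, both in the W class), such that there is a local unitary $U=\bigotimes_{i=1}^3 U^{(i)}\otimes V^{(i)}$ mapping this to a product $\ket{\phi}\otimes\ket{\bar\phi}$ in which $\ket{\phi}$ and $\ket{\bar\phi}$ lie in a different class (e.g.\ the GHZ class). The guiding idea is entanglement transfer: since the two copies are held jointly, a local unitary acting across both copies at each site can redistribute entanglement between them without being expressible as a local operation on either copy alone. Concretely, I would parametrize candidate unitaries acting on the $6$-qubit system (two qubits per party) and search for one that takes, for instance, $\ket{W}\otimes\ket{W}$ to a product whose two factors are each in the GHZ class. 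One verifies the SLOCC class of each output factor by computing a class invariant --- the three-tangle is the cleanest choice, since it vanishes on the entire W/null-cone class and is strictly positive on the GHZ class, so a nonzero three-tangle on a factor certifies membership in the polystable class and hence both a change of SLOCC class and a change of orbit type.

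The verification then splits into three routine checks: (a) that the proposed $U$ is genuinely a product of single-qubit unitaries at each of the three sites (acting jointly on that site's two qubits), so that the transformation is a legitimate multi-state LU; (b) that the image factorizes as $\ket{\phi}\otimes\ket{\bar\phi}$ --- this is the key structural constraint, since a generic local unitary will entangle the two copies rather than preserve the state-splitting of Eq.~(\ref{eq:multi-stateSTT}); and (c) that the three-tangles of $\ket{\phi}$ and $\ket{\bar\phi}$ differ from those of $\ket{\psi}$ and $\ket{\bar\psi}$. For the factorization constraint it is cleanest to build the example backwards: start from a target product $\ket{\phi}\otimes\ket{\bar\phi}$ in the desired classes, apply an explicit local unitary, and check that the result happens to factorize into states of the original class --- this guarantees (b) by construction and reduces the problem to checking (c).

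The main obstacle is condition (b): most local unitaries on the doubled system will \emph{not} preserve factorization, so the constraint that both the input and the output be product states across the copy-cut, while the two factors change SLOCC class, is restrictive. I expect the crux of the argument to be finding a single-site unitary structure compatible with this factorization --- essentially solving a system of bilinear equations relating the Schmidt-type structure across the copy-cut before and after. Once a valid example is pinned down, the remaining invariant computations are mechanical. I would also note that the construction should generalize naturally: once one transfer unitary is found, perturbing or composing it should yield a family of such transformations, underscoring the richness of the multi-state regime claimed in the surrounding discussion.
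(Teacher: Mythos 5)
There is a genuine gap, and it sits exactly at the point you flag as the ``main obstacle.'' Your headline candidate --- mapping $\ket{\mathrm{W}}\otimes\ket{\mathrm{W}}$ by a multi-state LU to a product of two GHZ-class factors --- is impossible, and the obstruction is the very invariant you propose to use. A local unitary on the doubled system preserves the SLOCC class, hence the orbit type, of the \emph{joint} six-qubit state. The tensor product of two null-cone states is in the null-cone (if $\lim_\alpha S_\alpha\ket{\mathrm{W}}=0$ with $S_\alpha$ local and determinant one, then $\lim_\alpha (\mathds{1}\otimes S_\alpha)\ket{\zeta}\ket{\mathrm{W}}=0$ for any $\ket{\zeta}$), while the tensor product of two polystable states is polystable. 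So no LU, however cleverly ``built backwards,'' can take $\ket{\mathrm{W}}^{\otimes 2}$ to $\ket{\phi}\otimes\ket{\bar\phi}$ with both factors GHZ-class; the same argument kills $\ket{\mathrm{GHZ}}^{\otimes 2}\to$ two W-class factors. Searching for a transfer unitary and certifying the output with the three-tangle cannot succeed for this example, and your proposal gives no alternative example that evades the invariant.

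The paper's proof sidesteps both this obstruction and your factorization worry (b) with one structural move you are missing: it takes the target and auxiliary states to be themselves tensor products across local sub-dimensions, namely $\ket{\psi}=\ket{\mathrm{GHZ}}^{\otimes 2}$ and $\ket{\bar\psi}=\ket{\mathrm{W}}^{\otimes 2}$, viewed as three-partite states of local dimension $4$. The LU is then a mere permutation of each party's local basis states (a ``sub-SWAP'') that exchanges one GHZ sub-factor of the target with one W sub-factor of the auxiliary, yielding $\ket{\phi}=\ket{\bar\phi}=\ket{\mathrm{GHZ}}\ket{\mathrm{W}}$. Factorization of the output into target and auxiliary is automatic --- the joint $12$-qubit state is literally unchanged up to relabeling which qubits belong to which state --- and the class change is certified by the orbit type: $\ket{\mathrm{GHZ}}^{\otimes 2}$ is critical (polystable) while $\ket{\mathrm{GHZ}}\ket{\mathrm{W}}$ is in the null-cone, consistent with the joint-state invariant because the auxiliary moves in the opposite direction. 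If you want to salvage your write-up, replace the search over generic two-qubit-per-site unitaries with this sub-SWAP construction; the bilinear system you anticipate solving never needs to be set up.
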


This can be seen by considering a transformation of the form of Eq.~(\ref{eq:multi-stateSTT}), with
\begin{align}
    \ket{\psi}&= \ket{\mathrm{GHZ}}^{\otimes 2},\\
    \ket{\bar{\psi}}&=\ket{\mathrm{W}}^{\otimes 2},\\
    \ket{\phi}=\ket{\bar{\phi}}&=\ket{\mathrm{GHZ}} \ket{\mathrm{W}}.
\end{align}

All these states should be understood as three-partite states which are shared among the parties as shown in Fig. \ref{changingSLOCCclass}. In this transformation, the target state has changed SLOCC class. This can be seen by observing the target state has in fact changed SLOCC orbit type: $\ket{\mathrm{GHZ}}^{\otimes 2}$ is a critical state, whereas  $\ket{\mathrm{GHZ}}\ket{\mathrm{W}}$ belongs to the null-cone (see the preliminaries). The proof of the observation follows then by noticing that such a transformation can be achieved with LU operations that permute the basis states of all parties in such a way that some of the local dimensions are swapped between the target and auxiliary states (see Fig. \ref{changingSLOCCclass}). We refer to such a transformation as a ``sub-SWAP'' transformation. In Appendix A, we present more details on how the orbit type of SLOCC classes may change under multi-state LU transformations. 

\begin{figure}[h]
\includegraphics[width=1.0\columnwidth]{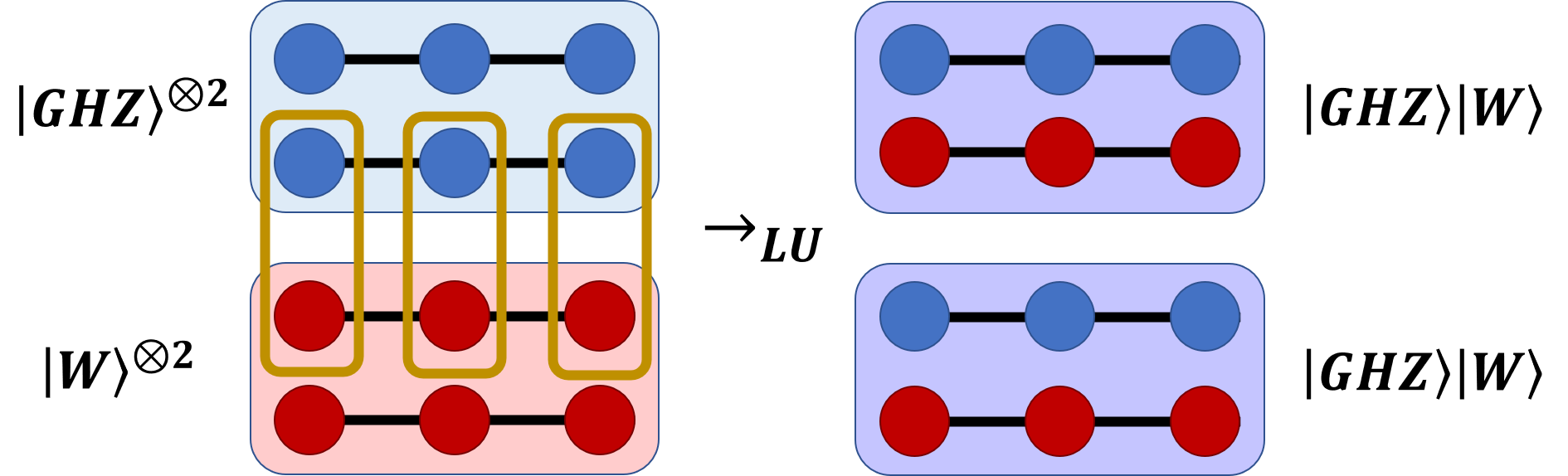}
\centering
\caption{Example of a multi-state transformation changing the SLOCC class of the initial states using LU operations. The initial states both consist of a tensor product of two states. By applying LUs (indicated by beige boxes), one can SWAP one of the GHZ states with one of the W states, yielding two copies of a new state that is in a different SLOCC class to both initial states.}
\label{changingSLOCCclass}
\end{figure}

Clearly, more involved instances transforming states that are genuinely entangled across all local dimensions can be easily generated from this example by adding LUs acting on the target and auxiliary states separately, before and after the transformation given here. 

This first observation confirms that LOCC in the multi-state regime is more complex than in the single-state regime, and one cannot simply transfer over the methods from the single-state case. In the single-state case, transformations within an SLOCC class can be parameterized by using a single seed state and local invertible operators. As this result shows, in the multi-state case we generically have to consider an infinite number of possible seed states representing the possible SLOCC classes of the final states. Thus, there is no natural way to implement the condition that the final target and final auxiliary states factorize. For these reasons, a complete characterization of multi-state LOCC transformations seems very challenging.

Despite all that, it would be interesting to know whether this setting also enables new transformations of the target state within its SLOCC class. In particular, it is important to investigate whether states from the MES can be reached in the multi-state regime. We answer this question in the next section.

\subsection{Transformations within the same SLOCC class}
\label{sec:level4}

In this section, we consider multi-state transformations in which the individual states all belong to the same SLOCC class. In this case, it is handy to relate all the states to the same seed state $\ket{\psi}$, through local invertible operators. In this framework, we look for transformations of the form
\begin{align}
g_1 \ket{\psi} &\nrightarrow_{LOCC} h_1 \ket{\psi} \nonumber \\
g_1 \ket{\psi} \otimes g_2 \ket{\psi} &\rightarrow_{LOCC} h_1 \ket{\psi} \otimes h_2 \ket{\psi},
\end{align}
where $g_{1,2}$ and $h_{1,2}$ are local invertible operators.

As mentioned in the preliminaries section, local symmetries play an essential role in LOCC transformations of multipartite states. States without non-trivial local symmetries are isolated under SEP and LOCC, which is a generic property among multipartite states \cite{GenericTrivStab, GenericIsolated}. However, in the multi-state case, even if the target and auxiliary states have a trivial stabilizer, their tensor product could have non-trivial symmetries, which could lead to a non-trivial multi-state transformation. For the case we consider in this section, in which the target and auxiliary states are in the same SLOCC class, the seed state $\ket{\psi} \otimes \ket{\psi}$ has always at least one additional local symmetry:  SWAP$^{\otimes n}$ (which corresponds to permuting the target and auxiliary seed states, and which we refer to in the following simply as SWAP). In the following theorem, we show that this additional symmetry alone is not enough to provide new LOCC$_\mathbb{N}$ transformations.

\begin{thm}
\label{trivtran}
Given a fully-entangled state $\ket{\psi}$ such that
\begin{equation}
    \mathcal{S}_{\psi^{\otimes2}} = \{\mathds{1}, \mathrm{SWAP}\},
    \label{trivialstablizer}
\end{equation}
all transformations of the form
\begin{align}
g_1 \ket{\psi} \otimes g_2 \ket{\psi} &\rightarrow_{LOCC_{\mathbb{N}}} h_1 \ket{\psi} \otimes h_2 \ket{\psi}
\end{align}
are necessarily trivial. Moreover, if the final states are identical, i.e. $h_1=h_2$, then the statement also holds for LOCC.
\end{thm}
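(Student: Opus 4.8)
The plan is to invoke the necessary conditions for SEP transformations, exploiting that $\mathrm{LOCC}_{\mathbb{N}}\subseteq\mathrm{SEP}_1$ and $\mathrm{LOCC}\subseteq\mathrm{SEP}$, applied with the doubled seed state $\ket{\psi}^{\otimes 2}$ whose stabilizer is only $\{\mathds{1},\mathrm{SWAP}\}$. Writing the initial and final operators in the product form $G=g_1^\dagger g_1\otimes g_2^\dagger g_2\equiv G_1\otimes G_2$ and $H=h_1^\dagger h_1\otimes h_2^\dagger h_2\equiv H_1\otimes H_2$ (where the two tensor factors refer to the two copies, and each factor is itself a product over the $n$ parties), and using $\mathrm{SWAP}\,(A\otimes B)\,\mathrm{SWAP}=B\otimes A$, the stabilizer sums in Eqs.~(\ref{SEP1}) and~(\ref{SEP}) collapse to at most two terms. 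The notion of \emph{trivial} to be reached is that the single-copy states are LU-equivalent in the identity-type pairing ($h_j\ket{\psi}\sim g_j\ket{\psi}$) or in the $\mathrm{SWAP}$-type pairing; note that $H_j\propto G_j$ already implies such an LU-equivalence, since it makes $h_jg_j^{-1}$ proportional to a unitary.

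For the general $\mathrm{LOCC}_{\mathbb{N}}$ statement I would reduce Eq.~(\ref{SEP1}) to
\[
\tfrac{1}{r}\big[p\,H_1\otimes H_2+(1-p)\,H_2\otimes H_1\big]=G_1\otimes G_2,\qquad p\in[0,1],
\]
and split into cases. For $p=1$ one reads off $H_1\propto G_1$, $H_2\propto G_2$ (identity-type trivial); for $p=0$ one gets $H_1\propto G_2$, $H_2\propto G_1$ (SWAP-type trivial). For $0<p<1$ I would take the partial trace over each copy to conclude that both $G_1$ and $G_2$ lie in the real span of $\{H_1,H_2\}$; substituting these combinations back and matching coefficients against $\{H_1\otimes H_1,\,H_1\otimes H_2,\,H_2\otimes H_1,\,H_2\otimes H_2\}$ yields a contradiction unless $H_1\propto H_2$, which forces all four operators proportional and hence all states LU-equivalent. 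Thus every $\mathrm{LOCC}_{\mathbb{N}}$ transformation is trivial.

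For the moreover statement ($h_1=h_2\equiv h_0$) I cannot use $\mathrm{SEP}_1$, so I would apply the full condition~(\ref{SEP}), retaining the annihilator term $P=g^\dagger\big(\sum_j N_j^\dagger N_j\big)g\ge 0$. The key points are that $H=H_0\otimes H_0$ is now $\mathrm{SWAP}$-invariant, so the stabilizer terms again collapse to a single multiple of $H$, and that $P\ket{\psi}^{\otimes 2}=0$ because each $N_j$ annihilates the initial state. Evaluating the resulting equation on $\ket{\psi}^{\otimes 2}$ kills $P$ and gives $(H_0\ket{\psi})^{\otimes 2}\propto (G_1\ket{\psi})\otimes(G_2\ket{\psi})$, hence the vector equalities $G_1\ket{\psi}=H_0\ket{\psi}=G_2\ket{\psi}$ (constants $1$ for normalized states). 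Since $G_1\otimes G_2-H_0\otimes H_0$ equals $P\ge 0$ up to normalization, testing it on product vectors $\ket{x}\otimes\ket{\psi}$ and using $\bra{\psi}G_2\ket{\psi}=\bra{\psi}H_0\ket{\psi}$ upgrades this to the operator inequalities $G_1\ge H_0$ and $G_2\ge H_0$.

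The final and most delicate step is to turn the operator inequality together with the single vector equality into genuine operator equality. Re-seeding with the normalized final state $\ket{\phi}=h_0\ket{\psi}$ (so that $H_0\mapsto\mathds{1}$), the initial operator becomes a product operator $\tilde G_1=\bigotimes_{i=1}^n\tilde G_1^{(i)}$ obeying $\tilde G_1\ge\mathds{1}$ and $\tilde G_1\ket{\phi}=\ket{\phi}$. Thus $\ket{\phi}$ is a minimal-eigenvalue eigenvector of the product operator $\tilde G_1$, so it must lie in the tensor product of the minimal-eigenvalue eigenspaces of the local factors $\tilde G_1^{(i)}$; but $\ket{\phi}$ is fully entangled, so each of these local eigenspaces must be all of $\mathbb{C}^{d}$, forcing every $\tilde G_1^{(i)}$ to be scalar and hence $\tilde G_1=\mathds{1}$, i.e. $G_1=H_0$ (and likewise $G_2=H_0$). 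Then $g_1\ket{\psi}$, $g_2\ket{\psi}$ and $h_0\ket{\psi}$ are all LU-equivalent, so the transformation is trivial. I expect this last step — promoting ``eigenvector at the bottom of the spectrum of a product operator'' to scalarity via full entanglement — to be the main obstacle, together with pinning down why $h_1=h_2$ is exactly what neutralizes the annihilator term: for $h_1\neq h_2$ the collapsed SEP equation no longer controls $P$, which is precisely why the LOCC claim must be restricted to identical final states.
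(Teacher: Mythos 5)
Your proof is correct. The first part ($\mathrm{LOCC}_{\mathbb{N}}$ via $\mathrm{SEP}_1$, the two-term stabilizer sum, partial traces and coefficient matching) is essentially identical to the paper's argument, just with the case analysis spelled out more explicitly. The second part, however, takes a genuinely different route. The paper's proof is a one-line symmetry argument: acting with the collapsed SEP equation on $\ket{\psi}^{\otimes 2}$ (which kills the annihilator term, as you also observe) shows that $G_1^{-1}H_1\otimes G_2^{-1}H_1$ is a local \emph{invertible} operator fixing $\ket{\psi}^{\otimes 2}$; it therefore lies in $\mathcal{S}_{\psi^{\otimes 2}}=\{\mathds{1},\mathrm{SWAP}\}$, and being a product across the state splitting it cannot be $\mathrm{SWAP}$, so it equals $\mathds{1}$ and $G_1\propto G_2\propto H_1$ immediately. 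You instead extract only the vector equality $G\ket{\psi}^{\otimes 2}=H\ket{\psi}^{\otimes 2}$, supplement it with the operator inequality $G_1\otimes G_2\geq H_0\otimes H_0$ coming from positivity of the annihilator contribution, and then promote the inequality to equality via the spectral structure of product operators together with full entanglement. That last step, which you flagged as the main obstacle, in fact goes through cleanly: for positive definite local factors the minimal eigenspace of $\bigotimes_i \tilde G_1^{(i)}$ is exactly the tensor product of the local minimal eigenspaces, and a fully-entangled eigenvector forces each of these to be the whole local space. The proportionality constant you set to $1$ is indeed fixed by $\bra{\psi}G_1\ket{\psi}=\bra{\psi}H_0\ket{\psi}=1$ for normalized states (and even without this normalization the argument survives with the constant carried along). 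What each approach buys: the paper's is shorter and exploits the stabilizer hypothesis twice (once per part), whereas yours uses the hypothesis only to collapse the stabilizer sum and then relies on positivity and full entanglement — a more self-contained argument that would still give information in settings where the symmetry one constructs is not manifestly in a known, small stabilizer. Your closing diagnosis of why $h_1=h_2$ is essential (it makes $H$ $\mathrm{SWAP}$-invariant, so the stabilizer sum collapses independently of the probabilities and the annihilator term can be isolated) matches the role this hypothesis plays in the paper.
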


\begin{proof}
First, we show that via LOCC$_\mathbb{N}$, there are only trivial transformations. As discussed in the preliminaries, if a transformation is possible via LOCC$_\mathbb{N}$, it is possible via SEP$_1$ \cite{SEPvsSEP1}. Therefore, by Eq.~(\ref{SEP1}) we have:
\begin{align}
     p H_1\otimes H_2 + (1-p) H_2\otimes H_1 = G_1\otimes G_2
     \label{trivsep1}
\end{align}
As $G_{1,2}$ are strictly positive operators, we have $\text{tr}\ G_{1,2}\ne0$. Therefore, by taking the partial traces of Eq.~\eqref{trivsep1}, we can express $G_{1}$ and $G_2$ in terms of $p, H_1,H_2$. Re-inserting this into Eq.~\eqref{trivsep1} yields either $G_1 \propto H_1$ and $G_2 \propto H_2$, or $G_1 \propto H_2$ and $G_2 \propto H_1$. Thus, the transformation is trivial.

Second, we show there are no non-trivial LOCC transformations if the final states are identical, i.e. if $h_1=h_2$. If the transformation is possible via LOCC, it is  possible via SEP. Therefore, we consider Eq.~\eqref{SEP} (with $r=1$, as the elements of the stabilizer are unitary\footnote{When all the elements of the stabilizer are unitary, Eq.~\eqref{SEP} implies $r = \mathrm{tr}(H)/\mathrm{tr}(G)$. Without loss of generality, we can thus choose to normalize the operators $H$ and $G$ so that $r=1$. }). Acting with both sides of this equation on $\ket{\psi}^{\otimes 2}$ yields:
\begin{align}
   (G_1^{-1} \otimes G_2^{-1})  (H_1 \otimes H_1) \ket{\psi}^{\otimes 2} =\ket{\psi}^{\otimes 2}.
\end{align}
Therefore, $(G_1^{-1}H_1 \otimes G_2^{-1}H_1)$ is a local invertible symmetry of $\ket{\psi}$ and thus must belong to the stabilizer $\mathcal{S}_{\psi^{\otimes2}}$. Moreover, it is separable in the state splitting. Therefore it must be equal to $\mathds{1}$. Thus, it has to hold that, $G_1\propto G_2\propto H_1$. Hence, the transformation is trivial.
\end{proof}

This theorem indicates that, given a state $\ket{\psi}$, if the stabilizer of $\ket{\psi}^{\otimes 2}$ consists of only $\mathds{1}$ and SWAP, then only trivial transformations are possible. Consequently, we refer to such stabilizers (i.e. those satisfying Eq.~\eqref{trivialstablizer}) as trivial. We will give an explicit example of a state with a trivial stabilizer in Section \ref{sec:: probtransfo}. To find non-trivial transformations within a single SLOCC class in the multi-state regime, one should consider SLOCC classes represented by a seed state $\ket{\psi}$, such that $\ket{\psi}^{\otimes 2}$ has a non-trivial stabilizer. As we show now, SLOCC classes of generalized GHZ states satisfy precisely this requirement, making them good candidates to study multi-state transformations. 

A generalized GHZ state for $n$ parties with local dimension $d$, that we denote by $\ket{\mathrm{GHZ}_d^n}$, corresponds to the state
\begin{align}
\ket{\mathrm{GHZ}_d^n} = \frac{1}{\sqrt{d}} \sum_{i=0}^{d-1} \ket{\underbrace{i \cdots i}_{n}}.
\end{align}
Such states have the useful property that $k$ copies of them can be re-expressed as another generalized GHZ state with the same number of parties but higher local dimensions. The $k$ copies $\ket{\mathrm{GHZ}^n_d}^{\otimes k}$ are indeed equivalent, up to a local relabeling of the computational basis states, to the state $\ket{\mathrm{GHZ}^n_{d^k}}$.

As a consequence, computing the stabilizer of a generalized GHZ state for any local dimension is sufficient to obtain the stabilizer of any number of copies of a generalized GHZ state. This stabilizer can easily be computed and is given in the following lemma (see Appendix \ref{sec:Appendix_sym_GHZ} for the proof).

\begin{lemma}
\label{theo:ghzsym}
A local invertible operator is a symmetry of the state $\ket{\mathrm{GHZ}_d^n}$ (with $d\geq 2$ and $n\geq 3$) if and only if it can be written
\begin{align}
\label{eq:ghzsym}
S = \left[ D(\vec{\gamma}^{(1)}) \otimes \cdots \otimes D(\vec{\gamma}^{(n)}) \right] X_\sigma^{\otimes n}
\end{align}
where
\begin{align}
 D(\vec{\gamma}^{(i)}) &= \operatorname{diag}(\gamma_1^{(i)}, \gamma_2^{(i)}, \dots, \gamma_d^{(i)}), \\
\label{eq:ghzsymcond}
\gamma_j^{(n)} &=\left(\prod_{i=1}^{n-1} \gamma_j^{(i)}\right)^{-1},\quad \forall j\in\{1,...,d\} \\
    X_\sigma &= \sum_{k=0}^{d-1} \ket{\sigma(k)}\bra{k},
\end{align}
with $\sigma \in S_d$ any permutation of $d$ elements, $\vec{\gamma}^{(i)} = (\gamma_1^{(i)}, \dots, \gamma_d^{(i)} ) \in \mathbb{C}^{d}$ for $i=1,\ldots, n-1$.
\end{lemma}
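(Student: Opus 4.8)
The plan is to prove the two directions separately, the forward (``if'') direction being a short verification and the converse (``only if'') being the substantial part.

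For the ``if'' direction I would simply act with an operator of the stated form on the state. Since $X_\sigma\ket{k}=\ket{\sigma(k)}$, the factor $X_\sigma^{\otimes n}$ sends each term $\ket{k}^{\otimes n}$ to $\ket{\sigma(k)}^{\otimes n}$ and hence merely permutes the $d$ terms of $\ket{\mathrm{GHZ}_d^n}$ among themselves. The diagonal factor $D(\vec\gamma^{(1)})\otimes\cdots\otimes D(\vec\gamma^{(n)})$ then multiplies the surviving term $\ket{k}^{\otimes n}$ by $\prod_{i=1}^n\gamma^{(i)}_{k}$, and condition~\eqref{eq:ghzsymcond} is exactly what forces this product to equal $1$ for every $k$. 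Hence $S\ket{\mathrm{GHZ}_d^n}=\ket{\mathrm{GHZ}_d^n}$, so every such $S$ is a symmetry.

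For the converse, write a generic local invertible symmetry as $S=\bigotimes_{i=1}^n S^{(i)}$ and set $v^{(i)}_l=S^{(i)}\ket{l}$; invertibility of each $S^{(i)}$ guarantees that $\{v^{(i)}_l\}_{l=0}^{d-1}$ is a basis for every $i$. The symmetry condition $S\ket{\mathrm{GHZ}_d^n}=\ket{\mathrm{GHZ}_d^n}$ is then equivalent to the tensor identity
\[
\sum_{l=0}^{d-1} v^{(1)}_l\otimes\cdots\otimes v^{(n)}_l=\sum_{k=0}^{d-1}\ket{k}^{\otimes n}.
\]
The heart of the proof is to show that each $S^{(i)}$ must be a \emph{monomial} (generalized permutation) matrix, i.e.\ that every $v^{(i)}_l$ is proportional to a computational basis vector. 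To establish this, I would fix a party $i$ and a basis bra $\bra{a}$ and project the $i$-th tensor factor of the identity above onto $\bra{a}$, obtaining
\[
\sum_{l=0}^{d-1}\langle a|v^{(i)}_l\rangle\bigotimes_{t\neq i} v^{(t)}_l=\ket{a}^{\otimes(n-1)},
\]
an $(n-1)$-partite tensor of rank one. The key observation is that the ``diagonal'' product vectors $\bigotimes_{t\neq i}v^{(t)}_l$ are linearly independent in $l$ (this follows, factor by factor, from each family $\{v^{(t)}_l\}_l$ being a basis). Flattening this tensor across any bipartition of the remaining $n-1\ge 2$ parties therefore yields a matrix whose rank equals the number of nonzero coefficients $\langle a|v^{(i)}_l\rangle$; since the right-hand side has rank one, at most one of these coefficients can be nonzero. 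Thus row $a$ of $S^{(i)}$ has at most one nonzero entry, and invertibility upgrades this to exactly one, so $S^{(i)}$ is monomial. I expect this monomial step to be the main obstacle: a naive ``fix all but two indices'' contraction only controls products of several parties' matrix entries rather than each $S^{(i)}$ individually, and the clean fix is precisely the rank/linear-independence argument above (which is also where the hypothesis $n\ge 3$ is used, as it fails for the bipartite maximally entangled state). Equivalently, one may invoke the essential uniqueness of tensor-rank decompositions (Kruskal's theorem), since $\{\ket{k}\}_k$ has maximal Kruskal rank in each mode.

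Once each $S^{(i)}$ is known to be monomial I would write $S^{(i)}\ket{l}=\alpha^{(i)}_l\ket{\tau_i(l)}$ for nonzero scalars $\alpha^{(i)}_l$ and permutations $\tau_i$, and substitute back into the tensor identity. Each term then reads $\big(\prod_{i=1}^n\alpha^{(i)}_l\big)\,\ket{\tau_1(l)}\otimes\cdots\otimes\ket{\tau_n(l)}$, and matching with $\sum_k\ket{k}^{\otimes n}$ forces every term to be a ``diagonal'' basis element, i.e.\ $\tau_1=\cdots=\tau_n=:\sigma$, together with $\prod_{i=1}^n\alpha^{(i)}_l=1$ for all $l$. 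Setting $\gamma^{(i)}_{\sigma(l)}=\alpha^{(i)}_l$ then rewrites $S$ exactly in the form~\eqref{eq:ghzsym}, with the constraint $\prod_{i=1}^n\gamma^{(i)}_j=1$ being precisely~\eqref{eq:ghzsymcond}. This closes the converse and completes the characterization.
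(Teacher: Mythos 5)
Your proof is correct, and its overall architecture matches the paper's: a direct verification of the ``if'' direction, then a converse that first forces each $S^{(i)}$ to be a monomial matrix and finally matches permutations and phases against $\sum_k\ket{k}^{\otimes n}$. The one place where you genuinely diverge is the monomial step. The paper projects the first \emph{two} parties onto an off-diagonal $\bra{ij}$ ($i\neq j$), uses $\braket{ij|\mathrm{GHZ}_d^n}=0$ together with the linear independence of the residual vectors $\ket{k\cdots k}$ to obtain the bilinear conditions $S^{(1)}_{i,k}S^{(2)}_{j,k}=0$ for all $k$ and all $i\neq j$, and ranges over pairs of parties to conclude that every $S^{(i)}$ has a single nonzero entry per column, in coinciding positions. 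You instead project a \emph{single} party onto $\bra{a}$ and compare tensor ranks: the flattening of $\sum_l S^{(i)}_{a,l}\bigotimes_{t\neq i}v^{(t)}_l$ across any bipartition of the remaining $n-1$ parties has rank equal to the number of nonzero coefficients (by linear independence of the Khatri--Rao products), which must equal the rank of $\ket{a}^{\otimes(n-1)}$, i.e. one; this controls the rows of each $S^{(i)}$ individually. Both arguments consume the hypothesis $n\geq 3$ at exactly this point (the paper needs $n-2\geq 1$ residual parties, you need $n-1\geq 2$ for a nontrivial bipartition). The paper's route is slightly more elementary (no rank computation, just vanishing of products of matrix entries), while yours isolates the underlying structural reason --- essential uniqueness of the rank-$d$ decomposition of the GHZ tensor, as in Kruskal's theorem --- and would generalize more readily to other states with linearly independent local factors. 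The final reassembly into $D(\vec{\gamma}^{(i)})X_\sigma$ with $\prod_{i=1}^n\gamma^{(i)}_j=1$ is identical in both proofs.
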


Knowing the symmetries of all states $\ket{\mathrm{GHZ}_d^n}$, we can fully characterize single-state LOCC transformations among states in a subset of their SLOCC classes. We will later use this to prove some interesting properties of the multi-state regime. In particular, we show in Theorem~\ref{thmGHZ-like}, that LOCC transformations between states of the form
\begin{align}
	\label{eq:ghz-like}
    \mathds{1} \otimes \cdots \otimes \mathds{1} \otimes \: g \; |\mathrm{GHZ}_d^n\rangle
\end{align}
with $g=\textrm{diag}(g_1,\dots,g_d)$ any invertible diagonal matrix, obey a majorization condition (just like for bipartite states). Recall, a real vector $a=(a_1,...,a_m)^T$ majorizes another real vector $b=(b_1,...,b_m)^T$, denoted as $a\succ b$, if
\begin{equation}
     \sum_{i=1}^k a^{\downarrow}_i \ge  \sum_{i=1}^k b^{\downarrow}_i,\quad \forall \ k\in \{1,...,m\},
\end{equation}
with equality in the case of $k=m$, and where $a^{\downarrow}=(a_1^{\downarrow},...,a_m^{\downarrow})^T,\ b^{\downarrow}=(b_1^{\downarrow},...,b_m^{\downarrow})^T$ correspond to the vectors $a$ and $b$ resorted into descending order. We now present a matrix reformulation of a theorem by Rado \cite{Rado}:

\begin{thm}[\cite{Rado}]
Given two real diagonal matrices $A= \textrm{diag}(a_1,\dots,a_d)$ and $B= \textrm{diag}(b_1,\dots,b_d)$ of dimension $d$, there exists a probability distribution, $\{p_k\}_{k=1}^m$, such that
\begin{equation}
\sum_{k=1}^m {p_k X_{\sigma_k} A X_{\sigma_k}^{\dagger}} = B
\end{equation}
where each index $k$ represents a permutation $\sigma_k \in S_d$, with associated permutation operator, $X_{\sigma_k}$, if and only if
\begin{equation}
    (a_1,...,a_d)^T \succ (b_1,...,b_d)^T
\end{equation}
\label{Rado}
\end{thm}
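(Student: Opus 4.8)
The plan is to reduce the matrix identity to a statement about the diagonal vectors and then recognize it as the classical characterization of majorization. First I would observe that, since $A$ is diagonal and each $X_{\sigma_k}$ is a permutation operator, every conjugate $X_{\sigma_k} A X_{\sigma_k}^\dagger$ is again diagonal, its diagonal being a permutation of the entries of $(a_1,\dots,a_d)^T$. Consequently both sides of $\sum_k p_k X_{\sigma_k} A X_{\sigma_k}^\dagger = B$ are diagonal, and the identity holds if and only if the diagonal vectors agree, i.e. $\vec b = \sum_k p_k P_{\sigma_k}\,\vec a$, where $P_{\sigma_k}$ is the permutation matrix associated with $\sigma_k$ and $\vec a,\vec b$ denote the diagonals of $A,B$. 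Thus the existence of a probability distribution $\{p_k\}$ solving the matrix equation is exactly the statement that $\vec b$ lies in the convex hull of the permutations of $\vec a$.

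Second, I would invoke the two classical results that together identify this convex hull with the set of vectors majorized by $\vec a$. By the Birkhoff--von Neumann theorem, the convex combinations $\sum_k p_k P_{\sigma_k}$ are precisely the doubly stochastic matrices; hence $\vec b$ lying in the permutohedron of $\vec a$ is equivalent to $\vec b = D\vec a$ for some doubly stochastic $D$. The Hardy--Littlewood--P\'olya theorem then states that such a $D$ exists if and only if $\vec a \succ \vec b$, which closes both implications at once.

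For a self-contained argument I would prove the two directions separately. For the ``only if'' part ($\vec b = D\vec a \Rightarrow \vec a\succ\vec b$) I would verify the partial-sum inequalities directly: for each $k$, the sum of the $k$ largest entries of $D\vec a$ is bounded by the sum of the $k$ largest entries of $\vec a$ because each row and column of $D$ sums to one, while equality of the total sums (the case $k=d$) follows from double stochasticity. For the ``if'' part ($\vec a\succ\vec b\Rightarrow$ decomposition) I would use the fact that whenever $\vec a\succ\vec b$ one can pass from $\vec a$ to $\vec b$ by a finite sequence of Robin--Hood transfers (T-transforms), each of which is a convex combination $\lambda \mathds{1} + (1-\lambda)Q$ of the identity and a transposition $Q$; composing these T-transforms and expanding the product yields an explicit convex combination of permutation matrices carrying $\vec a$ to $\vec b$, and reading off the coefficients gives the distribution $\{p_k\}$.

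The main obstacle is the ``if'' direction, i.e. actually constructing the probabilities $\{p_k\}$ from the majorization hypothesis; the reduction to diagonal vectors and the ``only if'' bound are routine, whereas producing the explicit convex decomposition requires the T-transform induction (equivalently, Birkhoff--von Neumann applied to the doubly stochastic matrix furnished by Hardy--Littlewood--P\'olya).
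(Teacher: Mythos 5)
Your proposal is correct, but note that the paper does not prove this statement at all: it is quoted as a matrix reformulation of a theorem of Rado \cite{Rado} and used as a black box, so there is no internal proof to compare against. Your argument is the standard one and is sound. The reduction of the operator identity to the vector identity $\sum_k p_k P_{\sigma_k}\,\vec a=\vec b$ is immediate since conjugation of a diagonal matrix by a permutation operator merely permutes its diagonal, and the identification of the convex hull of the permutations of $\vec a$ with $\{D\vec a: D \text{ doubly stochastic}\}$ (Birkhoff--von Neumann) combined with the Hardy--Littlewood--P\'olya characterization of majorization closes both directions. Your self-contained versions of the two implications are also fine: the row/column-sum bound gives the partial-sum inequalities for the \emph{only if} direction, and the T-transform induction, expanded as a convex combination of products of transpositions, furnishes the explicit probability distribution for the \emph{if} direction. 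This is precisely the content of Rado's result that the authors invoke, so your write-up would serve as a legitimate self-contained substitute for the citation.
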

This theorem is the central tool for proving the following result:
\begin{thm}
Let $g=\textrm{diag}(g_1,\dots,g_d)$ and $h=\textrm{diag}(h_1,\dots,h_d)$ be two invertible, complex, diagonal matrices such that $\mathrm{tr}(g^\dagger g) = \mathrm{tr}(h^\dagger h) $. Then the transformation
\begin{align}
   & \mathds{1} \otimes \cdots \otimes \mathds{1} \otimes \: g \; |\mathrm{GHZ}_d^n\rangle \nonumber \\ 
    & \qquad \stackrel{LOCC}{\longrightarrow} \mathds{1} \otimes \cdots \otimes \mathds{1} \otimes \: h \; |\mathrm{GHZ}_d^n\rangle
\end{align}
exists if and only if
\begin{equation}
(|h_1|^2,\dots,|h_d|^2)^T \succ (|g_1|^2,\dots,|g_d|^2)^T \notag
\end{equation}
\label{thmGHZ-like}
\end{thm}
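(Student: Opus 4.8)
The plan is to reduce the multipartite problem to Rado's theorem (Theorem~\ref{Rado}) by exploiting the symmetries from Lemma~\ref{theo:ghzsym} on the ``diagonal'' subspace $\mathcal{D}=\mathrm{span}\{\ket{k}^{\otimes n}:k=0,\dots,d-1\}$ spanned by the GHZ computational-basis strings. Writing $\tilde g=\mathds{1}^{\otimes(n-1)}\otimes g$ and $\tilde h=\mathds{1}^{\otimes(n-1)}\otimes h$, I would first note that the normalisation hypothesis $\mathrm{tr}(g^\dagger g)=\mathrm{tr}(h^\dagger h)$ gives $\|\tilde h\ket{\mathrm{GHZ}_d^n}\|^2=\|\tilde g\ket{\mathrm{GHZ}_d^n}\|^2$, so that $r=1$ in Eqs.~(\ref{SEP})--(\ref{SEP1}). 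The key structural observation is that, because the stabiliser (Lemma~\ref{theo:ghzsym}) obeys $\prod_{i=1}^n\gamma_k^{(i)}=1$, every $S\in\mathcal{S}_{\mathrm{GHZ}_d^n}$ acts on $\mathcal{D}$ as a \emph{pure permutation}, $S\ket{k}^{\otimes n}=\ket{\sigma(k)}^{\otimes n}$, the diagonal factors cancelling. Consequently the compression of $S^\dagger H S$ to $\mathcal{D}$ is the $d\times d$ matrix $X_\sigma^\dagger D_h X_\sigma$ with $D_h=\mathrm{diag}(|h_1|^2,\dots,|h_d|^2)$, while $G$ and $H$ themselves compress to $D_g=\mathrm{diag}(|g_1|^2,\dots,|g_d|^2)$ and $D_h$.

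For the ``only if'' direction I would use that LOCC implies SEP, so Eq.~(\ref{SEP}) holds. Projecting that identity onto $\mathcal{D}$ with $r=1$ yields $\sum_i p_i X_{\sigma_i}^\dagger D_h X_{\sigma_i}+P=D_g$, where $P\geq 0$ is the compression of the manifestly positive annihilation term $\tilde g^\dagger(\sum_j N_j^\dagger N_j)\tilde g$. Taking the trace and using $\sum_i p_i=1$ together with $\mathrm{tr}\,D_h=\mathrm{tr}\,D_g$ forces $\mathrm{tr}\,P=0$, hence $P=0$. Thus $\sum_i p_i X_{\sigma_i}^\dagger D_h X_{\sigma_i}=D_g$, and the converse of Rado's theorem gives $(|h_1|^2,\dots,|h_d|^2)^T\succ(|g_1|^2,\dots,|g_d|^2)^T$.

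For the ``if'' direction I would run Rado forward: the majorization supplies probabilities $\{p_i\}$ and permutations $\{\sigma_i\}$ with $\sum_i p_i X_{\sigma_i}^\dagger D_h X_{\sigma_i}=D_g$ (using that $\{X_\sigma\}$ is closed under taking adjoints). Rather than merely invoking SEP$_1$, I would exhibit an explicit one-round LOCC protocol: the last party performs the local instrument with Kraus operators $K_i=\sqrt{p_i}\,h X_{\sigma_i}g^{-1}$, which is a valid measurement since $\sum_i K_i^\dagger K_i=g^{-\dagger}D_g g^{-1}=\mathds{1}$; it then broadcasts the outcome $i$, and every other party applies the local unitary $X_{\sigma_i}$. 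Since $S_i=X_{\sigma_i}^{\otimes n}\in\mathcal{S}_{\mathrm{GHZ}_d^n}$, the global branch operator is $\sqrt{p_i}\,\tilde h\, X_{\sigma_i}^{\otimes n}\,\tilde g^{-1}$, and acting on $\tilde g\ket{\mathrm{GHZ}_d^n}$ it outputs $\sqrt{p_i}\,\tilde h\ket{\mathrm{GHZ}_d^n}$ in every branch. Hence the transformation is achievable, and in fact deterministically, by LOCC.

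The main obstacle is the ``only if'' direction: in Eq.~(\ref{SEP}) one must control the full stabiliser, including the non-unitary diagonal freedom $\vec\gamma^{(i)}$, as well as the annihilation operators, which a priori could couple $\mathcal{D}$ to its complement in uncontrolled ways. The two ideas that make it work are (i) the cancellation $\prod_i\gamma_k^{(i)}=1$, which collapses the entire stabiliser to permutations once restricted to $\mathcal{D}$, and (ii) the trace argument, which is precisely where the hypothesis $\mathrm{tr}(g^\dagger g)=\mathrm{tr}(h^\dagger h)$ is used to kill the annihilation contribution on $\mathcal{D}$. A cleaner but less self-contained alternative for necessity would be to group parties $1,\dots,n-1$ into a single party and invoke Nielsen's criterion across the resulting bipartite cut, for which the Schmidt vector of $\tilde g\ket{\mathrm{GHZ}_d^n}$ is exactly $(|g_1|^2,\dots,|g_d|^2)$ up to normalisation; I would mention this as a sanity check on the direction of the majorization.
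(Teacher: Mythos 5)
Your proof is correct and follows essentially the same route as the paper's: project the SEP condition of Eq.~(\ref{SEP}) (with $r=1$) onto the diagonal subspace spanned by $\ket{k}^{\otimes n}$, where Lemma~\ref{theo:ghzsym} collapses the stabilizer to permutations, invoke Theorem~\ref{Rado}, and use the identical one-round protocol (last party measures with $\sqrt{p_k}\,hX_{\sigma_k}g^{-1}$, others apply $X_{\sigma_k}$) for sufficiency. The only cosmetic difference is that you eliminate the annihilation term via positivity of its compression plus a trace argument, whereas the paper contracts the equation against $\sum_i \ket{i,\dots,i}$, which each $N_j g$ annihilates directly.
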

\begin{proof}
(\textit{only if}) If the transformation is possible via $\mathrm{LOCC}$, it is necessarily also possible via $\mathrm{SEP}$. Therefore, we may consider the necessary and sufficient conditions for the existence of a SEP transformation, given in Eq.~\eqref{SEP}. In this case, because $\mathrm{tr}(g^\dagger g) = \mathrm{tr}(h^\dagger h) $ and the GHZ state is normalised, it is easy to see that we must have $r=1$. From this operator equation, let us consider the sum of matrix elements $ \sum_{i=0}^{d-1} \langle l,\dots,l | \cdot | i, \dots, i\rangle$, for some $l \in \{0,\dots,d-1 \}$ . On the LHS, the second sum vanishes because any operator $N_j g$ annihilates the state $\sum_{i=0}^{d-1} \ket{ i, \dots, i} \propto |\mathrm{GHZ}_d^n\rangle$, and we get
\begin{equation}
\sum_{k=1}^m p_k \sum_{i=0}^{d-1} \langle l,\dots,l | S_k^\dagger (\mathds{1} \otimes \cdots \otimes \mathds{1} \otimes H ) S_k |i,\dots,i \rangle.
\end{equation}
Evaluating this for the symmetries as given in  Eq.~(\ref{eq:ghzsym}) yields:
\begin{equation}
\sum_{k=1}^m p_k \langle l | X_{\sigma_k}^\dagger H X_{\sigma_k} |l\rangle.
\end{equation}

From this equation, we see that this sum of matrix elements is independent of the diagonal part of the symmetries, $D(\vec{\gamma}^{(i)})$. For the RHS, as $G$ is a diagonal matrix, the same sum of matrix elements merely reads $\bra{l} G \ket{l}$. Since these equations are valid for all $l=1,\dots,d$, and $H$ and $G$ are diagonal, combining the left- and right-hand sides yields the matrix equation
\begin{equation}
\sum_{k=1}^m{p_k X_{\sigma_k}^\dagger H X_{\sigma_k}} = G,
\label{SEPnc}
\end{equation}
which by Theorem \ref{Rado} implies $ (|h_1|^2,\dots,|h_d|^2)^T \succ (|g_1|^2,\dots,|g_d|^2)^T$.\\

$(\textit{if})$ By Theorem \ref{Rado}, we know that there exist probabilities $\{p_k \geq 0 \}_{k=1}^m$ with $\sum_{k=1}^m p_k=1$ such that
\begin{equation}
\sum_{k=1}^m {p_k X_{\sigma_k} H X_{\sigma_k}^{\dagger}} = G.
\end{equation}
Since the permutation operators $X_{\sigma_k}$, are symmetries of the seed state $\ket{\mathrm{GHZ}_d^n}$, this implies that the transformation can be done by $\mathrm{SEP}_1$ (see Eq.~\eqref{SEP1} in the preliminaries). To conclude the proof, we observe this transformation can also be achieved by an LOCC protocol in which the last party applies a measurement with measurement operators $\{ \sqrt{p_k} h X_{\sigma_k} g^{-1} \}_{k=1}^m$ and then, depending on the outcome $k$, all the other parties apply the unitary operation $X_{\sigma_k}$.

\end{proof}

As already mentioned, generalized GHZ states have a structure that extends nicely to the multi-state regime. 
In fact, GHZ-like states of the form given in Eq. (\ref{eq:ghz-like}) admit a direct generalization of the bipartite Schmidt decomposition~\cite{Thapliyal,Bandyopadhyay}. 
Using this generalized Schmidt decomposition, it was mentioned in Ref.~\cite{Thapliyal} that the bipartite entanglement concentration protocol presented in \cite{Bennett} can straightforwardly be extended to concentrate the entanglement of GHZ-like states into perfect GHZ states (with an optimal asymptotic rate).

Using Theorem~\ref{thmGHZ-like}, we now show that, in the multi-state regime, it is possible to start with a target state and an auxiliary state that are outside the MES and use the auxiliary state to transform the target state into a state that is inside the MES\footnote{See \cite{Leonhard} for a similar work investigating these types of transformations for the state $\ket{W}^{\otimes2}$.}.

To do so, we consider the following two-state transformation in the SLOCC class of two copies of the three-qubit GHZ state, $\ket{\mathrm{GHZ}} \equiv \ket{\mathrm{GHZ}_2^3}$ (note that the subsequent discussion can be immediately generalized to $n > 3$ parties):
\begin{align}
    \ket{\tilde{g}} \otimes \ket{\tilde{g}} \stackrel{LOCC}{\longrightarrow}\ket{\tilde{h}_1}\otimes \ket{\tilde{h}_2}, \label{eq:transfggh1h2}
\end{align}
with
\begin{align}
    &\ket{\tilde{g}}=\left(\mathds{1} \otimes \mathds{1} \otimes \: \tilde{g} \;\right)\ket{\mathrm{GHZ}}, \label{specstates1}\\
    \ket{\tilde{h}_i}&=\left(\mathds{1} \otimes \mathds{1} \otimes \: \tilde{h}_i \;\right)\ket{\mathrm{GHZ}} \; (i=1,2),
    \label{specstates2}
\end{align}
where $\tilde{G}=\tilde{g}^\dagger \tilde{g}=\mathds{1}/2+\delta \sigma_z$ and $\tilde{H}_i=\tilde{h}_i^\dagger \tilde{h}_i=\mathds{1}/2+\alpha_i \sigma_z \; (i=1,2)$, with $\delta, \alpha_1, \alpha_2 \in [0,1/2)$. Observe that, by Theorem \ref{thmGHZ-like}, the smaller the value of $\delta, \alpha_1$ or $\alpha_2$, the more entangled the corresponding state (see also \cite{Turgutdet}). In addition, it has been shown in Ref.~\cite{MES} that, among states of this form, only the GHZ state is in the MES of three-qubit states.

In the LOCC transformation given in Eq.~\eqref{eq:transfggh1h2}, the two copies of the GHZ state can equivalently be replaced by the state $\ket{\mathrm{GHZ}_4^3}$, yielding a transformation of the same form as in Theorem~\ref{thmGHZ-like}, with the 4-dimensional invertible local matrices $g = \tilde{g} \otimes \tilde{g}$ and $h = \tilde{h}_1 \otimes \tilde{h}_2$. As, by construction, we have $\mathrm{tr}(\tilde{g}^\dagger \tilde{g} \otimes \tilde{g}^\dagger \tilde{g}) = \mathrm{tr}(\tilde{h}_1^\dagger \tilde{h}_1 \otimes \tilde{h}_2^\dagger \tilde{h}_2)=1 $, we can apply Theorem~\ref{thmGHZ-like}. It is easy to see that the corresponding majorization condition is satisfied if and only if
\begin{equation}
\delta \le \sqrt{\left(\alpha_1 +\frac{1}{2}\right)\left(\alpha_2 +\frac{1}{2}\right)}-\frac{1}{2}. \label{majorizationcond}
\end{equation}
Note, that by Eq.~\eqref{majorizationcond}, if we wish to reach two copies of a GHZ state (i.e. $\alpha_1=\alpha_2=0$), then we must begin from two copies of a GHZ state ($\delta=0$).  Alternatively, setting $\alpha_1=0$ (which corresponds to transforming only the target state into the GHZ state) yields:
\begin{equation}
    \delta|_{\alpha_1=0}\le \sqrt{\left(\frac{\alpha_2}{2} +\frac{1}{4}\right)}-\frac{1}{2}\le \alpha_2.
    \label{reachGHZineq}
\end{equation}

This inequality has solutions $\forall \alpha_2\in (0,1/2)$. Therefore, provided Eq.~\eqref{reachGHZineq} is satisfied by the initial states, we can transform the target state into the GHZ state. That is, in the multi-state regime, we can transform a state that is outside the MES to a state that is inside the MES.

Note that a simple teleportation-like protocol does not allow one to obtain $\ket{\mathrm{GHZ}}$. That is a protocol in which the two initial states are first transformed into bipartite states by projectively measuring one particle. This may be easily seen by considering the entropies of the local reduced density matrices. Clearly, when considering more than three particles this holds all the more.

Note further that Eq.~\eqref{reachGHZineq} tells us that, if we transform the target state to the GHZ state, then $\alpha_2\geq\delta$. That is, after the transformation, the auxiliary state is less entangled (this is not surprising as the overall transformation is an LOCC transformation). Therefore, in the multi-state regime, it is possible to squeeze entanglement from one state to another.

As another consequence of this result, when considering the 2-MES of three-qubit states as in Section \ref{sec:set_the_stage}, a choice of the 2-MES that contains a state $\ket{\tilde{g}}$ as in Eq.~(\ref{specstates1}), but not $\ket{\mathrm{GHZ}}$ is thinkable. In fact, in Ref. \cite{GuoChitambarDuan} it was shown that $\ket{\mathrm{GHZ}_3^3}$ may be transformed into any three-qubit state by LOCC. Hence, $\{\ket{\mathrm{GHZ}}\}$ is a 2-MES for three-qubit states. Moreover, as two copies of $\ket{\tilde{g}}$ may be converted into $\ket{\mathrm{GHZ}_3^3}$ (see Theorem \ref{thmGHZ-like} with slight modification allowing for non-invertible operators) as long as $\delta \leq 1/\sqrt{3} - 1/2 \approx 0.077$, the corresponding sets $\{\ket{\tilde{g}}\}$ also form a 2-MES for three-qubit states. This fact resembles the freedom in choosing the 2-MES in the bipartite case discussed in Section \ref{sec:set_the_stage}. Let us also remark here that---in contrast to the three-qubit system considered here---not for all system sizes it is possible to find a finite set forming a 2-MES. Indeed, a simple counting argument shows that for sufficiently large $n$, a finite set of states in $\left(\mathbb{C}^4\right)^{\otimes n}$ does not suffice to even probabilistically obtain all $n$-qubit states.

Recall that, as explained in the preliminaries, each Kraus operator in a SEP map is associated with a unique invertible symmetry from the stabilizer. Consequently, whether a transformation is possible under LOCC is intrinsically connected to the stabilizer of the state. Thus, one might ask: which are the relevant symmetries enabling a certain LOCC transformation?
Novel transformations in the multi-state regime will naturally need local operations that act jointly on the two copies of the initial state, i.e. that are non-local in the state splitting. As discussed above, two copies of a state have at least one symmetry that is non-local in the state splitting: SWAP. As the stabilizer of $\ket{\psi}^{\otimes2}$ always contains the symmetries that can be generated by SWAP and the single-copy symmetries of $\ket{\psi}$, we refer to these symmetries as trivial. These trivial symmetries in fact form a subgroup of the stabilizer, which we will refer to as the trivial subgroup $\mathcal{S}_{\psi^{\otimes2}}^0$. Additionally, we refer to symmetries that cannot be generated by SWAP and the single-copy symmetries of $\ket{\psi}$ as emergent. 
As we will see in the following, the trivial subgroup does allow novel transformations in the multi-state regime. It is now natural to ask: is $\mathcal{S}_{\psi^{\otimes2}}^0$ sufficient to implement all multi-state LOCC transformations? 

We now show the answer to this question is no. That is, there are LOCC transformations which require emergent symmetries. To this end, we again consider transformations as in Eq.~\eqref{eq:transfggh1h2}, but now only allowing measurement operators corresponding to elements of the trivial subgroup. Following the arguments in the proof of Theorem \ref{thmGHZ-like}, Eq.~\eqref{SEPnc} must hold for a transformation to be possible, where the sum is now over permutation matrices from the following subgroup of the trivial subgroup:

\begin{align}
    \mathcal{\tilde{S}}_{\text{GHZ}^{\otimes2}}^0&=\{\mathds{1},\ X\otimes \mathds{1},\ \mathds{1}\otimes X,\  X\otimes X,\ \textrm{SWAP}, \notag\\
    &\qquad \textrm{SWAP}.(X\otimes \mathds{1}),\ \textrm{SWAP}.(\mathds{1}\otimes X), \notag \\
    &\qquad \textrm{SWAP}.(X\otimes X)\} \subseteq   \mathcal{S}_{\text{GHZ}^{\otimes2}}^0, \label{trivsym}
\end{align}
which is a group of order 8, in contrast to the full permutation group of four elements, which is of order 24. Here, $X$ denotes the Pauli X and should be understood as $X^{\otimes 3}$, just as SWAP.

Evaluating Eq.~(\ref{SEPnc}) over this symmetry subgroup yields the following bound:
\begin{equation}
    \delta\le\sqrt{\alpha_1,\alpha_2}
    \label{trivialbound}
\end{equation}

Thus, we see that if we want to transform the target state to the GHZ state, i.e. $\alpha_1=0$, then $\delta$ must also be zero. That is, if we are restricted to trivial symmetries,  we can only reach the GHZ state by starting with it\footnote{i.e.  in order to reach $\ket{\mathrm{GHZ}}$, we need non-local symmetries such as $X_{(13)} = \ket{0}\bra{0}+\ket{3}\bra{1}+\ket{2}\bra{2}+\ket{1}\bra{3}$.}.

As a final comment, note that all transformation saturating the inequality in Eq.~\eqref{trivialbound} can be decomposed into a particularly simple two round protocol which only uses trivial symmetries. Let $\lambda=\sqrt{\alpha_1 \alpha_2}/(\alpha_1+\alpha_2)\in(0,1/2)$. First, the last party applies a measurement with the following measurement operators:
\begin{align}
    M_1^1&=\sqrt{1/2}\ h'\ \left( g^{-1} \otimes g^{-1} \right)\\
    M_2^1&=\sqrt{1/2}\ h'\ \text{SWAP} \left(g^{-1} \otimes g^{-1} \right)
\end{align}
where $h'=\sqrt{H'}$ with:
\begin{multline}
    H' =\left( \frac{1}{2} + \lambda \right) \tilde{H_1}\otimes \tilde{H_2}\\ + \left(\frac{1}{2}-\lambda\right) X\tilde{H_1}X\otimes X\tilde{H_2}X.
\end{multline}
Using Eq.~(\ref{SEPnc}), it is easy to verify that $\{M_i^1\}_{i=1}^2$ forms a valid measurement. In the event of outcome 1, the parties do nothing, and, in the event of outcome 2, parties 1 to 2 apply a SWAP. Thus, this first round of the LOCC protocol deterministically transforms the initial state, $\mathds{1} \otimes \mathds{1} \otimes (g\otimes g) \ket{\mathrm{GHZ_4^3}}$, into the state $\mathds{1} \otimes \mathds{1} \otimes h'\ket{\mathrm{GHZ_4^3}}$ (which, we might note, is not state separable). Next, the last party applies a second measurement with measurement operators:
\begin{align}
    M_1^2&=\sqrt{1/2+\lambda}\ \left( \tilde{h}_1 \otimes \tilde{h}_2\right)\ h'^{-1} \\
    M_2^2&=\sqrt{1/2-\lambda}\  \left( \tilde{h}_1 \otimes \tilde{h}_2\right)\left( X \otimes X\right)\ h'^{-1}
\end{align}
which by construction satisfies the completeness relation. In the event of outcome 1, the parties do nothing, and, in the event of outcome 2, parties 1 and 2 apply $X\otimes X$. Thus, the second round deterministically transforms the state $\mathds{1}\otimes \mathds{1} \otimes h'\ket{\mathrm{GHZ}^3_4}$ to the final state, $\mathds{1} \otimes \mathds{1} \otimes (\tilde{h}_1 \otimes \tilde{h}_2) \ket{\mathrm{GHZ}^3_4}$. Observe, all measurements throughout the protocol only depend on trivial symmetries from the subgroup, $\mathcal{\tilde{S}}_{\text{GHZ}^{\otimes2}}^0$. Moreover, although the symmetries used are separable, each measurement is non-local in the state splitting.

\subsection{Multipartite LOCC Catalysis}
Theorem \ref{thmGHZ-like} shows that for a class of GHZ-like states, LOCC transformations are fully characterized by a majorization condition, just like they are for bipartite states. We can therefore use this fact to provide, to our knowledge, the first examples of multipartite catalytic transformation. In the following we present an explicit example. Let us consider two $4$-dimensional GHZ-like states over $n$ parties $|\psi_1\rangle$ and $|\psi_2\rangle$, characterized by the matrices $g=\mathrm{diag}(\sqrt{0.45},\sqrt{0.35},\sqrt{0.12},\sqrt{0.08})$ and $h=\mathrm{diag}(\sqrt{0.56},\sqrt{0.21},\sqrt{0.17},\sqrt{0.06})$, respectively, as in Theorem \ref{Rado}. Theorem \ref{thmGHZ-like} shows that $|\psi_1\rangle$ and $|\psi_2\rangle$ are LOCC incomparable. As catalyst, we consider another $4$-dimensional GHZ-like state over $n$ parties $|\phi_c\rangle$, characterized by the diagonal matrix $c = \mathrm{diag}(\sqrt{0.63},\sqrt{0.27},\sqrt{0.07},\sqrt{0.03})$. Using the fact that the tensor product state $|\mathrm{GHZ}_d^n\rangle \otimes |\mathrm{GHZ}_d^n\rangle$ is equivalent to the higher dimensional state $|\mathrm{GHZ}_{d^2}^n\rangle$, we see that the catalytic transformation $|\psi_1\rangle \otimes |\phi_c\rangle \stackrel{LOCC}{\longrightarrow} |\psi_2\rangle \otimes |\phi_c\rangle$ is equivalent to the transformation $ \mathds{1}_{16} \otimes \cdots \otimes \mathds{1}_{16} \otimes \left( g \otimes c \right) |\mathrm{GHZ}_{16}^n\rangle \stackrel{LOCC}{\longrightarrow} \mathds{1}_{16} \otimes \cdots \otimes \mathds{1}_{16} \otimes \left( h \otimes c \right) |\mathrm{GHZ}_{16}^n\rangle $. For the latter transformation, Theorem \ref{thmGHZ-like} applies and it is straightforward to verify that the corresponding majorization condition indeed holds.

Because of the relation to majorization, we can transfer another interesting result from bipartite state transformations to multipartite states. In Ref.~\cite{Sen}, it has been shown that there exist bipartite states $\ket{\psi}$ and $\ket{\phi}$ such that neither $\ket{\psi}\rightarrow_{LOCC}\ket{\phi}$ nor $\ket{\phi}\rightarrow_{LOCC}\ket{\psi}$, yet $\ket{\psi}^{\otimes k}\rightarrow_{LOCC}\ket{\phi}^{\otimes k}$ for some $k\in \mathbb{N}$. As in the multipartite catalysis example above, choosing $\tilde{g}, \tilde{h}$ in Eqs.~(\ref{specstates1}, \ref{specstates2}) appropriately, we can reproduce these features in the multipartite case. Note, as discussed in the introduction, this implies the multi-state regime can induce a different partial order on the Hilbert space.

\subsection{Multi-state probabilistic transformations}
\label{sec:: probtransfo}

Finally, one might wonder whether the multi-state regime provides an advantage in probabilistic transformations. Such an advantage has been demonstrated for bipartite state transformations \cite{Sen}, where a simple expression for the maximal success probability of transformations is available \cite{Vi99}. However, the multipartite setting is naturally more complicated. Using results from Ref.~\cite{GourWallach} (see the preliminaries), we now demonstrate that the multi-state regime does indeed provide an advantage in probabilistically transforming two states together. Moreover, we demonstrate that the maximum success probability of the multi-state transformation can even be greater than the maximum success probability of either single-state transformation. Let us remark that, of course, the deterministic multi-state transformations presented above are already an example of this (with the success probability being 1 in the multi-state regime compared to strictly smaller than 1 otherwise). However, as Theorem \ref{trivtran} indicates, there may be states for which the multi-state regime allows no additional, non-trivial, deterministic transformations. We present an explicit example that demonstrates that, in such cases, probabilistic multi-state LOCC can still provide an advantage over probabilistic single-state LOCC.

Let $\ket{\psi}$ be a normalised state such that the stabilizer of $\ket{\psi}^{\otimes 2}$ only contains $\mathds{1}$ and SWAP (note that, therefore, the stabilizer of $\ket{\psi}$ contains only $\mathds{1}$), and let:
\begin{equation}
    \ket{\psi_i}=\mathds{1} \otimes \cdots \otimes \mathds{1} \otimes  h_i \ket{\psi} \label{psii}
\end{equation}

and $n_i=||\ket{\psi_i}||$. Then by Eq.~\eqref{SEPeqspec} \cite{GourWallach} in the preliminaries we have:
\begin{align}
    &p_{\textrm{max}}^{SEP}\left(\ket{\psi}^{\otimes2} \mapsto \ket{\psi_1}\otimes \ket{\psi_2}\right)\\
    &= \text{\textrm{max}}\Big\{ p\ \text{st} \notag \\
    &\mathds{1}^{\otimes n-1}  \otimes \left( \mathds{1}- \frac{p}{2} \frac{ H_1\otimes H_2 +  H_2\otimes H_1}{(n_1n_2)^2}\right)\ \text{is sep}\Big\}\\
    &=\lambda_{\textrm{max}}^{-1} \left[\frac{ H_1\otimes H_2 +  H_2\otimes H_1}{2 (n_1n_2)^2}\right]
\end{align}
Now consider the following choices for $H_i$:
\begin{equation}
    H_1=
    \begin{pmatrix}
1 & 0 \\
0 & \epsilon \\
\end{pmatrix}
,\qquad
H_2=
    \begin{pmatrix}
\epsilon & 0 \\
0 & 1 \\
\end{pmatrix}
\label{mats}
\end{equation}
with $\epsilon\in(0,1)$. Then we have:
\begin{align}
    &\lambda_{\textrm{max}}^{-1}\left[\frac{H_1\otimes H_2+ H_2\otimes H_1}{2(n_1n_2)^2}\right]=\frac{2}{1+\epsilon^2} (n_1n_2)^2\label{lmax}\\
    &\qquad > (n_1n_2)^2=\lambda_{\textrm{max}}^{-1}\left[\frac{H_1}{n_1^2}\right] \lambda_{\textrm{max}}^{-1}\left[\frac{H_2}{n_2^2}\right]
\end{align}
Therefore, for all $\epsilon\in(0,1)$, the maximum success probability of transforming both states at the same time is greater than the product of the maximum probabilities of each individual transformation by a factor of $2/(1+\epsilon^2)$. Note that the probability of the transformation depends on the norm of the final state, $n_i$, which in turn depends on $\epsilon$. We will discuss this further when we give a concrete example. First, we show that this maximum probability can be achieved via a multi-state probabilistic LOCC protocol. The LOCC protocol is given as follows: party $n$ performs a measurement with the following three measurement operators: 
\begin{align}
    M_1 &=   \sqrt{\frac{1}{1+\epsilon^2}}\left( \mathds{1}^{\otimes n-1} \otimes \left(h_1\otimes h_2\right)\right)\\
    M_2 &= \sqrt{\frac{1}{1+\epsilon^2}} \left(\mathds{1}^{\otimes n-1} \otimes \left(h_2\otimes h_1\right)\right)\\
    M_3&= \sqrt{ \mathds{1} - \left(M_1^\dagger M_1 + M_2^\dagger M_2\right)}
\end{align}
where $M_3$ is positive semi-definite by virtue of Eq.~(\ref{lmax}). The last step of the protocol depends on the outcome of this measurement. If party $n$ gets outcome one, all parties do nothing; if they get outcome two, all parties apply a local SWAP; if they get outcome three, the protocol fails. As a consequence, the multi-state regime can provide an advantage in probabilistic LOCC transformations.

Taking a concrete example, we now illustrate how powerful the multi-state regime is in probabilistic transformations. Consider the state (from \cite{GenericIsolated}):
\begin{align}
    \ket{\psi}=&\frac{1}{\sqrt{22}}\Big(\sqrt{7}\ket{00000} + \sqrt{5} \ket{11111} + \ket{00111}  \nonumber \\
    &+\ket{01011}+\ket{01101}+\ket{01110}+\ket{10011} \nonumber\\ 
    &+\ket{10101}+\ket{10110}+\ket{11001}+\ket{11010}\nonumber\\
    &+\ket{11100}\Big).
\end{align}
It can be verified that, in this case, $\mathcal{S}_{\psi^{\otimes2}}=\{\mathds{1},SWAP\}$. Moreover, $n_i=\sqrt{(1+\epsilon)/2}$ (see Eq.~\eqref{psii}). Remarkably, this means that the probability of transforming both states simultaneously is given by:
\begin{align}
      p_{\textrm{max}}\big(\ket{\psi}^{\otimes2} &\mapsto \ket{\psi_1} \otimes \ket{\psi_2}\big) =\frac{2}{1+\epsilon^2} \left(\frac{1+\epsilon}{2}\right)^2 \\
      &>\frac{1+\epsilon}{2}=p_{\textrm{max}}\left(\ket{\psi} \mapsto \ket{\psi_i}\right)
\end{align}
where the inequality is due to the fact $\epsilon\in(0,1)$. 

That is, the probability of transforming both states simultaneously, is greater than the probability of even just one single-state transformation $\forall \epsilon\in(0,1)$. For example, the multi-state transformation has the greatest advantage over the single-state transformation if $\epsilon=0.414$. In this case, the probability of transforming $\ket{\psi}\rightarrow_{LOCC}\ket{\psi_1}$ is 0.707 (therefore, the probability of independently transforming both $\ket{\psi}\rightarrow_{LOCC}\ket{\psi_1}$ and $\ket{\psi}\rightarrow_{LOCC}\ket{\psi_2}$ is 0.500). However, the probability of transforming both simultaneously, via a multi-state transformation, is $0.854$.

Note that, unlike the probabilistic transformations in \cite{GenericTrivStab}, such a transformation is not a One-Successful-Branch Protocol (OSBP). Moreover, in this example, $p_{\textrm{max}}$ cannot in fact be achieved with an OSBP. This is because in an OSBP, by definition, the successful branch correspond to only one measurement operator, which in turn must correspond to an element of the stabilizer. As the stabilizer only contains $\mathds{1}$ and SWAP, the $p_{\textrm{max}}$ of this branch is at most the product of the maximum probabilities of each transformation.

Finally, note that, if $\epsilon =0$, then $H_i$ become projectors, and thus $\ket{\psi_i}$ are no longer fully-entangled. Alternatively, $\epsilon=1$, implies a trivial (deterministic) transformation and the multi-state regime provides no advantage (see Appendix \ref{appendix: probabilistic} for further discussion for when the multi-state regime does not provide an advantage). 

In summary, multi-state transformations can provide an advantage in probabilistically transforming two states together. In fact, the maximum success probability of transforming two states at the same time can be greater than transforming just one of them.

\section{Bipartite multi-state LU Transformations}
\label{sec:level5}
Bipartite entanglement, with its single SLOCC class of fully-entangled states and its unique maximally entangled state (up to LUs), has a very different structure compared to multipartite entanglement. As a result, some of the properties of the multi-state setting that we highlighted in the previous section do not apply for bipartite states. For instance, as we have fixed the dimension of the target state (thus avoiding trivial transformations such as $\ket{\Phi_4^+}\ket{\Phi_4^+}\rightarrow_{LU}\ket{\Phi_8^+}\ket{\Phi_2^+}$), it is not possible to change the SLOCC class of the target state. Since this possibility of changing SLOCC class is one of the main features that make the multi-state regime so difficult to characterize, bipartite systems seem to be a more reasonable setting for trying to develop a systematic method to characterize the new transformations that emerge in the multi-state regime. As we saw in the previous section that even LU operations allow for a large set of new possible transformations, we consider in this section the simplest setting of two-state bipartite LU transformations. As we will see, even this very simple setting provides surprising results. 

Throughout this section, we denote the target state by $\ket{\mu}\in \mathbb{C}^{d_\mu}\otimes\mathbb{C}^{d_\mu}$. As LU operations cannot transform a state to another state in a Hilbert space with a lower dimension, we consider in this section an auxiliary state, $\ket{\lambda}$, belonging to a Hilbert space, $ \mathbb{C}^{d_\lambda}\otimes\mathbb{C}^{d_\lambda}$, of possibly larger dimension and ask whether there exist states $\ket{\bar{\mu}} \in \mathbb{C}^{d_\mu}\otimes \mathbb{C}^{d_\mu}$ and $\ket{\bar{\lambda}} \in \mathbb{C}^{d_\lambda}\otimes  \mathbb{C}^{d_\lambda}$ such that
\begin{align}
\ket{\mu} &\nrightarrow_{LU} \ket{\bar{\mu}}, \nonumber \\
\ket{\mu} \otimes  \ket{\lambda}&\rightarrow_{LU}  \ket{\bar{\mu}}\otimes\ket{\bar{\lambda}}.
\end{align}

Note that a related problem has been studied in~\cite{Kraft}. In that work, the aim was to find whether a high-dimensional bipartite or multipartite entangled state can be decomposed as a tensor product of lower-dimensional entangled states. Here, we start from a decomposable bipartite entangled state $\ket{\mu} \otimes  \ket{\lambda}$ and search for all possible other decompositions of that state, in order to find non-trivial transformations of the target state that can be achieved through two-state LU. Because these transformations only involve bipartite states, it is more useful to describe them in terms of Schmidt coefficients. Let  $\mu=(\mu_1, ...,\mu_{d_\mu})$ and $\lambda=(\lambda_1, ...,\lambda_{d_\lambda})$ denote the tuples of possibly degenerate, squared\footnote{Misusing notations for conciseness, we will refer to the squared Schmidt coefficients as Schmidt coefficients.} Schmidt coefficients of the states $\ket{\mu}$ and $\ket{\lambda}$, respectively. Without loss of generality, we may sort all Schmidt coefficients in descending order and assume they are strictly positive, as zero-valued Schmidt coefficients can be removed by redefining the dimensions. Consequently, the bipartite state $\ket{\mu}\otimes\ket{\lambda}$ has strictly positive Schmidt coefficients given by the tuple $\mu \otimes \lambda = (\mu_1 \lambda_1, \dots,\mu_1\lambda_{d_\lambda},\mu_2\lambda_1,\dots,\mu_{d_\mu}\lambda_{d_\lambda})$. Similarly, the final state must also have a tensor product structure and can therefore be characterized by the tuple of Schmidt coefficients $\bar{\mu} \otimes \bar{\lambda} $, with $\bar{\mu}=(\bar{\mu}_1,...,\bar{\mu}_{d_\mu})$ and $\bar{\lambda}=(\bar{\lambda}_1,\dots,\bar{\lambda}_{d_\lambda})$ the Schmidt vectors of the final target and auxiliary states.

Applying any local unitary obviously cannot change the Schmidt coefficients of the state $\ket{\mu} \otimes  \ket{\lambda}$; it can only change their order. As a consequence, an LU transformation from the state $\ket{\mu}\otimes\ket{\lambda}$ into the state $\ket{\bar{\mu}}\otimes\ket{\bar{\lambda}}$ corresponds to a non-trivial transformation of the target state $\ket{\mu}$ into an LU-inequivalent state $\ket{\bar{\mu}}$ if and only if there exist (ordered and normalized) sets of Schmidt coefficients $\bar{\mu} \neq \mu$, $ \lambda$ and $\bar{\lambda}$ such that the $(d_\mu d_\lambda)$-tuple $\bar{\mu}\otimes\bar{\lambda}$ corresponds to a non-trivial permutation of the initial tuple $\mu\otimes\lambda$ (see Fig. \ref{rearrangeSchmidt}). An upper bound for the number, $P$, of such permutations is given by (see e.g.~\cite{Kraft}):
\begin{equation}
P \leq \frac{(d_\mu \, d_\lambda)! }{\prod_{i=1}^{d_\mu} \prod_{j=1}^{d_\lambda} (i+j-1)}.
\end{equation}

\begin{figure}
\includegraphics[width=1.0\columnwidth]{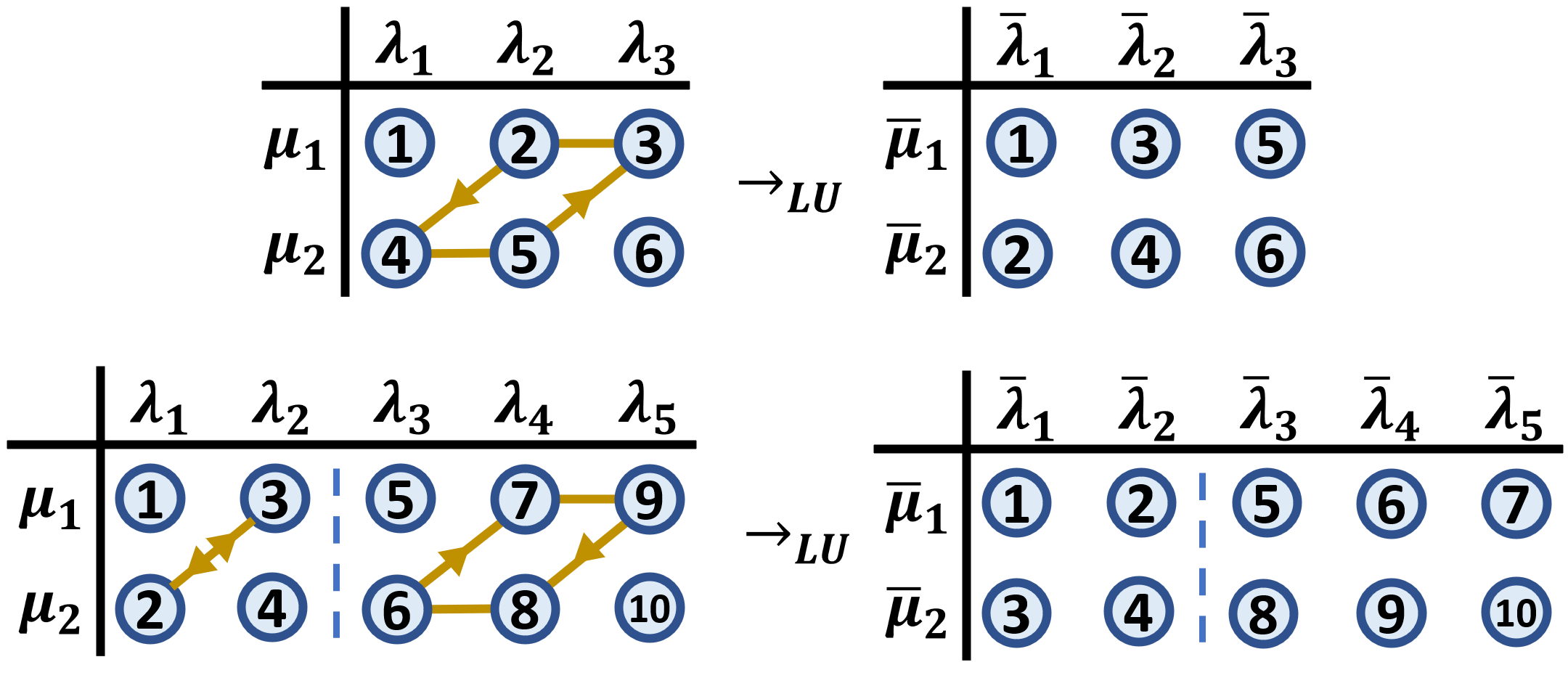}
\centering
\caption{Non-trivial multi-state LU transformations correspond to non-trivial permutations of the Schmidt coefficients which preserve the tensor product structure. Here two non-trivial permutations are depicted. The circles with numbers represent the multiplied Schmidt coefficients (i.e. $\lambda_i\mu_j$) sorted into descending order. The first diagram corresponds to the transformation in Observation \ref{nonhomsolution}. The diagram below it corresponds to a "direct sum solution", as discussed after Theorem \ref{qubitbipartitethm}, that builds on this first transformation.}
\label{rearrangeSchmidt}
\end{figure}

To describe these transformations, we introduce the following equivalence relation: for any two $n$-tuples, $A$ and $B$, we say $A\sim B$ if the tuples are identical up to reordering. For example, $(1,2,2,3)\sim(2,3,1,2)$.
With this notation, the transformation $\ket{\mu}\otimes \ket{\lambda} \rightarrow_{LU} \ket{\bar{\mu}}\otimes \ket{\bar{\lambda}} $ is possible if and only if
\begin{equation}
 \mu\otimes\lambda  \sim  \bar{\mu}\otimes\bar{\lambda} \label{equivrel}.
\end{equation}

From this observation, the problem we consider seems trivial, as it is simply equivalent to the problem of verifying the equivalence of two tuples. This problem is well-known and can, for example, be solved using the Elementary Symmetric Polynomials (ESPs) \cite{ESPsandMonics}. Generally speaking, ESPs are indeed useful tools to study functions of several variables that do not depend on the order of these variables, see for instance Ref. \cite{SandersGourCatalysis}. Given a tuple of $n$ variables, $x=(x_1,...,x_n)$, the elementary symmetric polynomial of degree $k$ over $x$, $e_k(x) \equiv e_k(x_1,...,x_n)$, is defined as follows \cite{ESPsandMonics}:
\begin{align}
\label{ESPDEF}
  e_k(x)&\equiv\sum_{i_1<i_2<...<i_k}^n x_{i_1} x_{i_2} .... x_{i_k},\ \forall k= 1,\dots,n.
\end{align}
In addition, we set $e_0=1$ and $e_k=0,\ \forall k>n$. The ESPs provide simple necessary and sufficient conditions for two tuples to be identical up to reordering: for any two $n$-tuples $x$ and $y$, $x \sim y$ if and only if all their elementary symmetric polynomials are equal, i.e. if and only if $e_i(x)=e_i(y),\ \forall i \in \{1,...,n\}$. As a consequence, bipartite two-state LU transformations can be studied in terms of ESPs over tuples of Schmidt coefficients. The necessary and sufficient condition of Eq.~\eqref{equivrel} for the transformation $\ket{\mu}\otimes\ket{\lambda}\rightarrow_{LU}\ket{\bar{\mu}}\otimes\ket{\bar{\lambda}}$ can equivalently be restated as
\begin{equation}
e_i(\mu \otimes \lambda) = e_i(\bar{\mu} \otimes \bar{\lambda}), \forall i \in \{1,\dots,d_\mu d_\lambda). \label{ESPequivrel}
\end{equation}

These equations always admit the trivial solution $\bar{\mu}\sim\mu$ and $\bar{\lambda}\sim\lambda$ (corresponding to the identity permutation). Moreover, if $d_\mu=d_\lambda$, we have another trivial solution: $\bar{\lambda}\sim\mu$ and $\bar{\mu}\sim\lambda$ (corresponding to a SWAP of the states $\ket{\mu}$ and $\ket{\lambda}$). In the following, we again disregard these trivial solutions and only look for solutions leading to non-trivial transformations. The set of polynomial equations we have to solve grows quickly with the dimensions of the bipartite systems we consider, as it contains $d_\mu d_\lambda$ equations with even degrees ranging from 2 to $2 d_\mu d_\lambda$. Determining all the solutions may therefore quickly become a difficult task. If we did not expect any non-trivial solutions for this set of equations, we could also use some powerful tools, such as the Positivstellensatz~\cite{positivestellansatz} from real algebraic geometry. This theorem indeed provides necessary and sufficient conditions for when a set of polynomial equalities, inequalities and inequations have no solutions. As we will show using a different approach, this method cannot directly be used here because there is an unexpectedly large set of different solutions.

We start, in the following section, by addressing this problem for the simplest case, in which the target state is restricted to a 2-qubit state. We fully characterize all the possible non-trivial transformations of this target state. Building on this result, we show that a bipartite target state can always be non-trivially transformed using an auxiliary bipartite state of higher dimension. We then also use our characterization of qubit states transformations to show that when the auxiliary state has the same dimension as the target state non-trivial transformations can also always be achieved, except if the target and auxiliary states are 2-qubit or 2-qutrit states (in which case we prove no non-trivial transformations exists).

\subsection{LU transformation in $ (\mathbb{C}^2\otimes\mathbb{C}^d)^{\otimes2}$ }
\label{sec:level6}
In this section, we restrict the target state $\ket{\mu}$, to be a 2-qubit state and use an arbitrary 2-qudit auxiliary state. We use the notations presented in the previous section for the tuples of Schmidt coefficients and investigate non-trivial transformations of the target state under two-state LU. We thus only fix the 2-tuple $\mu=(\mu_1,\mu_2)$ of the initial target state, and search all tuples $(\bar{\mu}_1,\bar{\mu}_2) \neq (\mu_1,\mu_2)$ and $(\bar{\lambda}_1,...,\bar{\lambda}_d) \neq (\lambda_1,...,\lambda_d)$ such that the $2d$-tuple $( \bar{\mu}_1 \bar{\lambda}_1, \dots \bar{\mu}_1 \bar{\lambda}_{d}, \bar{\mu}_2 \bar{\lambda}_1,...,\bar{\mu}_2 \bar{\lambda}_d )$ corresponds to a non-trivial permutation of the initial $2d$-tuple $( \mu_1 \lambda_1, \dots, \mu_1 \lambda_d, \mu_2 \lambda_1,...,\mu_2 \lambda_d )$. The only \textit{a priori} constraint on this permutation is that it should match the greatest and smallest elements of both sets, i.e.
\begin{equation}
\mu_1 \lambda_1 = \bar{\mu}_1 \lp_1 \textrm{ and } \mu_2 \lambda_d = \bar{\mu}_2 \lp_d.
\label{apriori}
\end{equation}

For the others, we have to find a permutation, $\pi \in S_{2d-2}$, such that the chain of equations
\begin{equation}
\begin{array}{cccccc}
 \mu_1 \lambda_1   &   &\mu_1 \lambda_d  & \mu_2 \lambda_1 &  &\mu_2 \lambda_d  \\
\rotatebox{90}{=} & \cdots & \rotatebox{90}{=}  & \rotatebox{90}{=} & \cdots & \rotatebox{90}{=} \\
\bar{\mu}_1 \lp_1   &   & \pi(\bar{\mu}_1  \lp_d)  & \pi(\bar{\mu}_2  \lp_1)   &   & \bar{\mu}_2  \lp_d
\end{array}
\label{chains}
\end{equation}
has a non-trivial solution.

In the next subsections, we characterize the transformations on the two-qubit target state that can be achieved in this setting. Because they lead to highly different results, we treat separately the case where $d$ is even and the case where $d$ is odd.

\subsubsection{Characterization of the non-trivial transformations for even $d$}

For any even $d$, it is always possible to consider a two-qudit auxiliary state $\ket{\lambda}$, that is the tensor product of a two-qubit state $\ket{\lambda_1}$ and a state $\ket{\lambda_2}\in \mathbb{C}^{d/2} \otimes \mathbb{C}^{d/2}$ (if $d=2$, the state $\ket{\lambda_1}$ is simply $\ket{\lambda}$ and there is no state $\ket{\lambda_2}$). Using the LU operation to implement a SWAP between the two-qubit states, $\ket{\mu}$ and $\ket{\lambda_1}$, and the identity in the other dimensions (if any), we see that LU operations allow for an arbitrary transformation of the initial two-qubit target state $\ket{\mu}$:
\begin{equation}
  \ket{\mu} \otimes \left(\ket{\lambda_1}\otimes\ket{\lambda_2}\right) \rightarrow_{LU} \ket{\lambda_1} \otimes \left(\ket{\mu}\otimes\ket{\lambda_2}\right).
\end{equation}
Note that such a transformation is a particular case of the ``sub-SWAP'' transformation introduced in the multipartite case (see Fig.~\ref{changingSLOCCclass}). Moreover, although we presented here a transformation involving biseparable auxiliary states $|\lambda\rangle$ and $\ket{\bar{\lambda}}$, we could equivalently consider a transformation involving an auxiliary state for which the states of the $2$-dimensional and $(d/2)$-dimensional sub-spaces have been previously (and subsequently) entangled using LU operations acting on the $d$-level subspace only. Adding these extra LUs to the permutation realizing the sub-SWAP yields a less obvious LU transformation. For the case $d_\mu=2,d_\lambda=4$, by considering all valid permutations, it is easy to see that sub-SWAP solutions are (up to LU) the only solutions.

\subsubsection{Characterization of the non-trivial transformations for odd $d$}
If $d$ is odd, the previous construction cannot be applied. We therefore expect more constraints on the possible transformations, and, in this case, it is unlikely that we can achieve an arbitrary transformation of the 2-qubit target state $\ket{\mu}$. From now on, we characterize the initial two-qubit target state by the ratio $a = \frac{\mu_2}{\mu_1} \in (0,1]$. Similarly, the ratio $\bar{a}=\frac{\bar{\mu}_2}{\bar{\mu}_1}$ characterizes the final two-qubit target state.

As already mentioned, we search here for transformations that transform the state $\ket{\mu}$ into an LU-inequivalent state $\ket{\bar{\mu}}$, i.e. with $\bar{a} \neq a$. If $|\mu\rangle$ is a maximally entangled two-qubit state, then $a=1$ and all the Schmidt coefficients of $|\mu\rangle \otimes |\lambda\rangle$ have an even degeneracy. If $|\bar{\mu}\rangle$ is not maximally entangled, however, its Schmidt coefficients are distinct and those of $|\bar{\mu}\rangle \otimes |\bar{\lambda}\rangle$ cannot all have an even degeneracy since $d$ is odd. As a consequence, when starting with a maximally entangled 2-qubit state, we can only achieve a trivial transformation. This is why we exclude in the following the case $a=1$. This is the first constraint resulting from the fact that $d$ is odd.

As the transformation is reversible, we can focus on transformations with $a>\bar{a}$, which correspond to decreasing the entanglement of the 2-qubit target state after the LU operation. The transformations of the 2-qubit target state that can be achieved within this context are characterized in the following theorem.

\begin{thm}
\label{qubitbipartitethm}
Let $\ket{\mu},\ket{\bar{\mu}}\in\mathbb{C}^2\otimes\mathbb{C}^2$ be 2-qubit states with sets of Schmidt coefficients respectively given by $\left( \frac{1}{1+a},\frac{a}{1+a} \right)$ and $\left( \frac{1}{1+\bar{a}},\frac{\bar{a}}{1+\bar{a}} \right)$, with $a,\bar{a} \in (0,1)$ such that $a > \bar{a}$. Given an odd number $d \geq 3$, the following two statements are equivalent:
\begin{align*}
&(i) \textrm{ There exist states } \ket{\lambda},\ket{\bar{\lambda}}\in\mathbb{C}^d\otimes\mathbb{C}^d \quad \quad \\ & \quad \; \textrm{ such that } \ket{\mu}\otimes\ket{\lambda}\rightarrow_{LU}\ket{\bar{\mu}}\otimes\ket{\bar{\lambda}}. \\
&(ii) \; \bar{a}=a^{d_1/d_2} \textrm{ for two odd numbers } \\ & \quad \; \; d_1,d_2\in \mathbb{N} \text{ satisfying }\ d \ge d_1 > d_2 \ge 1. \nonumber
\end{align*}
\label{mainThm}
\end{thm}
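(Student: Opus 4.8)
The plan is to reduce the existence of a non-trivial LU transformation to a purely combinatorial condition on how the tuple $\mu\otimes\lambda$ can be re-factored as $\bar\mu\otimes\bar\lambda$, and then exploit the fact that $\ket{\mu}$ is a qubit so that $\mu\otimes\lambda$ is just two scaled copies of $\lambda$ interleaved. Concretely, writing the Schmidt coefficients of $\ket{\mu}$ as proportional to $(1,a)$ and those of $\ket{\bar\mu}$ as proportional to $(1,\bar a)$, every Schmidt coefficient of $\ket{\mu}\otimes\ket{\lambda}$ has the form $\lambda_i$ or $a\,\lambda_i$ (up to the common normalisation), and similarly every coefficient of $\ket{\bar\mu}\otimes\ket{\bar\lambda}$ has the form $\bar\lambda_j$ or $\bar a\,\bar\lambda_j$. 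The matching condition in Eq.~\eqref{chains} therefore becomes a statement that the multiset $\{\lambda_i,a\lambda_i\}$ equals the multiset $\{\bar\lambda_j,\bar a\bar\lambda_j\}$. I would first take logarithms, setting $x_i=\log\lambda_i$, $\alpha=\log a<0$, and analogously for the barred quantities, turning multiplicative relations into additive ones on the real line.

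For the direction $(ii)\Rightarrow(i)$ I would give an explicit construction. Given $\bar a = a^{d_1/d_2}$ with $d\ge d_1>d_2\ge 1$ both odd, I would build an auxiliary Schmidt spectrum supported on a geometric-type progression in the variable $a^{1/d_2}$, arranged so that multiplying the whole spectrum by $a$ (the effect of the qubit $\ket{\mu}$) reproduces, after rescaling by $\bar a$, the same interleaving pattern generated by $\ket{\bar\mu}$. The natural ansatz is to take $\log\lambda_i$ to lie on an arithmetic progression of common difference $\tfrac{1}{d_2}\alpha$ (equivalently $\lambda_i\propto a^{i/d_2}$), choose the length and offsets so that the union $\{x_i\}\cup\{x_i+\alpha\}$ is closed under the corresponding shift by $\tfrac{d_1}{d_2}\alpha$, and verify that the resulting $2d$ values split into the two arithmetic sub-progressions demanded by $\bar\mu\otimes\bar\lambda$. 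Both $d_1,d_2$ odd and $d_1\le d$ are exactly the conditions that make such a progression fit into a length-$d$ auxiliary state with the correct parity of degeneracies; I would check normalisability and positivity at the end, which is routine once the combinatorial skeleton is fixed.

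The harder direction is $(i)\Rightarrow(ii)$, and this is where I expect the main obstacle. Here I must show that \emph{no other} ratios $\bar a$ can arise. The strategy is to look at the ordered chain of equalities in Eq.~\eqref{chains} and track the permutation $\pi$ as a walk. Because $\ket{\mu}$ is a qubit, each Schmidt value of the product is obtained from a unique value of $\lambda$ by multiplying by $1$ or $a$; matching the two sorted sequences forces a rigid ``ladder'' of equalities of the form $\bar\lambda_{j}=a^{?}\,\lambda_{i}$ that propagates along the spectrum. Following this ladder from the top element $\mu_1\lambda_1=\bar\mu_1\bar\lambda_1$ down to the bottom $\mu_2\lambda_d=\bar\mu_2\bar\lambda_d$ imposed in Eq.~\eqref{apriori}, I would extract a closed relation forcing $\bar a$ to be an integer power of $a^{1/d_2}$ for some integer $d_2$ bounded by $d$, i.e. $\bar a=a^{d_1/d_2}$. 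The parity constraints (both $d_1,d_2$ odd) should emerge from a counting/degeneracy argument: since $d$ is odd, the total multiplicity of each distinct value in $\mu\otimes\lambda$ has a controlled parity, and matching these parities across the factorisation rules out even $d_1$ or even $d_2$. The delicate point is to show the ladder cannot branch or close prematurely into a shorter cycle that would permit extraneous exponents; I would handle this by arguing that any admissible $\pi$ decomposes into cycles all of the same step length, and that consistency of the two boundary conditions in Eq.~\eqref{apriori} pins the step to $\tfrac{1}{d_2}\alpha$ with $d_2$ odd and $d_1\le d$.
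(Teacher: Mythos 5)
Your overall architecture (reduce to a multiset re-factorisation $\mu\otimes\lambda\sim\bar\mu\otimes\bar\lambda$, pass to logarithms, prove necessity by chasing a chain of equalities and sufficiency by an explicit construction) matches the paper's, but both halves as you describe them have concrete gaps.

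For $(ii)\Rightarrow(i)$, your ansatz $\lambda_i\propto a^{i/d_2}$ with the $2d$ product values splitting into ``two arithmetic sub-progressions'' fails already at $d=d_1=5$, $d_2=3$. There $\bar a=a^{5/3}$, and your $\lambda=(1,a^{1/3},a^{2/3},a,a^{4/3})$ gives $\mu\otimes\lambda$ the exponent multiset $\{0,\tfrac13,\tfrac23,1,1,\tfrac43,\tfrac43,\tfrac53,2,\tfrac73\}$; any splitting into $\bar\lambda\cup(\bar\lambda+\tfrac53)$ forces $-\tfrac23$ into the multiset, a contradiction. A working solution for this case is $\lambda=(1,a^{2/3},a^{4/3},a^{2},a^{8/3})$ with $\bar\lambda=(1,a^{2/3},a,a^{4/3},a^{2})$ --- note $\bar\lambda$ is \emph{not} an arithmetic progression in the exponent. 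The paper's construction is a union of three geometric progressions in $a$ and $b=a^{d/d_2-1}$ chosen so that exactly one element is exchanged between subsets under tensoring with $(1,a)$ versus $(1,\bar a)$; this is the missing combinatorial content. You also do not address $d_1<d$, which requires a genuinely separate idea (a direct sum of a $d_1$-dimensional solution with $(d-d_1)/2$ two-dimensional blocks that are simply swapped).

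For $(i)\Rightarrow(ii)$, two of your claims are not substantiated and one is false. The assertion that any admissible permutation ``decomposes into cycles all of the same step length'' is contradicted by the paper's own $d=5$ example with gap pattern $k=\{0,1,1,-1,0\}$, where the cycle mixes gaps $a^{d/d_2+1}$, $a$, and $a^{1-d/d_2}$ of different sizes; ruling out spurious exponents requires eliminating the $\bar\lambda_i$ from the system $a=(\bar\lambda_{x_i}/\bar\lambda_{y_i})\,\bar a^{k_i}$, using that each $\bar\lambda_i$ occurs exactly twice, and handling possible equation inversions (which is what produces the exponent $d-2r$ rather than $d$). More importantly, your parity argument --- ``the total multiplicity of each distinct value in $\mu\otimes\lambda$ has a controlled parity'' --- is not a proof and is not how the parity of $d_2$ arises: generic spectra have no degeneracies at all. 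The actual reason $d_2$ is odd is a counting argument on the exponents $k_i\in\{-1,0,1\}$: since $\bar\mu_1$ and $\bar\mu_2$ each appear exactly $d$ times across the $d$ ratio equations and each $k_i=\pm1$ consumes one of each, the $k_i=0$ entries must come in pairs, so the number of nonzero $k_i$ is odd and $\sum\pm k_i$ is odd; combined with $k_1=k_d=0$ this also gives $d_2\le d-2<d_1$. Without this step your proof does not exclude, say, $\bar a=a^{5/2}$.
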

\begin{proof}
\emph{(only if)}  If an LU transformation is possible, then there necessarily exists a permutation $\pi$ relating the Schmidt coefficients of the initial and final product states as shown in Eq.~\eqref{chains}. We begin by using the upper line in Eq.~\eqref{chains} to compute the $d$ ratios $\frac{\mu_2 \lambda_i}{\mu_1 \lambda_i}=a$ for all $i\in \{1,\dots,d \}$ and write equalities with the corresponding ratios from the bottom line. We obtain a set of $d$ equations of the form
\begin{equation}
a=\frac{\bar{\lambda}_{x_i}}{\bar{\lambda}_{y_i}} \bar{a}^{k_i}, \;  \forall \; i=1,\dots,d,
\label{chainAVersion}
\end{equation}
with $k_i \in \{ -1,0,1 \}$ and $x_i,y_i \in \{1,\dots,d\}$. All the non-trivial transformations can be found by solving this set of equations. Because we need some elements of the solution for later constructions, we now provide an explicit method to find the non-trivial solutions of these equations.

Because the ends of the chain of equalities~\eqref{chains} are fixed, we know that $y_1=1$ and $x_d=d$, and that $k_1,k_d \ne -1$. Moreover, as we consider sorted Schmidt coefficients, the ratios $\bar{\lambda}_{x_1}/\bar{\lambda}_{1}$ and $\bar{\lambda}_{d}/\bar{\lambda}_{y_d}$ are at most equal to 1. Therefore, because we assume $a>\bar{a}$, we must in fact have $k_1=k_d=0$. We must also have $x_i \neq y_i$ for all values of $i$, as  otherwise it would imply $a= (\bar{a})^{k_i},\ k_i\in\{-1,0,1\}$, which necessarily leads to a trivial solution with $a=\bar{a}$. Consequently, in the $d$ equations, each variable $\bar{\lambda}_i$ ($i \in \{1,\dots,d\}$) must appear precisely twice, in two different equations. We now describe a method to eliminate all these variables, yielding the relation between $a$ and $\bar{a}$ stated in the theorem.

We start by selecting the two equations containing $\lp_1$. If $\lp_1$ appears as a numerator in one equation and as a denominator in the other, we multiply these equations side by side and replace the two initial equations by the resulting equation. In this way, the resulting set of $d-1$ equations does not contain the variable $\lp_1$ anymore. If $\lp_1$ appears in both equations as a  numerator or as a denominator, we invert both sides of one of these equations and proceed as explained above to get a set of $d-1$ equations that do not involve the variable $\lp_1$. Repeating this process at most $d-1$ times\footnote{For some configurations of the $\bar{\lambda}_i$ variables, two variables could be eliminated in a single step (as is always the case for the last step), yielding an equality between some power of $a$ and some power of $\bar{a}$. If there are still some $\bar{\lambda}_i$ variables to eliminate, another relation between $a$ and $\bar{a}$ can be obtained by following the same procedure. In such case, the system of equations only has a non-trivial solution if all the relations between $a$ and $\bar{a}$ are equivalent. Multiplying them all side by side, we obtain an equation that has the same form (see Eq. \eqref{exponentsEquation}) as in the general case.}, we can eliminate all the $\bar{\lambda}_i$ variables, yielding an equality of the general form

\label{beginChain}

\begin{equation}
a^{d-2r} = (\bar{a})^{\sum_{i=1}^{d} \pm k_i},
\label{exponentsEquation}
\end{equation}

where $r\geq0$ is a integer related to the effective number of equation inversions that have been performed. If there is no inversion, the exponent of $a$ is $d$. It is otherwise decreased by $2$ for each inversion, and the corresponding exponent $k_i$ in the right-hand side gets a minus sign.

The exponent of $a$ is obviously odd and at most equal to $d$. We now show that the exponent associated to $\bar{a}$ has to be odd as well. Any exponent $k_i=0$ stems originally from the quotient of two $\bar{\mu}_1$ or two $\bar{\mu}_2$ when extracting Eqs.~\eqref{chainAVersion} from the chain~\eqref{chains}. Because $\bar{\mu}_1$ and $\bar{\mu}_2$ both appear precisely $d$ times in these equations and the other exponents $k_i=\pm 1$ consume exactly one $\bar{\mu}_1$ and one $\bar{\mu}_2$, exponents $k_i=0$ have to come in pairs (one corresponding to $\bar{\mu}_1/\bar{\mu}_1$ and the other one to $\bar{\mu}_2/\bar{\mu}_2$). Therefore, $\sum_{i=1}^{d} \pm k_i$ is a sum of an odd number of 1 or $-1$, which is always an odd integer. Furthermore, since we have $k_1 = k_d =0$, this sum is at most equal to $d-2$. As a consequence, we must have $a^{d_1} = (\bar{a})^{d_2}$ with $d_1 \leq d$ and  $d_2 \leq d-2 $ two odd integers. If $d_1$ and $d_2$ have different signs, then $a^{|d_1|} (\bar{a})^{|d_2|} =1$, which for $a,\bar{a} \in (0,1)$ leads to a contradiction. We can thus consider them both to be positive and, because we consider transformations with $a>\bar{a}$, we have $d_1>d_2$. This concludes the proof of the necessary condition.

\emph{(if)} To prove the sufficient condition, we constructively show how to build states $|\lambda\rangle$ and $\ket{\bar{\lambda}}$ enabling the transformation $\ket{\mu}\otimes\ket{\lambda}\rightarrow_{LU}\ket{\bar{\mu}}\otimes\ket{\bar{\lambda}}$, with $\bar{a}=a^{d_1/d_2}$, for any odd $d_1$ and $d_2$ satisfying $d\geq d_1  >d_2 \geq1$. We divide the proof into the following two cases: $(i)$ $d_1=d$ and $(ii)$ $d_1<d$.\\

$(i)$ Writing $b=a^{\frac{d}{d_2}-1}$, the unnormalized sets of Schmidt coefficients of $\ket{\mu}$ and $\ket{\bar{\mu}}$ read $\mu = \{1,a\}$ and $ \bar{\mu} = \{1,ab\}$, respectively\footnote{Note that we use here set notations instead of tuples for convenience. Some elements in these sets might however be degenerate.}. For the Schmidt coefficients of $\ket{\lambda}$ and $\ket{\bar{\lambda}}$, we choose the sets
\begin{equation}
\{ \lambda_i \} = \{ a^i \}_{\textrm{even }i=0}^{d-d_2-2} \cup \{ a^i b \}_{\textrm{even }i=2}^{d-d_2-2} \cup \{ b^i \}_{i=1}^{d_2+1}
\end{equation}
and
\begin{equation}
\{ \bar{\lambda}_i \} = \{ a^i \}_{i=0}^{d-d_2-1} \cup \{ b^i \}_{i=1}^{d_2},
\end{equation}
respectively. To show that these sets correspond to a valid LU transformation, we must show that the tensor product $\mu \otimes \lambda$ gives the same set as $\bar{\mu} \otimes \bar{\lambda}$. These tensor products read respectively
\begin{equation}
 \{ a^i \}_{i=0}^{d-d_2-1} \cup \{ a^i b \}_{i=2}^{d-d_2-1} \cup \{ b^i \}_{i=1}^{d_2+1}  \cup \{ a b^i \}_{i=1}^{d_2+1},
\end{equation}
and
\begin{align}
\{ a^i \}_{i=0}^{d-d_2-1} \cup  \{ a^i b \}_{i=1}^{d-d_2} \cup \{ b^i \}_{i=1}^{d_2} \cup \{ a b^i \}_{i=2}^{d_2+1}.
\end{align}

The only difference between these sets is that the first one contains the element $b^{d_2+1}$ (in its third subset) while, in the second set, this is replaced by  $a^{d-d_2}b$ (in the second subset). However, since $a^{d-d_2} = b^{d_2}$, these elements are in fact equal. This concludes the proof of case ($i$). It should be stressed here that the solution we built for $\ket{\lambda}$ and $\ket{\bar{\lambda}}$ is not necessarily the only solution allowing a transformation from $|\mu\rangle$ to $\ket{\bar{\mu}}$. The idea behind this construction and how to build other solutions will be explained in more details in the examples following the proof of the theorem.\\

$(ii)$ If $d_1$ does not take the maximal value, we show that we can build a solution using a solution from case $(i)$ for a lower dimension. Indeed, as both $d$ and $d_1$ must be odd, the condition $d_1<d$ implies that there exists an integer $k>0$ such that $d_1+2k=d$. We can then divide the $d$-tuple of Schmidt coefficients of $ \ket{\lambda}$ into $k$ 2-tuples $\lambda_{(2)}^i$ and one $d_1$-tuple $\lambda_{(d_1)}$. In $\lambda_{(d_1)}$, we choose Schmidt coefficients of an auxiliary state allowing a transformation from the initial 2-qubit state $\ket{\mu}$ to the final 2-qubit state $\ket{\bar{\mu}}$, which has $\bar{a} = a^{d_1/d_2}$. From case $(i)$, we know that this can indeed be achieved for any odd $d_2$ satisfying $1\leq d_2<d_1$. For the $k$ 2-tuples $\lambda^i_{(2)}$, we simply choose Schmidt coefficients corresponding to the final 2-qubit state $\ket{\bar{\mu}}$, i.e. $\lambda^i_{(2)} = \bar{\mu}, \; \forall i =1,\dots,k$.

Using an LU that, in the corresponding subspaces, has the effect of swapping each 2-tuple $\lambda^i_{(2)}$ with the 2-tuple $\mu$ of the initial 2-qubit state, and performs the non-trivial transformation from case $(i)$ in the $d_1$-dimensional subspace, we achieve a transformation that has the desired final 2-qubit state $\ket{\bar{\mu}}$. Regarding the final auxiliary state, $\bar{\lambda}$ has the same structure as $\lambda$, but with $\bar{\lambda}^i_{(2)} = \mu, \; \forall i =1,\dots,k,$ and $\bar{\lambda}_{(d_1)}$ corresponding to the $d_1$-tuple of Schmidt coefficients of the final auxiliary state of the transformation performed in the $d_1$-dimensional subspace.

This concludes the proof of case $(ii)$, and with it the proof of the sufficient part of the theorem.
\end{proof}

The construction used to solve the case $d_1 <d$ in the sufficient part of the proof is a useful tool allowing one to embed a known solution into a larger space. Because this type of solution consists in dividing the $d$-level space into some direct sum of different subspaces, we call these solutions "direct-sum solutions" (see Fig. \ref{rearrangeSchmidt}). We detail now the idea behind the constructive proof given above for the other case ($d_1=d$) and, through explicit examples, illustrate the fact that several auxiliary states can be used for a given 2-qubit state transformation.

Theorem~\ref{mainThm} shows that for any non-trivial transformation we can express $\bar{a}$ as a power of $a$. As a consequence, the ratios of Schmidt coefficients appearing in Eqs.~\eqref{chainAVersion} correspond also to some powers of $a$. This suggests that, up to some normalization factor, we can express the Schmidt coefficients themselves as powers of $a$. In this sense, Eqs.~\eqref{chainAVersion} characterize the ``multiplicative gaps'', in terms of power of $a$, between couples of Schmidt coefficients $(\lp_{x_i},\lp_{x_j})$. Because the parameter $k_i$ in these equations can only take three different values, we have only three possible gaps. From the relation $\bar{a} = a^{d/d_2}$ (remember that we assume here $d_1=d$), we obtain the following explicit expressions for these gaps:
\begin{equation}
\begin{array}{ll}
\textrm{If } k_i=-1, & \frac{\lp_{x_i}}{\lp_{y_i}} = a^{\frac{d}{d_2} + 1} \equiv g_{++}, \\
\textrm{If } k_i=0, & \frac{\lp_{x_i}}{\lp_{y_i}} = a \equiv g_+ , \\
\textrm{If } k_i=1, & \frac{\lp_{x_i}}{\lp_{y_i}} = a^{1-\frac{d}{d_2}} \equiv g_-.
\end{array}
\label{giantsteps}
\end{equation}

The gaps $g_{++}$ and $g_+$ correspond to a positive power of $a$ ($g_{++}$ to a greater power of $a$ than $g_+$), while $g_-$ corresponds to a negative power of $a$. 

In the case of a transformation with $d_1$ taking the maximal value $d$, the parameter $r$ in Eq.~\eqref{exponentsEquation} has to be zero (there is no equation inversion to perform) and we have $\sum_{i=2}^{d-1} k_i = d_2$. In this case, it is easy to see that the product of the positive gaps is precisely equal to the inverse of the product of the negative gaps. As a consequence, we can use these gaps to arrange the Schmidt coefficients of the auxiliary state $|\bar{\lambda}\rangle$ in closed cycles (see for instance Fig.~\ref{chainpic1}). To build such a cycle, one starts with the largest Schmidt coefficient, i.e. $\bar{\lambda}_1$, and select the equation from the list~\eqref{chainAVersion} in which $\bar{\lambda}_1$ appears in the denominator. The Schmidt coefficient appearing in the numerator in this equation, say $\bar{\lambda}_x$, is then equal to $\bar{\lambda}_1$ multiplied by some gap. Since $\bar{\lambda}_1$ is the greatest Schmidt coefficient, this gap has to be positive. As there is no equation inversion when $d_1=d$, $\bar{\lambda}_x$ has to appear in the denominator of some other equation, which can be used to relate $\bar{\lambda}_x$ to another Schmidt coefficient via a positive gap. Continuing to follow this list of equations, we arrive eventually at the equation in which $\bar{\lambda}_1$ appears in the numerator. Again, because $\bar{\lambda}_1$ is the greatest Schmidt coefficient, this equation is necessarily associated to a negative gap which closes the cycle. If there are still equations left in the list, we start another cycle of Schmidt coefficients. Note that in the case of multiple cycles, each cycle must produce the same relation between $a$ and $\bar{a}$, as we otherwise have $a=\bar{a}=1$. As we illustrate now for $d_1=d=5$ and $d_2=1$, this can be used to give a schematic picture of all transformations turning the initial (unnormalized) 2-qubit Schmidt vector $(1,a)$ into $(1,a^5)$.

In this case, the three possible gaps given in Eq.~\eqref{giantsteps} read
\begin{equation}
\begin{array}{ll}
g_{++} = a^6, \\
g_+ = a, \\
g_- = a^{-4} ,
\end{array}
\label{giantstepsd5}
\end{equation}
and there are, up to reordering, only two sets $\{k_i\}_{i=1}^{5}$ such that $\sum_{i=2}^{4} k_i=d_2=1$ (recall that we always have $k_1,k_5=0$):
\begin{equation}
k_a=\{0,1,0,0,0\}, \; \;  k_b=\{0,1,1,-1,0\}.
\end{equation}

Let us first consider the case $k_a=\{0,1,0,0,0\}$. In this case, there is no gap $g_{++}$ and only one gap $g_-$. As a consequence, the only cycle that we can create starts with the largest Schmidt coefficient $\lambda_1$, then uses all the four positive gaps to go through the remaining four Schmidt coefficients and closes the cycle using the negative gap (see Fig.~\ref{chainpic1}).

\begin{figure}[h]
\begin{picture}(200,50)
	\put(2,-1) { \includegraphics[scale=0.79]{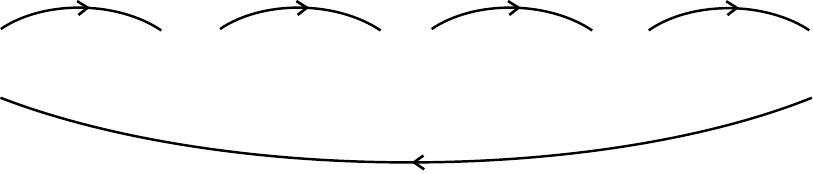} }
	\put(66,44){ $g_+$ }
	\put(114,44){ $g_+$ }
	\put(15,44){ $g_+$ }
	\put(163,44){ $g_+$ }
	\put(94,-7){ $g_-$ }
	\put(-9,22){ $\bar{\lambda}_1$ }
	\put(42,22){ $\bar{\lambda}_2$ }
	\put(92,22){ $\bar{\lambda}_3$ }
	\put(142,22){ $\bar{\lambda}_4$ }
	\put(190,22){ $\bar{\lambda}_5$ }
\end{picture}
\centering
\caption{Cycle associated with $k_a=(0,1,0,0,0)$}
\label{chainpic1}
\end{figure}

All Schmidt coefficients are expressed as a function of $\bar{\lambda}_1$ (which accounts for the normalization). Setting it to 1, the corresponding (unnormalized) Schmidt vector for $|\bar{\lambda}\rangle$ reads
\begin{equation}
 \bar{\lambda} = (1,a,a^2,a^3,a^4).
\end{equation}
The Schmidt vector of $|\lambda\rangle$ can be deduced by considering the equation $\mu \otimes \lambda \sim \bar{\mu} \otimes \bar{\lambda}$. In summary, using (unnormalized) Schmidt vectors to denote the bipartite states, we have the transformation:
\begin{align}
&(1,a) \otimes (1,a^2,a^4,a^6,a^8) \nonumber\\
&\qquad \qquad \rightarrow_{LU}
(1,a^5) \otimes (1,a,a^2,a^3,a^4) \label{stdnontriv1}
\end{align}

We now turn to the second case $k_b=\{0,1,1,-1,0\}$. Because we have here the two types of positive gaps, $g_+$ and $g_{++}$, we can build several cycles, see for instance Figs.~\ref{chainpic2p} and \ref{chainpic2}. However, not every cycle corresponds to a valid non-trivial transformation. Indeed, inverting the gaps of the cycle in Fig.~\ref{chainpic2p} to get equations of the form~\eqref{chainAVersion}, we get:
\begin{equation}
a=\frac{\bar{\lambda}_3}{\bar{\lambda}_1} \: \bar{a}^{-1} = \frac{\bar{\lambda}_1}{\bar{\lambda}_2} \: \bar{a} = \frac{\bar{\lambda}_4}{\bar{\lambda}_3} = \frac{\bar{\lambda}_5}{\bar{\lambda}_4} = \frac{\bar{\lambda}_2}{\bar{\lambda}_5} \: \bar{a}. \label{eq:equalities90}
\end{equation}
These equations are compatible with the relation $a^5=\bar{a}$ but, writing explicitly $\bar{a}=\frac{\bar{\mu}_2}{\bar{\mu}_1}$, we see that the Schmidt coefficient $\bar{\lambda}_1 \bar{\mu}_2$ appears twice in this set of equations, whereas $\bar{\lambda}_1 \bar{\mu}_1$ and $\bar{\lambda}_1 \bar{\mu}_2$ should both occur precisely once. As a consequence, these equations have a solution only if $\bar{\mu}_1 = \bar{\mu}_2$, showing that this cycle corresponds to a trivial transformation with $a=\bar{a}=1$.

 The cycle depicted in Fig.~\ref{chainpic2} corresponds to the only non-trivial transformation in the case $k_b=\{0,1,1,-1,0 \}$. Indeed, as noted in the proof of Theorem~\ref{mainThm}, a non-trivial transformation necessarily implies $a= \frac{\bar{\lambda}_{x_1}}{\bar{\lambda}_1}  = \frac{\bar{\lambda}_d}{\bar{\lambda}_{y_d}} $ for some $x_1 \in \{ 2,\dots , d \}$ and $y_d \in \{1,\dots, d-1 \}$. As we consider an unnormalized Schmidt vector, we can without loss of generality set $\bar{\lambda}_1=1$. This implies $\bar{\lambda}_{x_1} = a$. As, in this case, all gaps correspond to integer powers of $a$ (which is not always the case as we illustrate later), and we can here only have a single cycle (as $d_1$ has the largest possible value), there cannot be another Schmidt coefficient between $\bar{\lambda}_1$ and $\bar{\lambda}_{x_1}$. We thus have $x_1=2$ and $\bar{\lambda}_2=a$. For a similar reason we must have $\frac{\bar{\lambda}_5}{\lp_4}=a$. With these two constraints, it follows that the cycle in Fig.~\ref{chainpic2} is the only possible solution. It corresponds to the transformation
\begin{align}
& (1,a) \otimes (1,a^4,a^6,a^8,a^{12})  \notag\\
&\qquad \qquad \rightarrow_{LU} (1,a^5) \otimes  (1,a,a^4,a^7,a^8) .
\end{align}

\begin{figure}[h]
\begin{picture}(200,50)
	\put(0,7) { \includegraphics[scale=0.8]{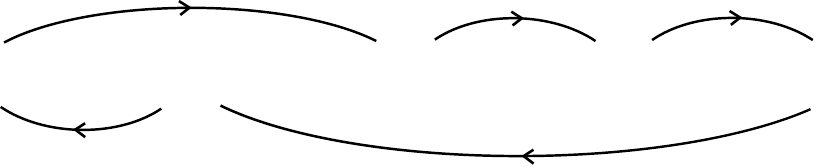} }
	\put(34,52){ $g_{++}$ }
	\put(113,48){ $g_+$ }
	\put(163,48){ $g_+$ }
	\put(121,-1){ $g_-$ }
	\put(16,6){ $g_-$ }
	\put(-7,25){ $\bar{\lambda}_1$ }
	\put(39,25){ $\bar{\lambda}_2$ }
	\put(91,25){ $\bar{\lambda}_3$ }
	\put(142,25){ $\bar{\lambda}_4$ }
	\put(188,25){ $\bar{\lambda}_5$ }
	\end{picture}
\centering
\caption{Example of cycle associated with $k_b=\{0,1,1,-1,0\}$ that leads to a trivial transformation with $a=\bar{a}=1$.}
\label{chainpic2p}
\end{figure}

\begin{figure}[h]
\begin{picture}(200,50)
	\put(0,7) { \includegraphics[scale=0.8]{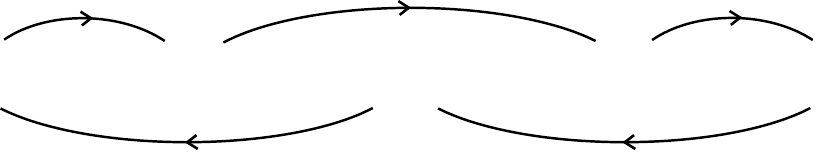} }
	\put(85,48){ $g_{++}$ }
	\put(15,45){ $g_+$ }
	\put(163,45){ $g_+$ }
	\put(142,0){ $g_-$ }
	\put(41,0){ $g_-$ }
	\put(-7,22){ $\bar{\lambda}_1$ }
	\put(41,22){ $\bar{\lambda}_2$ }
	\put(91,22){ $\bar{\lambda}_3$ }
	\put(142,22){ $\bar{\lambda}_4$ }
	\put(190,22){ $\bar{\lambda}_5$ }
	\end{picture}
\centering
\caption{This cycle is the only one leading to a valid LU transformation in the case $k_b=\{0,1,1,-1,0\}$.}
\label{chainpic2}
\end{figure}

For a given qubit transformation from $\ket{\mu}$ to $\ket{\bar{\mu}}$, there may be several choices of (odd dimensional) states $\ket{\lambda}$ and $\ket{\bar{\lambda}}$, each transformation corresponding to a specific unitary operation. When $d_1=d$, each solution corresponds to a specific cycle, and there are only finitely many possibilities. As $d$ increases, however, the length of potential cycles increases, leading to more possible cycles, and thus more transformations. For example, in the case of $d=d_1=7$ and $ d_2=3$, we have the three distinct (not normalised) transformations:
\begin{align}
& (1,a) \otimes (1,a^{4/3},a^{8/3},a^{10/3},a^{4},a^{16/3},a^{20/3}) \nonumber \\
&\quad \rightarrow_{LU} (1,a^{7/3}) \otimes  (1,a,a^{4/3},a^{8/3},a^{4},a^{13/3},a^{16/3}),\\
& (1,a) \otimes (1,a^{2/3},a^{4/3},a^{2},a^{8/3},a^{10/3},a^4) \nonumber \\
&\quad \rightarrow_{LU} (1,a^{7/3}) \otimes (1,a^{2/3},a,a^{4/3},a^{5/3},a^{2},a^{8/3}), \label{Trans7/3} \\& (1,a) \otimes (1,a^{4/3},a^2,a^{8/3},a^{10/3},a^{4},a^{16/3}) \nonumber \\
&\quad \rightarrow_{LU} (1,a^{7/3}) \otimes (1,a,a^{4/3},a^2,a^{8/3},a^{3},a^{4}).
\end{align}
Investigating which cycle corresponds to a non-trivial solution becomes more involved as $d$ increases. For $d>5$, the cycles do not necessarily consist of a sequence containing all the positive gaps, followed by a sequence containing all the negative gaps (as in Figs.~\ref{chainpic1} and \ref{chainpic2}). More intricate cycle structures appear, such as for instance the cycle corresponding to transformation in Eq.~\eqref{Trans7/3} (see Fig.~\ref{chainpic3}).
\begin{figure}[h]
\begin{picture}(200,45)
	\put(0,0) { \includegraphics[scale=0.6]{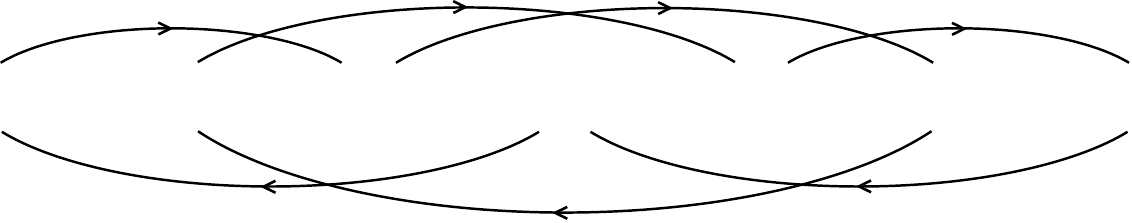} }
	\put(25,38){ $g_+$ }
	\put(74,43){ $g_+$ }
	\put(110,43){ $g_+$ }
	\put(160,38){ $g_+$ }
	\put(147,-1){ $g_-$ }
	\put(96,-7){ $g_-$ }
	\put(43,-1){ $g_-$ }
	\put(-7,18){ $\bar{\lambda}_1$ }
	\put(27,18){ $\bar{\lambda}_2$ }
	\put(61,18){ $\bar{\lambda}_3$ }
	\put(95,18){ $\bar{\lambda}_4$ }
	\put(129,18){ $\bar{\lambda}_5$ }
	\put(163,18){ $\bar{\lambda}_6$ }
	\put(198,18){ $\bar{\lambda}_7$ }
\end{picture}
\centering
\caption{Cycle associated with the transformation in Eq.~\eqref{Trans7/3}. }
\label{chainpic3}
\end{figure}

This last example concludes our illustration of the possible transformations of 2-qubit states under multi-state LUs characterized in Theorem~\ref{mainThm}. In the next section we address the possible transformations of bipartite states of higher dimension.

\subsection{Non-trivial solutions in higher dimensional non-homogeneous systems} \label{sec:non-homogeneous}

We show here that non-trivial transformations are not only possible when one of the initial states is a 2-qubit state but also occur in all non-homogeneous systems, i.e. systems with $d_\mu \neq d_\lambda$.

To begin, let us look at the non-trivial transformations in the case $d_\mu=3,\ d_\lambda=4$. By considering all valid permutations, one can show that there are four non-trivial, non-direct-sum solutions.
\begin{align}
& (1,a,a^2) \otimes (1,a^3,a^6,a^9) \nonumber \\
&\quad \sim (1,a^4,a^8) \otimes (1,a,a^2,a^3) \label{stdnontriv2}\\
& (1,a^2,a^4) \otimes (1,a^3,a^5,a^6) \nonumber \\
&\quad \sim  (1,a^4,a^5) \otimes (1,a^2,a^3,a^5)\\
& (1,a,a^5) \otimes (1,a^3,a^5,a^6) \nonumber \\
&\quad \sim (1,a^4,a^5) \otimes (1,a,a^3,a^6)\\
& (1,a,a^5) \otimes (1,a^2,a^3,a^5) \nonumber \\
&\quad \sim (1,a^2,a^4) \otimes (1,a,a^3,a^6)
\end{align}

Note, perhaps surprisingly in this higher dimensional $d_\mu=3,\ d_\lambda=4$ case, all non-trivial, non-direct-sum solutions are still characterised by a single variable. However, as the dimensions grow, the number of non-trivial solutions will also grow. This makes further investigation of this feature challenging. Nonetheless, we can generally observe that for any non-homogeneous system ($d_\mu \ne d_\lambda$), there is always at least one non-trivial solution. As we illustrate in the following observation, it is indeed possible to generalize the transformation given in Eq.~\eqref{stdnontriv1} to pairs of bipartite states of arbitrary (but different) dimensions. \\

\begin{obs}
\label{nonhomsolution}
For any $ d_\mu, d_\lambda \geq 2$, with $d_\mu < d_\lambda$, the tuples of Schmidt coefficients
\begin{align}
    \mu&=\left(1,a^{d_\lambda},a^{2d_\lambda},\dots,a^{(d_\mu-1)d_\lambda}\right),\\
    \lambda&=\left(1,a,a^2,\dots,a^{d_\lambda -1}\right)
\end{align}
and
\begin{align}
    \bar{\mu}&=\left(1,a,a^2,\dots,a^{d_\mu-1}\right),\\
    \bar{\lambda}&=\left(1,a^{d_\mu},a^{2d_\mu},\dots,a^{(d_\lambda-1)d_\mu}\right)
\end{align}
lead to the non-trivial transformation $\ket{\mu}\otimes\ket{\lambda}\rightarrow_{LU}\ket{\bar{\mu}}\otimes\ket{\bar{\lambda}}$ (see Fig. \ref{rearrangeSchmidt}).
\end{obs}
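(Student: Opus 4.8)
The plan is to invoke the necessary and sufficient criterion of Eq.~\eqref{equivrel}: the transformation $\ket{\mu}\otimes\ket{\lambda}\rightarrow_{LU}\ket{\bar{\mu}}\otimes\ket{\bar{\lambda}}$ exists precisely when $\mu\otimes\lambda\sim\bar{\mu}\otimes\bar{\lambda}$, i.e. when the two multisets of product Schmidt coefficients coincide up to reordering. Since every entry of all four tuples is a power of $a$, I would reduce the problem to a purely combinatorial statement about the multisets of \emph{exponents} appearing on each side, and show these two multisets are equal.

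First I would write out the exponents explicitly. In $\mu\otimes\lambda$ a generic entry is $a^{(i-1)d_\lambda}\cdot a^{j-1}$ with $i\in\{1,\dots,d_\mu\}$ and $j\in\{1,\dots,d_\lambda\}$, so the exponents form the collection $\{\,p\,d_\lambda+q : 0\le p\le d_\mu-1,\ 0\le q\le d_\lambda-1\,\}$. By the division algorithm (uniqueness of the quotient--remainder decomposition with remainder in $\{0,\dots,d_\lambda-1\}$) this is exactly $\{0,1,\dots,d_\mu d_\lambda-1\}$, each value occurring once. Performing the identical computation for $\bar{\mu}\otimes\bar{\lambda}$, a generic entry is $a^{i-1}\cdot a^{(j-1)d_\mu}$, giving exponents $\{\,p'+q'\,d_\mu : 0\le p'\le d_\mu-1,\ 0\le q'\le d_\lambda-1\,\}$, which by the same argument (now reading the remainder modulo $d_\mu$) is again $\{0,1,\dots,d_\mu d_\lambda-1\}$. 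Hence both products equal the complete ladder of powers $\{a^0,a^1,\dots,a^{d_\mu d_\lambda-1}\}$ as multisets, which establishes $\mu\otimes\lambda\sim\bar{\mu}\otimes\bar{\lambda}$ and therefore the existence of the transformation. The shared (unnormalized) multiset carries the same normalisation factor $\sum_{k=0}^{d_\mu d_\lambda-1}a^k$ on both sides, so the conclusion is unaffected by normalising the Schmidt vectors.

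It remains to confirm the transformation is non-trivial, i.e. neither the identity nor a SWAP. For $a\in(0,1)$ the second entries of $\mu$ and $\bar{\mu}$ are $a^{d_\lambda}$ and $a$ respectively, which differ whenever $d_\lambda\ge 2$; thus $\bar{\mu}\ne\mu$ and the target state is genuinely changed. Since we are in the regime $d_\mu\ne d_\lambda$, the SWAP solution is not even available, as the two factors live in Hilbert spaces of different dimension, so no trivial relabelling reproduces this transformation. I do not expect a genuine obstacle here: the whole content is the observation that the two tensor products are base-$d_\lambda$ (respectively base-$d_\mu$) enumerations of the same block of consecutive powers of $a$, so the only care needed is the bookkeeping of the exponent ranges and the appeal to uniqueness of the quotient--remainder representation.
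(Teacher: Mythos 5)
Your proposal is correct and follows the same route as the paper, which simply states that the claim "can easily be verified by computing the tensor products $\mu\otimes\lambda$ and $\bar{\mu}\otimes\bar{\lambda}$ and checking that they are equal up to reordering"; your division-algorithm argument is exactly that verification, made explicit, together with the (correct) check of non-triviality for $a\neq 1$.
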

This can easily be verified by computing the tensor products $\mu \otimes \lambda$, and $\bar{\mu} \otimes \bar{\lambda}$ and checking that they are equal up to reordering. We now move on to homogeneous systems.

\subsection{Homogeneous Systems}

As we showed in the previous section, non-trivial transformations can be found in all non-homogeneous systems. However, the general example shown in the previous section cannot be used in homogeneous systems. This is because, when $d_\mu  = d_\lambda$, the transformation in Observation \ref{nonhomsolution} corresponds to SWAP. Consequently, we must look for alternate non-trivial transformations.

Building on the non-trivial transformations characterized in the previous section, we first show that non-trivial transformations can be found in all homogeneous systems with $d\ge4$. As an application, we then use such transformations to show that the source entanglement, an entanglement measure that was defined in Ref. \cite{sourceentangle}, is not an additive measure for bipartite pure states. Finally, we prove that there does not exist any non-trivial transformations for the two remaining homogeneous systems ($d=2,3$). In this last part, we make use of the elementary symmetric polynomials approach presented in the introductory part of Section~\ref{sec:level5}.

\subsubsection{Building non-trivial transformations}

We show here the following observation: \\

\begin{obs}
There exists at least one non-trivial transformation in all homogeneous systems with $d_\mu=d_\lambda \ge 4$.
\end{obs}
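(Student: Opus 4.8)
The plan is to recast the existence of a non-trivial transformation as a purely combinatorial statement about polynomial factorizations, following the framework of Eq.~\eqref{equivrel}. By that equivalence a non-trivial homogeneous transformation exists precisely when one can find (multi)sets of $d$ Schmidt coefficients $\mu,\lambda,\bar{\mu},\bar{\lambda}$ with $\mu\otimes\lambda\sim\bar{\mu}\otimes\bar{\lambda}$ while $\bar{\mu}\not\sim\mu$ and $\{\bar{\mu},\bar{\lambda}\}\neq\{\mu,\lambda\}$. I would encode each tuple by writing its Schmidt coefficients as powers $a^{e}$ of a single base $a\in(0,1)$ and attaching the generating polynomial $P(x)=\sum_{e}x^{e}$; the product $\mu\otimes\lambda$ then corresponds to the polynomial product, so that $\mu\otimes\lambda\sim\bar{\mu}\otimes\bar{\lambda}$ becomes the identity $P_\mu P_\lambda=P_{\bar{\mu}}P_{\bar{\lambda}}$. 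The task thus reduces to exhibiting a polynomial $R$ with nonnegative integer coefficients that factors in two genuinely different ways into two factors, each with nonnegative coefficients and with value $d$ at $x=1$ (so that each factor is a valid $d$-level Schmidt tuple).

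The natural first attempt is to keep all coefficients distinct, i.e. to take $R$ with exponent set the full interval $\{0,\dots,d^2-1\}$, so that the problem becomes tiling this interval as $A\oplus B$ with $|A|=|B|=d$ (this is Observation~\ref{nonhomsolution} specialised to equal dimensions, whose diagonal solution is exactly the trivial SWAP). Here, however, lies the main obstacle: for composite $d$ one can find interval tilings beyond the base-$d$ one and its SWAP, but for prime $d$ this approach collapses, because $\tfrac{x^{d^2}-1}{x-1}=\Phi_d(x)\,\Phi_{d^2}(x)$ with both cyclotomic factors irreducible and of weight $d$, so the only weight-$d$ factorization is the trivial pair. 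The realisation needed to treat all $d\ge4$ uniformly is that one must \emph{abandon the interval} and allow degenerate Schmidt coefficients (multisets), which creates non-unique factorizations even for prime $d$.

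Concretely, I would introduce the weight-one ``catalyst'' $c(x)=1-x+x^2$ (note $c(1)=1$) and set $R=P\,g\,c$, where $P(x)=\sum_{i=0}^{d-1}x^{i}$ and $g$ is a second weight-$d$ polynomial chosen so that $g\,c$ has nonnegative coefficients. Then $R=P\cdot(gc)=(Pc)\cdot g$ furnishes two factorizations into nonnegative weight-$d$ factors, giving the transformation $\mu\leftrightarrow P,\ \lambda\leftrightarrow gc\ \rightarrow\ \bar{\mu}\leftrightarrow Pc,\ \bar{\lambda}\leftrightarrow g$. The routine verifications would be: (i) $Pc=1+x^2+x^3+\cdots+x^{d-1}+x^{d+1}$ is nonnegative of weight $d$; (ii) a suitable $g\neq P$ exists, e.g. $g=(1+x)\sum_{i=0}^{d/2-1}x^{i}$ for even $d$ (whence $gc=(1+x^3)\sum_i x^i\ge0$) and $g=1+2x+\sum_{i=2}^{d-2}x^{i}$ for odd $d\ge5$ (a short expansion shows $gc\ge0$), both of weight $d$; and (iii) non-triviality, i.e. $Pc\neq P$ and $Pc\neq g$, ensuring $\bar{\mu}\not\sim\mu$ and that the solution is not the SWAP. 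I would also remark that this construction correctly fails at $d=2,3$ (there $gc$ acquires a negative coefficient), matching the two remaining homogeneous systems, which the paper then excludes separately. The hard part is conceptual rather than computational: recognising that the single-parameter/interval picture is genuinely insufficient for prime $d$, and that allowing degeneracy together with the weight-one catalyst $c$ is exactly what restores a non-unique factorization across all $d\ge4$.
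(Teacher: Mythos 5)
Your proposal is correct, and it reaches the observation by a genuinely different route than the paper. The paper argues in three separate cases: for composite $d$ it factors the local dimension as $d=d_1d_2$ and performs a sub-SWAP on one tensor factor; for odd $d\geq 7$ it uses a direct-sum decomposition $d=4+(d-4)\,$(with the $4$-dimensional block split as $2\times 2$) and embeds the non-trivial $(\mathbb{C}^2\otimes\mathbb{C}^3)^{\otimes 2}$ transformation obtained from Theorem~\ref{mainThm}; and for $d=5$ it exhibits an ad hoc explicit solution. Your generating-polynomial construction $P\,g\,c=P\cdot(gc)=(Pc)\cdot g$, with the weight-one factor $c(x)=1-x+x^2$, replaces all three cases by a single uniform argument, and I have checked the claimed nonnegativity and weight computations: $Pc=1+x^2+\cdots+x^{d-1}+x^{d+1}$, $gc=(1+x^3)\sum_{i=0}^{d/2-1}x^i$ for even $d$, and $gc=1+x+2x^3+x^4+\cdots+x^{d-2}+x^d$ for odd $d\geq 5$ are all valid weight-$d$ tuples, and the resulting transformation is neither the identity nor the SWAP for $d\geq4$. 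Your side remarks are also sound: the cyclotomic factorization $\Phi_d\Phi_{d^2}$ correctly explains why the single-interval (Observation~\ref{nonhomsolution}-type) ansatz cannot work for prime $d$, which is precisely the obstruction the paper circumvents with direct sums. The trade-offs are minor: your construction forces degenerate Schmidt coefficients (the coefficient $2$ in $g$ or $gc$) and depends on a single parameter $a$, whereas the paper's case-by-case solutions are generically non-degenerate and carry several continuous parameters (which the paper subsequently exploits to exhibit non-additivity of the source entanglement); on the other hand, your argument is self-contained, treats prime and composite $d$ on the same footing, and makes the failure at $d=2,3$ visible within the same framework (though, as you implicitly acknowledge, the impossibility for $d=2,3$ still requires the paper's separate ESP argument).
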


We proceed by first demonstrating non-trivial solutions for all non-prime dimensions. Then we demonstrate a solution for all odd (and therefore all prime) dimensions greater than $d=5$. Finally, we provide an explicit solution for $d=5$.

If the dimension $d_\mu=d_\lambda=d$ is not a prime number, then it can always be factorized into two smaller dimensions, $d_1$ and $d_2$, with $d=d_1 d_2$. It is therefore possible to consider, as initial states, two product states $|\lambda\rangle=|\lambda_{d_1}\rangle \otimes |\lambda_{d_2}\rangle$ and $|\mu\rangle=|\mu_{d_1}\rangle \otimes |\mu_{d_2}\rangle$. We can then obviously get a non-trivial transformation if we use local unitaries to swap only the states corresponding to the $d_1-$ (or $d_2-$) dimensional subspace, yielding a transformation of the form
\begin{align}
&  \Big( |\mu_{d_1}\rangle \otimes |\mu_{d_2}\rangle \Big) \otimes \Big( |\lambda_{d_1}\rangle \otimes |\lambda_{d_2}\rangle \Big)\\
 &\stackrel{LU}{\longrightarrow} \Big( |\lambda_{d_1}\rangle \otimes |\mu_{d_2}\rangle \Big) \otimes \Big( |\mu_{d_1}\rangle \otimes |\lambda_{d_2}\rangle \Big) .
\end{align}
Note that, for simplicity, we present here again a transformation with bipartite states that have a tensor product structure across their $d$ dimensions. Using local unitaries, we could also create entanglement across these $d_1$- and $d_2$-dimensional subspaces to provide a similar transformation involving only fully-entangled states.

If $d$ is prime, we obviously cannot have a non-trivial solution of this form. However, instead of decomposing the dimension into a product, we can decompose it into a sum. If $d$ is large enough, this sum decomposition can lead to non-trivial solutions. We demonstrate this with an example. In the case $d=7$, we can split the 7 Schmidt coefficients of both initial states into one set of 4 Schmidt coefficient and one set of 3 Schmidt coefficients. Splitting further the sets of 4 Schmidt coefficients into a tensor product of two sets of 2 Schmidt coefficients, we take for $\lambda$ (and similarly for $\mu$) the following structure:
\begin{equation}
\lambda = c \left( \lambda^1_{(2)} \otimes \lambda^2_{(2)}\right) \oplus (1-c) \lambda_{(3)},
\end{equation}
where $\lambda^1_{(2)}$ and $\lambda^2_{(2)}$ are tuples of Schmidt coefficients of 2-qubit states, $\lambda_{(3)}$ is a tuple of Schmidt coefficients of a 2-qutrit state and $0<c<1$ is a mixing parameter. 

To achieve a non-trivial transformation from this structure, we exploit a non-trivial transformation in $ (\mathbb{C}^2\otimes\mathbb{C}^3)^{\otimes2}$ that can be deduced from Theorem~\ref{mainThm}. For this system, the theorem shows that there exists a non-trivial transformation of the 2-qubit target state corresponding to a final state with $\bar{a}=a^3$. In term of Schmidt coefficients, this transformation reads
\begin{align}
& \frac{1}{1+a} (1,a ) \otimes  \frac{1}{1+a^2+a^4} ( 1,a^2,a^4 ) \nonumber \\
&\sim   \frac{1}{1+a^3} (1,a^3) \otimes \frac{1}{1+a+a^2} ( 1,a,a^2).
\end{align}
As a consequence, using
\begin{align}
\mu &= c \left( \frac{1}{1+a} (1,a ) \otimes \frac{1}{1+b} (1,b ) \right) \nonumber\\
&\quad \oplus (1-c) \frac{1}{1+a+a^2} ( 1,a,a^2 ), \label{eg11}\\
\lambda &= c^\prime \left( \frac{1}{1+a^3} (1,a^3 )\otimes \frac{1}{1+b^\prime} (1,b^\prime ) \right) \nonumber \\
&\quad \oplus (1-c^\prime) \frac{1}{1+a^2+a^4} ( 1,a^2,a^4 ), 
\end{align}
we can achieve a transformation to
\begin{align}
\bar{\mu} &= c \left( \frac{1}{1+a^3} (1,a^3) \otimes \frac{1}{1+b} (1,b) \right) \nonumber \\
&\quad \oplus (1-c) \frac{1}{1+a^2+a^4} ( 1,a^2,a^4 ), \\
\bar{\lambda} &= c^\prime \left( \frac{1}{1+a} (1,a ) \otimes \frac{1}{1+b^\prime} (1,b^\prime) \right) \nonumber \\
&\quad \oplus (1-c^\prime) \frac{1}{1+a+a^2} ( 1,a,a^2 ). \label{eg14}
\end{align}
This argument holds for any odd dimension, $d\ge7$. This is because for any $d\ge 7$, $d-3$ is even and at least equal to 4 (so that the corresponding subspace can be further split into a tensor product of two non-trivial subspaces). Therefore we can construct a state $\lambda = c(\lambda_{(2)} \otimes \lambda_{(d/2)})\oplus(1-c)\lambda_{(3)}$ (where, as before, the subscript indicates the dimension of the corresponding tuple). Then we simply apply the same type of transformation as in the example. As this argument holds for any odd dimension $d\ge7$, it holds in particular for all prime dimensions greater than five.

Finally, in the case of $d=5$, we provide the following explicit example of a non-trivial transformation:
\begin{align}
&(1,a,a^{4/3},a^2,a^{8/3})\otimes(1,a^{1/3},a^{2/3},a,a^{4/3}) \nonumber \\
&\quad \sim (1,a^{1/3},a^{4/3},a^{5/3},a^2)\otimes (1,a^{2/3},a,a^{4/3},a^2) \label{stdnontriv2}
\end{align}
This concludes the proof of the observation that in all homogeneous bipartite systems (except those of dimension 2 and 3), there are non-trivial multi-state LU transformations.

Before completing the last remaining cases of $d=2,3$ in Section \ref{sec:ESP}, as an application of the transformations we described in this section, we first show that the source entanglement~\cite{sourceentangle}, $E_s$, is not an additive measure of entanglement for bipartite states, in contrast, for instance, to the Von Neumann entropy (of pure states).

\subsubsection{Source Entanglement of Bipartite Systems}

The source entanglement is a measure of entanglement, ranging from 0 to 1, which measures how difficult it is to reach a state using LOCC. For a bipartite state $\ket{\lambda} \in \mathbb{C}^d \otimes \mathbb{C}^d$, with set of Schmidt coefficients $\lambda$, it is given by~\cite{sourceentangle}
\begin{equation}
E_s(\lambda) = 1-\sum_{\sigma \in S_d}{\frac{\left( \sum_{k=1}^d \sigma(k) \lambda_k \right)^{d-1}}{\prod_{k=1}^{d-1} \Big( \sigma(k) - \sigma(k+1) \Big) }},
\end{equation}
where the sum runs over all permutations $\sigma$ from the permutation group of $d$ elements, $S_d$. \\

\begin{obs}
The source entanglement is not an additive measure of entanglement for bipartite states
\end{obs}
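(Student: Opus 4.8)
The plan is to leverage the non-trivial multi-state LU transformations constructed in this section, together with one structural property of the source entanglement. The key observation is that $E_s$ is a symmetric function of the Schmidt coefficients: its value depends only on the multiset $\{\lambda_k\}$ and is invariant under any reordering. Consequently, whenever two product states obey $\mu\otimes\lambda\sim\bar{\mu}\otimes\bar{\lambda}$ (i.e. are related by a valid LU transformation), their total Schmidt multisets coincide, and hence $E_s(\mu\otimes\lambda)=E_s(\bar{\mu}\otimes\bar{\lambda})$ \emph{without any explicit evaluation of the high-dimensional source entanglement}.

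First I would set up a proof by contradiction. Assume $E_s$ is additive, i.e. $E_s(\psi\otimes\phi)=E_s(\psi)+E_s(\phi)$ for all bipartite states. Applying this to both sides of a non-trivial transformation $\ket{\mu}\otimes\ket{\lambda}\rightarrow_{LU}\ket{\bar{\mu}}\otimes\ket{\bar{\lambda}}$ and using the equality $E_s(\mu\otimes\lambda)=E_s(\bar{\mu}\otimes\bar{\lambda})$ noted above would force
\[
E_s(\mu)+E_s(\lambda)=E_s(\bar{\mu})+E_s(\bar{\lambda}).
\]
It therefore suffices to exhibit a single non-trivial transformation for which this identity fails.

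Second, I would take the explicit transformation in $(\mathbb{C}^2\otimes\mathbb{C}^3)^{\otimes 2}$ established earlier,
\[
\tfrac{1}{1+a}(1,a)\otimes\tfrac{1}{1+a^2+a^4}(1,a^2,a^4)\sim\tfrac{1}{1+a^3}(1,a^3)\otimes\tfrac{1}{1+a+a^2}(1,a,a^2),
\]
with $a\in(0,1)$, which changes the target qubit from Schmidt ratio $a$ to $a^3$ while correspondingly modifying the auxiliary qutrit. For the two-qubit factors the defining formula collapses to $E_s=2\lambda_2$ (twice the smaller Schmidt coefficient), giving $E_s(\mu)=2a/(1+a)$ and $E_s(\bar{\mu})=2a^3/(1+a^3)$. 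For the two qutrit factors I would evaluate the defining sum over $S_3$ directly, being careful with the sign-carrying denominators. I would then fix a convenient value, e.g. $a=1/2$, compute both sides, and display the resulting strict inequality, which establishes non-additivity.

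The main obstacle I anticipate is purely computational: the honest evaluation of $E_s$ for the two qutrit states requires summing the six permutations of $S_3$, and one must confirm that the discrepancy is genuine rather than an artefact of the chosen $a$. To guard against the latter, I would verify that the difference $[E_s(\mu)+E_s(\lambda)]-[E_s(\bar{\mu})+E_s(\bar{\lambda})]$ is a non-vanishing function of $a$ on $(0,1)$, so that the inequality holds robustly. The decisive conceptual point, which keeps the argument short, is that the LU-equivalence supplies the equality of the six-dimensional source entanglements \emph{for free}, reducing the whole question to comparing sums of source entanglements of two-qubit and two-qutrit states.
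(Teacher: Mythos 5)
Your proposal is correct and follows essentially the same route as the paper: both exploit the fact that $E_s$ depends only on the multiset of Schmidt coefficients, so a non-trivial multi-state LU transformation preserves $E_s(\mu\otimes\lambda)=E_s(\bar{\mu}\otimes\bar{\lambda})$ while the sum of the marginal source entanglements changes, contradicting additivity. The only difference is the choice of counterexample --- the paper embeds the $(\mathbb{C}^2\otimes\mathbb{C}^3)^{\otimes 2}$ transformation into a $d=7$ homogeneous system and evaluates $E_s$ there with specific parameters, whereas you use the underlying $2\times 3$ transformation directly, which is a simpler and equally valid computation (at $a=1/2$ it gives $E_s(\mu)+E_s(\lambda)=146/147$ versus $E_s(\bar{\mu})+E_s(\bar{\lambda})=422/441$, so the required strict inequality indeed holds).
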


Consider a transformation involving the states given in Eqs.~\eqref{eg11}-\eqref{eg14}. Obviously, we have $E_s(\mu \otimes \lambda)=E_s(\bar{\mu}\otimes \bar{\lambda})$.  However, for some values of the parameters of the transformation, we can have $E_s(\lambda) + E_s(\mu)<E_s(\bar{\lambda}) + E_s(\bar{\mu})$. In that case, the source entanglement can increase more in the transformation from $\ket{\lambda}$ to $\ket{\bar{\lambda}}$ than it decreases in the transformation from $\ket{\mu}$ to $\ket{\bar{\mu}}$. To give an example, choosing the parameters $a=0.3$, $b=0.01$, $c=0.01$, $b^\prime=0.3$ and $c^\prime=0.8$, this difference amounts to $ \big( E_s(\bar{\lambda})+E_s(\bar{\mu}) \big) - \big(E_s(\lambda)+ E_s(\mu) \big)=0.56$. Using a state $\ket{\mu}$ that is easy to reach ($E_s(\mu)=0.005$), we can transform an easily reachable state $\ket{\lambda}$ with $E_s(\lambda)=0.11$ into a state $\ket{\bar{\lambda}}$ with $E_s(\bar{\lambda})=0.68$, which is much more difficult to reach via LOCC. This once again demonstrates that multi-state LOCC transformations are much richer than their single-state counterparts.

\subsubsection{Characterizing trivial solutions using ESPs}
\label{sec:ESP}

In this section, we show that for the two homogeneous systems for which we did not provide examples of non-trivial transformations, namely those with $d=2,3$, only trivial transformations are possible. As explained previously, ESPs provide a  natural framework for studying bipartite multi-state LU transformations through the following necessary and sufficient conditions: the transformation $\ket{\mu}\otimes\ket{\lambda}\rightarrow_{LU}\ket{\bar{\mu}}\otimes\ket{\bar{\lambda}}$ is possible if and only if
\begin{equation}
e_i(\mu \otimes \lambda) = e_i(\bar{\mu} \otimes \bar{\lambda}), \forall i \in \{1,\dots,d_\mu d_\lambda). \label{ESPstrivialSol}
\end{equation}

Moreover, it was demonstrated in Ref. \cite{SandersGourCatalysis} that if the Schmidt coefficients have a tensor product structure, as in the present case, then the ESPs $e_i(\mu\otimes\lambda)$ can be expressed in terms of the ESPs  $s_i\equiv e_i(\lambda)$ and $t_i\equiv e_i(\mu)$, i.e. in term of the ESPs of the marginals. For example, in the case $d_\mu = d_\lambda = 2$, we have:
\begin{align}
e_1(\mu \otimes \lambda) &= s_1 t_1\label{eq11} \\
e_2(\mu \otimes \lambda) &= s_1^2 t_2 + s_2t_1^2 -2 s_2t_2 \label{eq21} \\
e_3(\mu \otimes \lambda) &= s_1t_1s_2t_2 \label{eq31}\\
e_4(\mu \otimes \lambda) &= s_2^2t_2^2 \label{eq41}
\end{align}

Although this decomposition does not usually help solving Eqs.~\eqref{ESPstrivialSol}, which is typically a large set of high degree polynomial equations, we now show that it is very useful to identify trivial LU transformations. A transformation, $\ket{\mu}\otimes\ket{\lambda}\rightarrow_{LU}\ket{\bar{\mu}}\otimes\ket{\bar{\lambda}}$, is trivial if the tuple of tuples $(\mu , \lambda)$ is equal up to reordering to the tuple of tuples $(\bar{\mu}, \bar{\lambda})$. This accounts indeed for both $\mathds{1}$ and SWAP trivial transformations. Since the order of the Schmidt coefficients in each tuple does not matter, we can replace the tuples of Schmidt coefficients by their corresponding tuples of ESPs. In the case $d_\mu = d_\lambda = 2$, using the simplified notations $s_i$ and $t_i$ (resp. $\bar{s}_i$ and $\bar{t}_i$) for the ESPs of the tuples $\mu$ and $\lambda$ ($\bar{\mu}$ and $\bar{\lambda}$), we then have a trivial transformation if and only if
\begin{equation}
\Big( (s_1,s_2),(t_1,t_2) \Big) \sim \Big( (\bar{s}_1,\bar{s}_2),(\bar{t}_1,\bar{t}_2) \Big).
\end{equation}
As the first order ESP of any normalized tuple of Schmidt coefficients is equal to 1, for normalized states, the previous equation reduces to
\begin{equation}
(s_2,t_2) \sim (\bar{s}_2,\bar{t}_2).
\end{equation}
This is a usual equivalence relation between two tuples of two variables. This equivalence holds if and only if the two ESPs of the two tuples are equal, i.e. if and only if
\begin{align}
s_2 + t_2 &= \Bar{s}_2 + \Bar{t}_2, \label{triv1} \\
s_2 t_2&=  \Bar{s}_2 \bar{t}_2. \label{triv2}
\end{align}

On the other hand, all the solutions for the transformation $\ket{\lambda}\otimes\ket{\mu}\rightarrow_{LU}\ket{\bar{\lambda}}\otimes\ket{\bar{\mu}}$ can be obtained by solving the equations $e_i(\lambda \otimes \mu) = e_i(\bar{\lambda} \otimes \bar{\mu}),\ \forall i =1,\dots,4$. Using the decompositions in Eqs.~\eqref{eq11} to \eqref{eq41}, and taking into account that we consider normalized states, this set of equations is equivalent to:
\begin{align}
    s_2+t_2-2 s_2t_2&=\bar{s}_2 + \bar{t}_2 -2 \bar{s}_2 \bar{t}_2,\\
    s_2t_2&=\bar{s}_2 \bar{t}_2.
\end{align}
As these two equations are equivalent to the conditions~\eqref{triv1} and \eqref{triv2} for having trivial solutions, we conclude that there are only trivial transformations for $d_\mu=d_\lambda=2$.

In order to generalize this approach for higher dimensions we present the following theorem:

\begin{thm}[Equivalence between two tuples of tuples]
\label{superthm}
Let $\underline{s}=(s_1,s_2,...,s_d) \in \mathbb{R}_+^d$ with $s_{i+1}\le s_i$ and let $\underline{t},\underline{\bar{s}},\underline{\bar{t}}$ be defined similarly. Then $( \underline{s},\underline{t} )$ is equal to $( \underline{\bar{s}},\underline{\bar{t}} )$ up to reordering iff the following conditions hold:
\begin{enumerate}
    \item  $e_i(\underline{s})+e_i(\underline{t}) = (\bar{"}),\ \forall i = 1,\dots,d$ and
    \item $\sum_{i+j=k} e_i(\underline{s}) e_j(\underline{t})= (\bar{"}),\ \forall k = 1,\dots,2d$
\end{enumerate}
where $(\bar{"})$ indicates the same as the LHS but with all variables barred.
\end{thm}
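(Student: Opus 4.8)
The plan is to recast the whole statement in terms of generating polynomials, reducing it to the elementary fact that a monic quadratic over a field has a unique (unordered) pair of roots. To each tuple $\underline{s}$ I would associate the polynomial $\hat{s}(x)=\prod_{i=1}^d(1+s_i x)=\sum_{i=0}^d e_i(\underline{s})\,x^i$, whose coefficients are precisely the ESPs of $\underline{s}$ (with $e_0=1$). Since the $s_i$ are strictly positive, $\hat{s}$ has degree exactly $d$ with leading coefficient $e_d(\underline{s})=\prod_i s_i\neq 0$, and the map $\underline{s}\mapsto\hat{s}$ is injective (the ESPs determine the polynomial, whose roots $-1/s_i$ recover the multiset). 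Hence establishing $(\underline{s},\underline{t})\sim(\underline{\bar{s}},\underline{\bar{t}})$ is equivalent to establishing the equality of unordered pairs of polynomials $\{\hat{s},\hat{t}\}=\{\hat{\bar{s}},\hat{\bar{t}}\}$.

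The second step is to recognise the two conditions as the sum and product of these polynomials. Condition 1 is exactly the coefficient-wise identity $\hat{s}+\hat{t}=\hat{\bar{s}}+\hat{\bar{t}}$. For condition 2, I would use that the convolution $\sum_{i+j=k}e_i(\underline{s})e_j(\underline{t})$ is the coefficient of $x^k$ in the product $\hat{s}\,\hat{t}$ (equivalently, $e_k$ of the combined multiset $\underline{s}\cup\underline{t}$, since $\hat{s}\,\hat{t}=\prod_{z\in\underline{s}\cup\underline{t}}(1+zx)$); thus condition 2 is precisely $\hat{s}\,\hat{t}=\hat{\bar{s}}\,\hat{\bar{t}}$. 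The forward (``only if'') direction is then immediate: if $\{\hat{s},\hat{t}\}=\{\hat{\bar{s}},\hat{\bar{t}}\}$, both the sum and the product are symmetric functions of the unordered pair and are therefore preserved, yielding conditions 1 and 2.

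The core of the argument is the reverse direction. Writing $L=\hat{s}+\hat{t}$ and $Q=\hat{s}\,\hat{t}$ for the common sum and product, both $\hat{s},\hat{t}$ and $\hat{\bar{s}},\hat{\bar{t}}$ are roots, in an auxiliary indeterminate $Z$, of the monic quadratic $Z^2-L\,Z+Q$ whose coefficients lie in the field of rational functions $\mathbb{R}(x)$. Over a field this factors into two linear factors, and by unique factorisation in $\mathbb{R}(x)[Z]$ the factorisation is unique up to order; hence both pairs equal the same two-element root set, so $\{\hat{s},\hat{t}\}=\{\hat{\bar{s}},\hat{\bar{t}}\}$, which by the injectivity noted above gives $(\underline{s},\underline{t})\sim(\underline{\bar{s}},\underline{\bar{t}})$.

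I expect the only delicate points, and the main obstacle to a fully rigorous write-up, to be the passage to the rational function field and the justification that the quadratic admits no spurious roots: one must confirm that $\hat{s},\hat{t}$ genuinely lie in $\mathbb{R}(x)$ and that the degree bookkeeping is consistent ($\deg\hat{s}=\deg\hat{t}=d$, $\deg Q=2d$, $\deg L=d$), which is exactly where strict positivity of the Schmidt coefficients enters. This argument subsumes and generalises the explicit $d=2$ computation carried out just before the theorem, where the conditions reduced by hand to $s_2+t_2=\bar{s}_2+\bar{t}_2$ and $s_2 t_2=\bar{s}_2\bar{t}_2$.
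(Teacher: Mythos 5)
Your proposal is correct and follows essentially the same route as the paper: the paper encodes each tuple in the polynomial $\prod_i(\mu - s_i)$ and packages the two conditions as the coefficients of the monic quadratic $p(\lambda,\mu)=\bigl(\lambda-\prod_i(\mu-s_i)\bigr)\bigl(\lambda-\prod_j(\mu-t_j)\bigr)$, asserting (as you justify via unique factorisation in $\mathbb{R}(x)[Z]$) that such a quadratic determines its unordered pair of polynomial roots. Your generating function $\prod_i(1+s_i x)$ merely avoids the alternating signs in the coefficient comparison and makes explicit the uniqueness step that the paper states without proof; the underlying argument is the same.
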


The proof of this theorem is provided in the Appendix \ref{moreESPS}. As $e_{i+1}(\lambda)\le e_{i}(\lambda)$ for any normalized tuple of Schmidt coefficients $\lambda$, and the ESPs over $\lambda$ completely determine $\ket{\lambda}$ up to LUs, this theorem gives necessary and sufficient conditions for trivial transformations being the only possible transformations.

We now use this to show that there are only trivial solutions in the one remaining case: $d=3$. Again, $\ket{\lambda}\otimes\ket{\mu}\rightarrow_{LU}\ket{\bar{\lambda}}\otimes\ket{\bar{\mu}}$ if and only if $e_i(\lambda \otimes \mu) = e_i(\bar{\lambda} \otimes \bar{\mu}),\ \forall i = 1,\dots,9$. Decomposing these equalities in term of the ESPs of the marginals, we have the set of equations:
\begin{align}
   & s_2 +  t_2 - 2 s_2t_2 = \bar{(")} \label{eq1decomposition}\\
   &s_3 + t_3 + s_2t_2  \notag \\
   & \qquad -3 (s_3t_2+s_2t_3) + 3s_3t_3= \bar{(")}\\
    &s_3t_2 + s_2t_3 + s_2^2t_2^2 \notag \\
    &\qquad - 2(s_3t_2^2 + s_2^2t_3) - s_3t_3 = \bar{(")}\\
    &s_2s_3t_2^2+s_2^2t_2t_3-2(s_2s_3t_3+s_3t_2t_3) \notag \\
    &\qquad -s_2s_3t_2t_3+s_3t_3 = \bar{(")}\\
    &s_3^2t_2^3+s_2^3t_3^2 + s_2s_3t_2t_3 \notag\\
    &\qquad-3(s_3^2t_2t_3+s_2s_3t_3^2)+3s_3^2t_3^2 = \bar{(")}\\
    &s_3t_3\left(s_3t_2^2+s_2^2t_3-2s_3t_3\right)= \bar{(")}\\
    &s_2t_2s_3^2t_3^2=\bar{(")}\\
    &s_3^3t_3^3=\bar{(")} \label{eq2decomposition}
\end{align}
where again $(\bar{"})$ indicates the same as the LHS but with all the variables barred. 

Now by application of Theorem \ref{superthm}, a transformation is trivial iff
\begin{align}
& s_2+t_2+s_3+t_3=\bar{(")} \label{eq1fromsuperthm}\\
& s_2s_3+t_2t_3=\bar{(")}\\
&(s_2s_3)+(s_2+s_3)(t_2+t_3)+(t_2t_3)=\bar{(")}\\
& (s_2+s_3)(t_2t_3)+(s_2s_3)(t_2+t_3)=\bar{(")}\\
& s_2s_3t_2t_3=\bar{(")} \label{eq2fromsuperthm}
\end{align}

Using the fact that $s_i, t_i >0$, one can easily show that the set of Eqs.~(\ref{eq1decomposition}-\ref{eq2decomposition}) implies the set of Eqs.~(\ref{eq1fromsuperthm}-\ref{eq2fromsuperthm}). Therefore, the conditions for having a solution imply only trivial solutions, which shows that the only solutions in homogeneous systems with $d=2,3$ are the trivial $\mathds{1}$ and SWAP transformations. This completes our analysis of multi-state bipartite LU transformations.
\\

\section{Conclusion}  \label{sec:level8}

In this work, we investigated the physically motivated extension of LOCC consisting of multi-state transformations. As mentioned in the introduction, considering such an extension of LOCC is motivated by the fact that, in homogeneous systems, states are generically isolated under single-state LOCC~\cite{GenericIsolated}, implying that the MES necessarily contains almost all states of the Hilbert space. Relaxing this setting, for instance by allowing local operations on multiple states, new transformations could be achieved and a (hopefully more practical) equivalent of the MES could be obtained.

We first showed that by performing a multi-state LOCC transformation on several multipartite pure states, it is possible to change the SLOCC class of at least one of them with only LUs. As one of the initial states could be transformed into a state that belongs to a different SLOCC class, we must, in order to characterize all possible transformations, consider potential final states from all possible SLOCC classes. As there is generically an infinite number of SLOCC classes in multipartite systems, achieving a full characterization of multipartite multi-state transformations is very unlikely. This is also one of the reasons why identifying the equivalent of the MES in the multi-state setting is probably out of reach.

In light of this first result, we focused in the present work on identifying new features of multi-state transformations of multipartite states (compared to single-state LOCC). With a 3-qubit example, we showed that a state from the MES can be reached in the multi-state setting through an LOCC transformation of two states that are not from the MES, which allows for some freedom in choosing the 2-MES. We also showed that catalytic transformations of multipartite states can be performed in the multi-state regime and that this regime can provide an advantage (in the sense of a larger success probability) for probabilistic LOCC transformations.

These results show qualitatively that multi-state LOCC allows a much larger set of possible transformations than single-state LOCC. Looking for a more systematic characterization of the new possible transformations, we also considered the simpler setting of bipartite two-state LU transformations. In this setting, we provided a full characterization of the possible transformations of a 2-qubit state, when transformed together with an arbitrary auxiliary bipartite state. We also showed that in almost all possible pairs of bipartite systems, non-trivial two-state LU transformations can be achieved.

Looking forward, our multipartite results show that multi-state LOCC has a wide range of interesting phenomena but a full characterisation is probably out of reach. Therefore, further investigations of multipartite, multi-state transformations should either focus on the asymptotic case, or on physically relevant sets of states (for which the multi-state LOCC structure might be simpler than in the general case). Regarding the bipartite setting, the results presented here show that already LUs lead to non-trivial transformations in the multi-state setting. One could extend our work by expanding the allowed operations to LOCC. In this case, optimal protocols already exist for entanglement concentration of finitely many bipartite states \cite{Hardy}. However, a full characterisation seems very challenging as it would include the heavily investigated bipartite entanglement catalysis \cite{Catalysis, Winter,TurgutCatalysis,Klimesh} as a subset of transformations.\\

\section{Acknowledgements}

We would like to thank Leonhard Czarnetzki for fruitful discussions and for his work, during his Masters thesis [55], on related problems regarding multi-state transformations. We acknowledge financial support from the SFB BeyondC (Grant No. F7107-N38), the Austrian Academy of Sciences via the Innovation Fund “Research, Science and Society”, and the Austrian Science Fund (FWF) through grants P 32273-N27 and DK-ALM: W1259-N27.

\begin{appendix}
\section{Changing SLOCC orbit type in the multi-state regime}
\label{Appendix:SLOCCtype}
In this appendix, we discuss the possibility to change the orbit type of SLOCC classes under multi-state LU transformations as in Section \ref{changingslocc}. We use the same notation as in the main text, i.e., the pair of states $\ket{\psi}$ and $\ket{\bar{\psi}}$ is transformed to the pair of states $\ket{\phi}$ and $\ket{\bar{\phi}}$. First, we briefly review the notion of the orbit type of SLOCC classes (states), i.e, the notion of polystable states, strictly semistable states, and states of the null-cone. Then, we show that the example presented in the main text,  $\ket{\psi}= \ket{\mathrm{GHZ}}^{\otimes 2}$, $\ket{\bar{\psi}}= \ket{\mathrm{W}}^{\otimes 2}$, and  $\ket{\phi} = \ket{\bar{\phi}} = \ket{\mathrm{GHZ}}\ket{\mathrm{W}}$ indeed constitutes an example in which a polystable state and a state from the null-cone is transformed to two states from the null-cone. We then provide additional examples showing that two strictly semistable states can be transformed to one strictly semistable and one polystable state. Furthermore, we show that two states in the null-cone can be transformed to one strictly semistable state and one state in the null-cone. Finally, we discuss that the orbit types cannot be changed arbitrarily. This is due to the fact that, obviously, the orbit type of the joint state cannot change under SLOCC. In the course of that, we discuss the orbit type of tensor products of states. To conclude, we draw a connection to SLOCC catalysis (changing SLOCC class with a catalytic SLOCC transformation).

Let us begin by reviewing the notion of the orbit type of SLOCC classes. As mentioned in the main text, states and their respective SLOCC classes can be categorized into three different types depending on geometrical properties of the orbit: polystable classes, strictly semistable classes, and the null-cone\footnote{A finer classification can be made. However, for our purposes here, the presented classification is sufficient.}, see e.g. \cite{KeNe76,VeDe03,Kl02,SLOCCclassclassification}. The orbit type plays a role in characterizing deterministic LOCC transformations within SLOCC classes. Polystable classes are those SLOCC classes that contain a critical state. Strictly semistable classes are those SLOCC classes that do not contain a critical state, but do contain a critical state in their closure. Finally, the null-cone is composed by the remaining classes, i.e., those SLOCC classes that do not contain a critical state within their closure. Not all of these types are necessarily present within a quantum system of specified local dimensions. Note that a state $\ket{\xi}$ is in the null-cone  if and only if there exists a sequence of operators $S^{(1)}_\alpha, S^{(2)}_\alpha, \ldots \in \text{SL}(d,\mathbb{C})$ such that $\lim_{\alpha \to \infty} S^{(1)}_\alpha \otimes  S^{(2)}_\alpha \otimes \ldots \ket{\xi} = 0$ \cite{VeDe03}. A state $\ket{\chi}$ is strictly semistable if and only if it has the property that it is not SLOCC equivalent to any critical state, but there exists a sequence of operators as before such that $\lim_{\alpha \to \infty}  S^{(1)}_\alpha \otimes   S^{(2)}_\alpha \otimes \ldots \ket{\chi} \propto \ket{\Psi_{\text{c}}}$ for some critical state $\ket{\Psi_{\text{c}}}$ (with non-vanishing proportionality factor). Note also that a critical state in the closure of an SLOCC class is unique up to LUs \cite{KeNe76,VeDe03}. Analytical and numerical methods to determine the orbit type have been devised, see e.g. \cite{VeDe03,Kl02,WaDo13,SaOs14,SlHe19}.

Let us now study how the orbit type of states may change under multi-state LU transformations. Let us first consider the transformation from the main text, $\ket{\psi}= \ket{\mathrm{GHZ}}^{\otimes 2}$, $\ket{\bar{\psi}}= \ket{\mathrm{W}}^{\otimes 2}$, and  $\ket{\phi} = \ket{\bar{\phi}} = \ket{\mathrm{GHZ}}\ket{\mathrm{W}}$.
As explained in the main text, this transformation is possible by LUs acting jointly on the copies. Moreover, note that $\ket{\mathrm{GHZ}}^{\otimes 2}$ is critical and thus represents a polystable SLOCC class, while $\ket{\mathrm{W}} \ket{\mathrm{GHZ}}$ as well as $\ket{\mathrm{W}}^{\otimes 2}$ are in the null-cone. This can be easily seen as follows. The state $\ket{\mathrm{W}}$ is in the null-cone, hence, there exists a sequence of operators $A_\alpha, B_\alpha, C_\alpha \in \text{SL}(2,\mathbb{C})$ such that $\lim_{\alpha \to \infty} A_\alpha \otimes  B_\alpha \otimes C_\alpha \ket{\mathrm{W}} = 0$ \cite{VeDe03}. Then, for any state $\ket{\zeta}$, it holds that $\lim_{\alpha \to \infty} (\mathds{1} \otimes A_\alpha) \otimes  (\mathds{1} \otimes B_\alpha) \otimes (\mathds{1} \otimes C_\alpha) \ket{\zeta}\ket{\mathrm{W}} = 0$, where $\mathds{1}\otimes A_\alpha$, $\mathds{1}\otimes B_\alpha$, $\mathds{1}\otimes C_\alpha$ have determinant one. Thus, $\ket{\mathrm{GHZ}}\ket{\mathrm{W}}$ and $\ket{\mathrm{W}}^{\otimes 2}$ are in the null-cone. Let us remark that this argument actually holds for any state in the null-cone together with an arbitrary state $\ket{\zeta}$. This shows that, indeed, it is possible to transform one state in the null-cone and one polystable state to two states that are both in the null-cone.

Similarly, it is possible to transform two strictly semistable states to one strictly semistable as well as one polystable state. Consider the transformation involving the four-partite states
\begin{align}
 \ket{\psi}&= \ket{\mathrm{GHZ}}^{\otimes 2} \\
 \ket{\bar{\psi}}&= \ket{\chi}^{\otimes 2}\\
 \ket{\phi} = \ket{\bar{\phi}} &= \ket{\mathrm{GHZ}}\ket{\chi},
\end{align}
where $\ket{\chi} = \ket{0000}+\ket{1111} + \ket{0110}+\ket{0011}$, which belongs to the $L_{a_2 b_2}$ class in Ref. \cite{SLOCC4qubits} for $a=1$ and $b=0$. This state is strictly semistable.  Hence, there exists a sequence of operators $A_\alpha, B_\alpha, \ldots \in \text{SL}(d,\mathbb{C})$ such that $\lim_{\alpha \to \infty} A_\alpha \otimes  B_\alpha \otimes \ldots \ket{\chi} \propto \ket{\Psi_{\text{c}}}$ for some critical state $\ket{\Psi_{\text{c}}}$ with non-vanishing proportionality factor, but no critical state is inside the SLOCC class of $\ket{\chi}$. To see that $\ket{\chi}$ is strictly semistable, note that
\begin{align}
\lim_{\alpha \to \infty} \begin{pmatrix} e^{- \alpha} & 0 \\ 0 & e^{ \alpha} \end{pmatrix} \otimes  \mathds{1} \otimes \begin{pmatrix} e^{\alpha} & 0 \\ 0 & e^{- \alpha} \end{pmatrix} \otimes \mathds{1}  \ \ket{\chi} \nonumber\\ = \ket{\mathrm{GHZ}},
\end{align}
and it may be easily verified that  $\ket{\chi}$ and $\ket{\mathrm{GHZ}}$ are not SLOCC equivalent. Similarly, it can be easily verified that both $\ket{\chi} \ket{\mathrm{GHZ}}$ and $\ket{\chi} \ket{\chi}$ are strictly semistable with $\ket{\mathrm{GHZ}}^{\otimes 2}$ being the critical state within the closure of their respective SLOCC class. Thus, the considered transformation is indeed a transformation from a strictly semistable and a polystable state to two strictly semistable states.

Finally, we find that two states from the null-cone may be transformed into one state in the null-cone as well as one strictly semistable state. With similar methods as above, it can be shown that the transformation involving the states
\begin{align}
 \ket{\psi}&= \ket{\chi}^{\otimes 2} \\
 \ket{\bar{\psi}}&= \ket{\mathrm{W}}^{\otimes 2}\\
 \ket{\phi} = \ket{\bar{\phi}} &= \ket{\chi}\ket{\mathrm{W}}
\end{align}
 is indeed an example of that.

Let us remark here that, obviously, all of the considered transformations are also possible in reverse direction. We depict the considered examples in Figure \ref{fig:orbittype}. Note, though, that the orbit type may not be changed arbitrarily. For instance, it is impossible to transform two states in the null-cone to two polystable states. The reason for this is that, necessarily, the orbit types of the tensor products of the states on both sites of the considered transformation must coincide. However, the tensor product of two states in the null-cone is in the null-cone, while the tensor product of two polystable states is polystable. More generally, it is impossible to transform any state in the null-cone together with an arbitrary state into two states neither of which is in the null-cone.

\begin{figure*}[bt]
 \begin{center}
 \includegraphics[width=0.75 \linewidth]{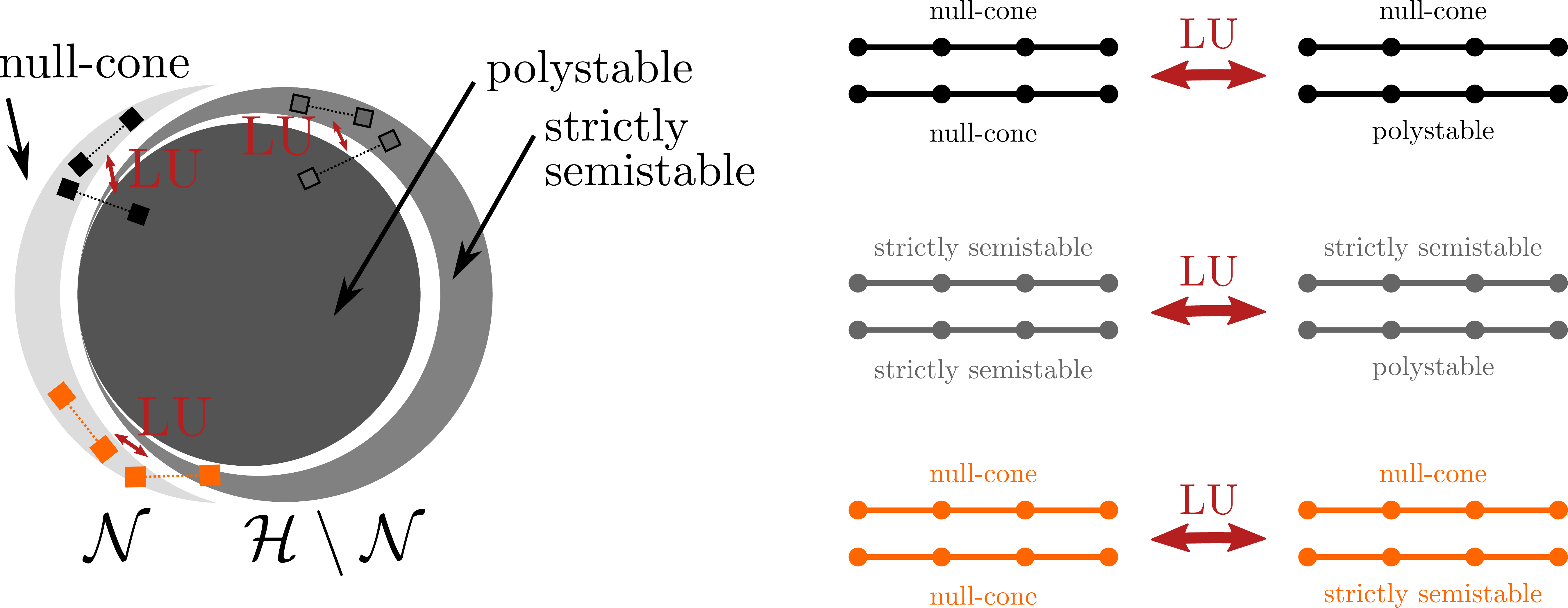}
 \caption{ Illustration of how LUs operating jointly on two multipartite states may not only alter the SLOCC classes of the two considered states, but also the orbit type of the SLOCC classes. On the left we depict the Hilbert space partitioning into the three distinct orbit types: The null-cone $\mathcal{N}$, strictly semistable classes, and polystable classes. On the right hand side, we depict possible orbit type changes. First, we depict two states in the null-cone which can be transformed to one state in the null-cone as well as one polystable state. Then, we depict two strictly semistable states which can be transformed to one strictly semistable states as well as one polystable state. On the bottom, we depict two states in the null-cone that can be converted to a single state in the null-cone as well as a strictly semistable state. Concrete examples for all of the depicted scenarios are given in the main text.
 Finally, we indicate the considered pairs of states in the schematic picture of the Hilbert space on the left-hand side.
 \label{fig:orbittype}}
 \end{center}
\end{figure*}

Finally, let us remark that it is not immediately clear whether the tensor product of a strictly semistable state and a polystable state (the tensor product of two strictly semistable states) is always strictly semistable, or may also be polystable. As we show in a following observation, the latter case would demonstrate reversible SLOCC catalysis. By this we mean it would provide an instance of three states $\ket{\psi}$, $\ket{\bar{\psi}}$, and $\ket{\phi}$ with the following properties. The states $\ket{\psi}$ and $\ket{\phi}$ are fully-entangled states of a Hilbert space with fixed local dimensions, which are not SLOCC equivalent. However, the states $\ket{\psi}\ket{\bar{\psi}}$ and $\ket{\phi}\ket{\bar{\psi}}$ are SLOCC equivalent. Irreversible SLOCC catalysis, i.e., an instance where $\ket{\phi}$ is lower-dimensional than $\ket{\psi}$, has been demonstrated in \cite{Winter}. Regarding reversible SLOCC catalysis we make the following observation.
\begin{obs}
An instance of two states $\ket{\psi}$ and $\ket{\bar{\psi}}$, one strictly semistable and one polystable, such that $\ket{\psi}\ket{\bar{\psi}}$ is polystable would provide an instance of reversible SLOCC catalysis.
\end{obs}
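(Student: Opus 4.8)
The plan is to take $\ket{\phi}$ to be the critical state sitting in the closure of the strictly semistable state's SLOCC orbit, and to show that the assumed polystability of the joint state is precisely what upgrades a ``shared closure'' into genuine SLOCC equivalence. Relabelling the two states if necessary, assume $\ket{\psi}$ is the strictly semistable one and $\ket{\bar\psi}$ the polystable one, so that $\ket{\bar\psi}$ will serve as the catalyst. Since $\ket{\psi}$ is strictly semistable, its orbit closure contains a critical state $\ket{\Psi_{\mathrm c}}$, unique up to LUs, which is not SLOCC equivalent to $\ket{\psi}$ itself. I would set $\ket{\phi}=\ket{\Psi_{\mathrm c}}$. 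This candidate already meets two of the three requirements of reversible SLOCC catalysis: $\ket{\phi}$ lives in the same Hilbert space as $\ket{\psi}$ (hence has the same fixed local dimensions, making the instance reversible rather than dimension-reducing), it is fully entangled because every critical state has maximally mixed, and therefore full-rank, single-party marginals, and it is inequivalent to $\ket{\psi}$ by construction.

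The core of the argument is to prove that $\ket{\psi}\ket{\bar\psi}$ and $\ket{\phi}\ket{\bar\psi}=\ket{\Psi_{\mathrm c}}\ket{\bar\psi}$ are SLOCC equivalent. First I would record the elementary fact that a tensor product of critical states is again critical: if $\ket{\Psi_{\mathrm c}}$ and $\ket{\bar\Psi_{\mathrm c}}$ both have maximally mixed single-party marginals, then each party's reduced state of $\ket{\Psi_{\mathrm c}}\ket{\bar\Psi_{\mathrm c}}$ is a tensor product of two maximally mixed operators, hence maximally mixed. Next, strict semistability of $\ket{\psi}$ gives a sequence $g_\alpha\in\mathrm{SL}(d,\mathbb{C})^{\otimes n}$ with $g_\alpha\ket{\psi}\to\ket{\Psi_{\mathrm c}}$, while polystability of $\ket{\bar\psi}$ gives a fixed $h\in\mathrm{SL}(d,\mathbb{C})^{\otimes n}$ with $h\ket{\bar\psi}\propto\ket{\bar\Psi_{\mathrm c}}$. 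Because a tensor product of determinant-one operators again has determinant one, $g_\alpha\otimes h$ is an admissible $\mathrm{SL}$ sequence on the joint system, and
\begin{equation}
(g_\alpha\otimes h)\,\ket{\psi}\ket{\bar\psi}\;\longrightarrow\;\ket{\Psi_{\mathrm c}}\ket{\bar\Psi_{\mathrm c}},
\end{equation}
so the critical state $\ket{\Psi_{\mathrm c}}\ket{\bar\Psi_{\mathrm c}}$ lies in the orbit closure of $\ket{\psi}\ket{\bar\psi}$ and, by uniqueness, is \emph{the} critical state of that closure. The same computation with $g_\alpha$ replaced by the identity shows that $\ket{\Psi_{\mathrm c}}\ket{\bar\psi}$ is SLOCC equivalent to $\ket{\Psi_{\mathrm c}}\ket{\bar\Psi_{\mathrm c}}$, and is therefore polystable.

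Now the hypothesis enters decisively: $\ket{\psi}\ket{\bar\psi}$ is assumed polystable, so its SLOCC class actually contains the closure's critical state $\ket{\Psi_{\mathrm c}}\ket{\bar\Psi_{\mathrm c}}$. Both $\ket{\psi}\ket{\bar\psi}$ and $\ket{\phi}\ket{\bar\psi}$ are thus polystable with the same critical state (unique up to LUs) inside their classes, which forces them into a single SLOCC class. This yields $\ket{\psi}\ket{\bar\psi}\sim_{\mathrm{SLOCC}}\ket{\phi}\ket{\bar\psi}$ together with $\ket{\psi}\not\sim_{\mathrm{SLOCC}}\ket{\phi}$, i.e.\ reversible SLOCC catalysis with catalyst $\ket{\bar\psi}$. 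The one place requiring care --- and what I regard as the main obstacle --- is the in-class versus in-closure bookkeeping: $\ket{\Psi_{\mathrm c}}\ket{\bar\psi}$ is automatically polystable \emph{regardless} of the hypothesis, so the assumption that $\ket{\psi}\ket{\bar\psi}$ is polystable (rather than merely strictly semistable, with $\ket{\Psi_{\mathrm c}}\ket{\bar\Psi_{\mathrm c}}$ only in its closure) is exactly what promotes ``same critical state in the closure'' to an actual SLOCC equivalence. Without it the two joint states would have different orbit types and hence be inequivalent, so the proof cannot proceed without invoking that hypothesis.
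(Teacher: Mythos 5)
Your proof is correct and follows essentially the same route as the paper: take $\ket{\phi}$ to be the critical state in the closure of the strictly semistable state's class, note that $\ket{\phi}\ket{\bar{\psi}}$ lies in the closure of the class of $\ket{\psi}\ket{\bar{\psi}}$, and use the assumed polystability of the joint state to promote membership in the closure to genuine SLOCC equivalence. The only cosmetic difference is that you conclude via uniqueness of the critical state shared by the two polystable classes, whereas the paper invokes directly that a polystable SLOCC class is closed; these are equivalent standard facts.
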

\begin{proof}
Consider a strictly semistable state $\ket{\psi}$ and a polystable state $\ket{\bar{\psi}}$ and suppose that the joint state $\ket{\psi}\ket{\bar{\psi}}$ is polystable. Let us denote the critical state that is in the closure of the SLOCC class of  $\ket{\psi}$ by  $\ket{\phi}$. Due to arguments used earlier in this appendix, $\ket{\phi}\ket{\bar{\psi}}$ is then in the closure of the SLOCC class of $\ket{\psi}\ket{\bar{\psi}}$. As we supposed that $\ket{\psi}\ket{\bar{\phi}}$ is polystable, its SLOCC class is closed and thus,  $\ket{\phi}\ket{\bar{\psi}}$ is actually within the SLOCC class of $\ket{\psi}\ket{\bar{\psi}}$.
However, $\ket{\psi}$ is not SLOCC equivalent to $\ket{\phi}$, as $\ket{\psi}$ is strictly semistable and $\ket{\phi}$ is critical. Note moreover that $\ket{\psi}\ket{\bar{\phi}}$ as well as $\ket{\phi}\ket{\bar{\psi}}$ are fully-entangled. The states $\ket{\psi}$, $\ket{\phi}$, and $\ket{\bar{\psi}}$ thus demonstrate reversible SLOCC catalysis.
\end{proof}

\section{Symmetries of $\ket{\mathrm{GHZ}_d^n}$}
\label{sec:Appendix_sym_GHZ}

In this appendix, we prove Lemma~\ref{theo:ghzsym} from the main text, which characterizes the local symmetries of generalized GHZ states. We recall the lemma here for completeness.

\setcounter{thm}{2}

\begin{lemma}
\label{theo:ghzsymapp}
A local invertible operator is a symmetry of the state $\ket{\mathrm{GHZ}_d^n}$ (with $d\geq 2$ and $n\geq 3$) if and only if it can be written
\begin{align}
\label{eq:ghzsymapp}
S = \left[ D(\vec{\gamma}^{(1)}) \otimes \cdots \otimes D(\vec{\gamma}^{(n)}) \right] X_\sigma^{\otimes n}
\end{align}
where
\begin{align}
 D(\vec{\gamma}^{(i)}) &= \operatorname{diag}(\gamma_1^{(i)}, \gamma_2^{(i)}, \dots, \gamma_d^{(i)}), \\
\label{eq:ghzsymcondapp}
\gamma_j^{(n)} &=\left(\prod_{i=1}^{n-1} \gamma_j^{(i))}\right)^{-1},\quad \forall j\in\{1,...,d\} \\
    X_\sigma &= \sum_{k=0}^{d-1} \ket{\sigma(k)}\bra{k},
\end{align}

with $\sigma \in S_d$ any permutation of $d$ elements, $\vec{\gamma}^{(i)} = (\gamma_1^{(i)}, \dots, \gamma_d^{(i)} ) \in \mathbb{C}^{d}$ for $i=1,\ldots, n-1$.
\end{lemma}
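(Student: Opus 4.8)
The plan is to prove both implications, with the ``only if'' direction carrying essentially all the work. For the ``if'' direction I would simply verify that the stated form fixes the state: applying $X_\sigma^{\otimes n}$ to $\sum_{i} \ket{i\cdots i}$ merely relabels the summation index, giving $\sum_j \ket{j\cdots j}$, and the subsequent diagonal factor multiplies the $j$-th term by $\prod_{i=1}^n \gamma_j^{(i)}$, which equals $1$ for every $j$ by the constraint~\eqref{eq:ghzsymcondapp}. Hence such an $S$ is a symmetry.

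For the ``only if'' direction, I would write a generic local invertible operator as $S=\bigotimes_{k=1}^n S^{(k)}$ with each $S^{(k)}\in\mathrm{GL}(d,\mathbb{C})$, and translate the condition $S\ket{\mathrm{GHZ}_d^n}=\ket{\mathrm{GHZ}_d^n}$ into the tensor identity $\sum_{i=0}^{d-1} S^{(1)}\ket{i}\otimes\cdots\otimes S^{(n)}\ket{i} = \sum_{j=0}^{d-1}\ket{j}\otimes\cdots\otimes\ket{j}$. The central step is a contraction argument. Fixing two distinguished parties, say $1$ and $2$, I would contract each remaining party $k\geq 3$ with the $\ell$-th row of $(S^{(k)})^{-1}$; since these rows satisfy $\bra{r^{(k)}_\ell}S^{(k)}\ket{i}=\delta_{\ell i}$, all the deltas force $i=\ell$ on the left-hand side, collapsing it to the single product vector $S^{(1)}\ket{\ell}\otimes S^{(2)}\ket{\ell}$, while the right-hand side becomes a vector supported only on the ``diagonal'' terms $\ket{j}\otimes\ket{j}$. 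A short argument then shows that a nonzero product vector $\ket{u}\otimes\ket{v}$ supported on the diagonal must have $\ket{u}$ and $\ket{v}$ each proportional to a single common basis vector $\ket{\pi(\ell)}$: indeed $u_p v_q=0$ for $p\neq q$ together with $\ket{u},\ket{v}\neq 0$ (by invertibility) forces both supports to collapse to the same index. Thus $S^{(1)}$ and $S^{(2)}$ map $\ket{\ell}$ to multiples of $\ket{\pi(\ell)}$ for the same permutation $\pi$ (a permutation because $S^{(1)}$ is invertible), i.e.\ both are generalized permutation (monomial) matrices of the form $D(\vec\gamma^{(k)})X_\pi$.

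Repeating this contraction while keeping parties $1$ and $k$ for each $k$ (which is possible precisely because $n\geq 3$ guarantees there is always at least one party left to contract) shows every $S^{(k)}$ is monomial with an underlying permutation; since the underlying permutation of the fixed invertible matrix $S^{(1)}$ is unique, all of them coincide with a single $\sigma$. It then only remains to impose normalization: substituting $S^{(k)}=D(\vec\gamma^{(k)})X_\sigma$ into the symmetry identity and reindexing $j=\sigma(i)$ gives $\sum_j\big(\prod_{k=1}^n\gamma_j^{(k)}\big)\ket{j\cdots j}=\sum_j\ket{j\cdots j}$, which yields $\prod_{k=1}^n\gamma_j^{(k)}=1$ for all $j$, equivalent to~\eqref{eq:ghzsymcondapp}. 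I expect the main obstacle to be the bookkeeping of the contraction step---in particular cleanly arguing that the permutation is \emph{common} to all $n$ parties rather than party-dependent, and making sure the argument genuinely uses $n\geq3$ (for $n=2$ the reduced equation is the full equation and the rigidity fails, which is why the maximally entangled state of two parties has the much larger symmetry group $\{A\otimes (A^{-1})^{T}\}$).
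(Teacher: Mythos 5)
Your proposal is correct and follows essentially the same route as the paper's proof: reduce the $n$-party symmetry identity to a two-party condition (your contraction with rows of $(S^{(k)})^{-1}$ is just a repackaging of the paper's projection onto $\bra{ij}_{1,2}$ with $i\neq j$, both yielding $(S^{(1)}\ket{k})_i(S^{(2)}\ket{k})_j=0$ for $i\neq j$), conclude that all local operators are monomial matrices with a common underlying permutation, and then fix the diagonal entries by the normalization constraint. Your closing remark correctly identifies where $n\geq 3$ enters and why the argument fails for $n=2$.
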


\begin{proof}
The proof of this theorem is a straightforward generalization of the proof of the symmetries of $\ket{\mathrm{GHZ}_2^3}$ presented in Ref.~\cite{MES}. It can be easily verified that the symmetries given in the theorem are indeed symmetries of $\ket{\mathrm{GHZ}_d^n}$. Let us now show that all symmetries of $\ket{\mathrm{GHZ}_d^n}$ are necessarily of that form.

To this end, let us consider an arbitrary symmetry $S = S^{(1)} \otimes \cdots \otimes S^{(n)}$ and compare the projections of the states $\ket{\mathrm{GHZ}_d^n}$ and $ S \ket{\mathrm{GHZ}_d^n}$ onto $\bra{i j}_{1,2}$ for $i \neq j$. For the first projection, it is straightforward to see that
\begin{align}
\braket{ij|\mathrm{GHZ}_d^n}=0.
\end{align}
Since the same result must hold for the second projection, we have:
\begin{align}
    0&=\braket{ij|S^{(1)}\otimes \cdots \otimes S^{(n)}|\mathrm{GHZ}_d^n}\\
    &=\sum_{k} S^{(1)}_{i,k}S^{(2)}_{j,k} \ket{\underbrace{k \ldots k}_{n-2}},
\end{align}
where we used the fact that the operator $\mathds{1} \otimes \mathds{1} \otimes S^{(3)} \otimes \cdots \otimes S^{(n)}$ is invertible. As the states $\ket{k \cdots k}$ in the equation above are orthogonal vectors as long as $n > 2$, it follows that $S^{(1)}_{i,k}S^{(2)}_{j,k} = 0$ for all $k$ and for all $i \neq j$. Moreover, this condition has to hold for arbitrary pairs of parties.
From these conditions it follows that each matrix $S^{(i)}$ can only have one non-vanishing entry per column. Moreover, the positions of the non-vanishing entries must coincide for all parties. Since the matrices have to be invertible, the non-vanishing entries must be distributed over all rows. Adding up these constraints, we see that the matrices $S^{(i)}$ must correspond to the same column permutation of diagonal matrices. We can therefore write the symmetries as
\begin{align}
S = \left[ D(\vec{\gamma}^{(1)}) \otimes \cdots \otimes D(\vec{\gamma}^{(n)}) \right] X_\sigma^{\otimes n}
\end{align}
for some $\sigma \in S_d$. By applying $S$ to $\ket{\mathrm{GHZ}_d^n}$, it can be easily verified that $\vec{\gamma}^{(n)}$ has to be chosen as in Eq.~(\ref{eq:ghzsymcondapp}). This shows that all symmetries are of the form given in Eq.~(\ref{eq:ghzsymapp}) and completes the proof.

\end{proof}

\section{Further probabilistic multi-state transformations}
\label{appendix: probabilistic}

In Section \ref{sec:: probtransfo}, we demonstrated that the multi-state setting can provide an advantage in probabilistic transformations. We demonstrated this by taking a state $\ket{\psi}$ such that the stabilizer of $\ket{\psi}^{\otimes 2}$ consisted of only $\mathds{1}$ and SWAP. We then considered probabilistic transformations from two copies of a state to two distinct states. In this appendix, we now show that the multi-state setting does not always provide an advantage. To see this, we now consider the reverse of the previous example: i.e. transforming from two distinct states to two copies.

Let $\ket{\psi}$ be a normalized state such that the stabilizer of $\ket{\psi}^{\otimes 2}$ is unitary (note that this is a generalisation of the previous example, which required the stabilizer to consist of only $\mathds{1}$ and SWAP). Then, by Eq.~(\ref{gtocrit}), we have:
\begin{align}
    &p_{\textrm{max}}^{SEP}\left(g_1\ket{\psi} \otimes g_2\ket{\psi}  \mapsto \ket{\psi}^{\otimes2}\right)\\
    &=\lambda_{\textrm{min}} \left[\frac{G_1\otimes G_2}{||g_1\otimes g_2\ket{\psi}^{\otimes2}||^2}\right]\\
    &=\lambda_{\textrm{min}} \left[ \frac{G_1}{||g_1\ket{\psi}||^2}\right] \lambda_{\textrm{min}} \left[ \frac{G_2}{||g_2\ket{\psi}||^2}\right]\\
    &=p_{\textrm{max}}^{SEP}\left(g_1\ket{\psi} \mapsto \ket{\psi}\right) \ p_{\textrm{max}}^{SEP}\left(g_2\ket{\psi} \mapsto \ket{\psi}\right)
\end{align}
That is, if the tensor product of two copies of a seed state has a unitary stabilizer, the multi-state regime provides no advantage in reaching two copies of the seed state via SEP. As was shown in Ref.  \cite{GenericTrivStab}, $p_{\textrm{max}}^{SEP}$ for the individual state transformations is achievable via LOCC. Therefore, $p_{\textrm{max}}^{SEP}\left(g_1\ket{\psi} \otimes g_2\ket{\psi}  \mapsto \ket{\psi}^{\otimes2}\right)$ is also achievable with LOCC, and the multi-state regime provides no advantage for this transformation.

\section{Further discussion of the application of Elementary and Power Sum Symmetric Polynomials}
\label{moreESPS}
In Section \ref{sec:ESP}, we used the elementary symmetric polynomials to prove that there are no non-trivial transformations in the case $d_\mu=d_\lambda=2,3$. To do this, we used the fact that the elementary symmetric polynomials provide necessary and sufficient conditions for tuples of variables to be equivalent up to reordering. In this appendix, we discuss elementary (and other fundamental) symmetric polynomials in more depth, and present the proof of Theorem \ref{superthm}. \\

To begin, let $x=(x_1,...,x_n)$ be a tuple of $n$ variables. In addition to elementary symmetric polynomials, $e_k(x)$ (see Eq.~\eqref{ESPDEF}), we also have the power sum symmetric polynomials:
\begin{align}
    \psi_k &= \sum_{i=1}^n x_i^k,\qquad \forall k\in \mathbb{N}
\end{align}
 again with $\psi_0=1$. These two families of symmetric polynomials are related by Newton's identities \cite{ESPsandMonics}:
 \begin{align}
    k e_k(x) = \sum_{i=1}^k (-1)^{i-1} e_{k-i}(x) \psi_{i}(x), \forall k\in \{1,\dots,n\}
\end{align}
As a consequence, the power symmetric polynomials give necessary and sufficient conditions for two tuples of real numbers to be equal up to reordering and, therefore, they also give necessary and sufficient conditions for the existence of LU transformations. That is for two $d$-dim bipartite states, $\ket{\lambda}$ and $\ket{\mu}$, with Schmidt coefficients $\lambda=(\lambda_1,...,\lambda_{d})$ and $\mu=(\mu_1,...,\mu_{d})$, $\ket{\lambda}$ can be transformed with LUs into $\ket{\mu}$, if and only if
\begin{align}
    e_i(\lambda)=e_i(\mu),\ \forall i\in\{1,\dots,d\}
\end{align}
which is equivalent to
\begin{align}
  \psi_i(\lambda)=\psi_i(\mu),\ \forall i\in\{1,\dots,d\}. \label{b4}
\end{align}
Finally, observe that the power symmetric polynomials are multiplicative under tensor product. That is,
\begin{equation}
\psi_i(x\otimes y)=\psi_i (x) \psi_i(y), \forall i \in \mathbb{N} \label{b5}
\end{equation}
Therefore (as presented in Ref. \cite{SandersGourCatalysis}), by Newton's identities, the elementary polynomials over tensor products, $e_i(\lambda \otimes \mu)$ can also be expressed in terms of $s_i\equiv e_i(\lambda)$ and $t_i\equiv e_i(\mu)$.  We can also easily deduce from this property:
\begin{equation}
    \ket{\lambda}\rightarrow_{LU}\ket{\mu}\iff \ket{\lambda}^{\otimes n}\rightarrow_{LU}\ket{\mu}^{\otimes n},
\end{equation}
as this follows directly from Eq.~(\ref{b4}) and Eq.~(\ref{b5}). 

We emphasize again that bipartite states are completely determined by the ESPs over their Schmidt coefficients. Therefore, if we let $\underline{s}=(s_2,s_3,...,s_d)$ and likewise for $\underline{t},\underline{\bar{s}},\underline{\bar{t}}$, then the transformation
\begin{equation}
     \ket{\lambda}\otimes\ket{\mu}\rightarrow_{LU}\ket{\bar{\lambda}}\otimes\ket{\bar{\mu}}\\
\end{equation}
is trivial (i.e. it corresponds to either $\mathds{1}$ or SWAP) iff $(\underline{s},\underline{t})$ is equal to $(\underline{\bar{s}},\underline{\bar{t}})$ up to reordering.
We now proceed to give necessary and sufficient conditions for this and prove Theorem \ref{superthm}:

\setcounter{thm}{9}
\begin{thm}[Equivalence between two tuples of tuples]
Let $\underline{s}=(s_1,s_2,...,s_d) \in \mathbb{R_+^d}$ with $s_{i+1}\le s_i$ and let $\underline{t},\underline{\bar{s}},\underline{\bar{t}}$ be defined similarly. Then $( \underline{s},\underline{t} )$ is equal to $( \underline{\bar{s}},\underline{\bar{t}} )$ up to reordering iff the following conditions hold.
\begin{enumerate}
    \item  $e_i(\underline{s})+e_i(\underline{t}) = (\bar{"}),\ \forall i = 1..d$
    \item $\sum_{i+j=k} e_i(\underline{s}) e_j(\underline{t})= (\bar{"}),\ \forall k = 1...2d$
\end{enumerate}
where $(\bar{"})$ indicates the same as the LHS but with all variables barred.
\end{thm}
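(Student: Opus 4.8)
The plan is to recast both tuples of tuples in terms of their elementary symmetric polynomial generating functions and to reduce the claim to a statement about factorization of polynomials. To each sorted tuple I would associate the polynomial
\[
U(z) = \sum_{i=0}^{d} e_i(\underline{s})\, z^i = \prod_{l=1}^{d}(1 + s_l z),
\]
and define $V$, $\bar U$, $\bar V$ analogously from $\underline{t}$, $\underline{\bar s}$, $\underline{\bar t}$. The point of this encoding is twofold. First, because the $e_i$ are symmetric, $U$ depends only on the unordered multiset of entries of $\underline{s}$, and conversely (for fixed $d$) the multiset $\{s_l\}$ is recovered from $U$: its nonzero entries are read off from the roots of $U$, and the number of vanishing entries equals $d-\deg U$. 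Thus $\underline{s}\mapsto U$ is a bijection on sorted $d$-tuples. Second, the two hypotheses translate into clean polynomial identities.

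Concretely, I would observe that condition~1 is exactly the coefficientwise identity $U + V = \bar U + \bar V$ (the constant terms match trivially since $e_0=1$), while condition~2 is exactly $U V = \bar U \bar V$, because the coefficient of $z^k$ in the product $UV$ is the convolution $\sum_{i+j=k} e_i(\underline s)\, e_j(\underline t)$, and the range $k=1,\dots,2d$ exhausts all nonconstant coefficients. Hence the hypotheses say precisely that the unordered pairs $\{U, V\}$ and $\{\bar U, \bar V\}$ have the same sum and the same product in the polynomial ring $\mathbb{R}[z]$.

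The crux is then a short algebraic argument. Passing to the field of fractions $\mathbb{R}(z)$, the elements $U$ and $V$ are the two roots (counted with multiplicity) of the monic quadratic $W^2 - (U+V)W + UV$, i.e.\ $(W-U)(W-V) = W^2 - (U+V)W + UV$. By the two identities above this quadratic coincides with $(W-\bar U)(W-\bar V)$, and since a monic quadratic over a field determines its unordered pair of roots with multiplicity, we get $\{U, V\} = \{\bar U, \bar V\}$ as multisets. The coincident case is handled automatically: if $\bar U=\bar V$, then $\bar U+\bar V=2\bar U=U+V$ and $\bar U\bar V=\bar U^2=UV$ force $U=V=\bar U$. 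Translating back through the bijection $\underline{s}\leftrightarrow U$ yields $\{\underline{s}, \underline{t}\} = \{\underline{\bar s}, \underline{\bar t}\}$, which is exactly the assertion that $(\underline{s}, \underline{t})$ equals $(\underline{\bar s}, \underline{\bar t})$ up to reordering. The converse direction is immediate, since conditions~1 and~2 are manifestly invariant under swapping $\underline s \leftrightarrow \underline t$ and under reordering within each tuple.

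I expect the main (though modest) obstacle to be the algebraic step: one must resist treating $U$ and $V$ as scalars and instead argue inside $\mathbb{R}(z)$, invoking that a monic quadratic over a field genuinely pins down its unordered root pair with multiplicity, rather than merely a set containment. Everything else—the identification of condition~1 with $U+V$ and of condition~2 with the product $UV$, and the recovery of a sorted tuple from its generating polynomial—is routine bookkeeping.
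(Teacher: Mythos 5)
Your proposal is correct and follows essentially the same route as the paper: both encode each tuple as a univariate polynomial with the $e_i$ as coefficients and reduce the claim to the fact that a monic quadratic (in an auxiliary variable, over the polynomial ring or its fraction field) determines its unordered pair of roots, with conditions 1 and 2 supplying the sum and product respectively. The only differences are cosmetic — you use $\prod_l(1+s_lz)$ where the paper uses $\prod_i(\mu-s_i)$, and you make the "monic quadratic over a field pins down its root multiset" step slightly more explicit than the paper does.
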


\begin{proof}
Consider the multivariate polynomials over $\lambda$ and $\mu$, with real parameters $s_1,\ldots, s_n, t_1, \ldots, t_n$:
\begin{align}
    p&=p(\lambda,\mu; s_1,...,s_n,t_1,...,t_n)\nonumber\\
    &=\left(\lambda-\prod_i\left(\mu-s_i)\right)\right)\left(\lambda-\prod_j\left(\mu-t_j)\right)\right)
\end{align}
and
\begin{equation}
    \bar{p}=p(\lambda,\mu;\bar{s}_1,...,\bar{s}_n,\bar{t}_1,...,\bar{t}_n) \qquad
\end{equation}
Then we have $p=\bar{p}$ if and only iff either
\begin{align}
    &\prod_i\left(\mu-s_i)\right)= \prod_i\left(\mu-\bar{s}_i)\right),\\
    &\qquad \text{and}\ \prod_i\left(\mu-t_i)\right)= \prod_i\left(\mu-\bar{t}_i)\right)
\end{align}
or
\begin{align}
     &\prod_i\left(\mu-s_i)\right)= \prod_i\left(\mu-\bar{t}_i)\right),\\
     &\qquad \text{and} \prod_i\left(\mu-t_i)\right)= \prod_i\left(\mu-\bar{s}_i)\right).
\end{align}
Let $\underline{s}=(s_1,s_2,...,s_d) \in \mathbb{R_+^d}$ with $s_{i+1}\le s_i$ and let $\underline{t},\underline{\bar{s}},\underline{\bar{t}}$ be defined similarly. Then we have either $\underline{s}$ is equal to $\underline{\bar{s}}$ up to reordering and $\underline{t}$ is equal to $\underline{\bar{t}}$ up to reordering or vice versa. As the tuples are ordered, this in fact holds only if they are actually equal. Which is to say, $p=\bar{p}$ iff $(\underline{s},\underline{t})$ is equal to $(\underline{\bar{s}},\underline{\bar{t}})$ up to reordering.  \\

Expanding $p=\bar{p}$, we have:
\begin{align}
    p&=\left(\lambda-\prod_i\left(\mu-s_i)\right)\right)\left(\lambda-\prod_j\left(\mu-t_j)\right)\right)\\
    &= \lambda^2-\left( \prod_i\left(\mu-s_i\right)+\prod_j\left(\mu-t_j\right)\right)\lambda \nonumber \\
    &\quad +\left( \prod_i\left(\mu-s_i\right)\right)\left(\prod_j\left(\mu-t_j\right)\right)\\
     &= \lambda^2-\left(\sum_{i=0}^d(-1)^ie_i(\underline{s})\mu^i +\sum_{j=0}^d(-1)^je_j(\underline{t})\mu^j\right)\lambda\notag\\
     &\quad + \left( \sum_{i=0}^d(-1)^ie_i(\underline{s})\mu^i\right) \left(\sum_{j=0}^d(-1)^je_j(\underline{t})\mu^j\right)\\
    &= \lambda^2-\sum_{i=0}^d(-1)^i\left(e_i(\underline{s})+e_i(\underline{t})\right)\mu^i \lambda \nonumber\\
    &\qquad+ \sum_{i,j=0}^d (-1)^{i+j} e_i(\underline{s}) e_j(\underline{t}) \mu^{i+j}\\
    &=\bar{p}
\end{align}
Comparing coefficients of $\lambda^n\mu^m$, we can deduce  $p=\bar{p}$ iff the following conditions hold:
\begin{enumerate}
    \item  $e_i(\underline{s})+e_i(\underline{t}) = (\bar{"}),\ \forall i\in1..d$
    \item $\sum_{i+j=k} e_i(\underline{s}) e_j(\underline{t})= (\bar{"}),\ \forall k\in 1...2d$
\end{enumerate}
\end{proof}

As for any normalised bipartite state, $\ket{\lambda}$, $e_{i+1}(\lambda)<e_i(\lambda)$, we can use Theorem \ref{superthm} to provide necessary and sufficient conditions for only trivial solutions to be possible, as explained in the main text.

\end{appendix}

\end{document}